% Template for the submission to:
%   The Annals of Probability           [aop]
%   The Annals of Applied Probability   [aap]
%   The Annals of Statistics            [aos] 
%   The Annals of Applied Statistics    [aoas]
%   Stochastic Systems                  [ssy]
%
%Author: In this template, the places where you need to add information
%        (or delete line) are indicated by {???}.  Mostly the information
%        required is obvious, but some explanations are given in lines starting
%Author:
%All other lines should be ignored.  After editing, there should be
%no instances of ??? after this line.

% use option [preprint] to remove info line at bottom
% journal options: aop,aap,aos,aoas,ssy
% natbib option: authoryear
\documentclass[preprint]{imsart}

\usepackage{amsthm,amsmath,natbib}
\RequirePackage[colorlinks,citecolor=blue,urlcolor=blue]{hyperref}

\usepackage{amsmath,amsfonts,amssymb}
\usepackage{algorithm2e}
\usepackage{mathtools}
\usepackage{bm}
\usepackage[mathscr]{euscript}
\usepackage[T1]{fontenc}
\usepackage{lmodern}
\usepackage{enumitem}
\usepackage{xcolor}
\usepackage{graphicx}
\usepackage{multirow}
\usepackage{booktabs}
\usepackage{microtype}
\usepackage{fix-cm}
\usepackage{colortbl}
\usepackage[utf8]{inputenc}
\usepackage{tikz}
\usepackage{pbox}

% provide arXiv number if available:
%\arxiv{arXiv:0000.0000}

% put your definitions there:
\startlocaldefs

\newcommand{\Real}{\mathbb R}

\newcommand{\NatInt}{\mathbb N}

\newcommand{\CalO}{\mathcal O}

\newcommand{\CalH}{\mathcal H}
\newcommand{\CalX}{\mathcal X}
\newcommand{\BX}{\bold X}
\newcommand{\Bx}{\bold x}
\newcommand{\By}{\bold y}
\newcommand{\Balpha}{\boldsymbol{\alpha}}
\newcommand{\Bbeta}{\boldsymbol{\beta}}

\newcommand{\cmtS}[1]{{\color{red}{(Simon: #1)}}}

\newtheorem{theorem}{Theorem}

\newtheorem{lem}{Lemma}

\newtheorem{prop}{Proposition}

\newcommand\smallO{
  \mathchoice
    {{\scriptstyle\mathcal{O}}}% \displaystyle
    {{\scriptstyle\mathcal{O}}}% \textstyle
    {{\scriptscriptstyle\mathcal{O}}}% \scriptstyle
    {\scalebox{.7}{$\scriptscriptstyle\mathcal{O}$}}%\scriptscriptstyle
  }
\endlocaldefs

\newcommand{\crd}[1]{{\color{red}{#1}}}

\begin{document}

\begin{frontmatter}

% "Title of the paper"
\title{BdryGP: a new Gaussian process model for incorporating boundary information}

% indicate corresponding author with \corref{}
% \author{\fnms{John} \snm{Smith}\corref{}\ead[label=e1]{smith@foo.com}\thanksref{t1}}
% \thankstext{t1}{Thanks to somebody} 
% \address{line 1\\ line 2\\ printead{e1}}
% \affiliation{Some University}

\author{\fnms{Liang} \snm{Ding}$^*$},
\author{\fnms{Simon} \snm{Mak}$^\dagger$\thanksref{t2}} and
\author{\fnms{C. F. Jeff} \snm{Wu}$^\ddagger$\thanksref{t1} \corref{}}\\
\textit{\textnormal{Hong Kong University of Science and Technology$^*$, Duke University$^\dagger$ and Georgia Institute of Technology$^\ddagger$}}
\thankstext[1]{t2}{Mak's work is supported by U. S. Army Research Office grant W911NF-17-1-0007.} 
\thankstext[2]{t1}{Wu's work is supported by NSF grant DMS 1914632.} 
%\address{\printead{e2}}
%\affiliation{Hong Kong University of Science and Technology and Georgia Institute of Technology }

\runauthor{L. Ding \and S. Mak \and  C. F. J. Wu}

\begin{abstract}
Gaussian processes (GPs) are widely used as surrogate models for emulating computer code, which simulate complex physical phenomena. In many problems, additional boundary information (i.e., the behavior of the phenomena along input boundaries) is known beforehand, either from governing physics or scientific knowledge. While there has been recent work on incorporating boundary information within GPs, such models do not provide theoretical insights on improved convergence rates. To this end, we propose a new GP model, called BdryGP, for incorporating boundary information. We show that BdryGP not only has improved convergence rates over existing GP models (which do not incorporate boundaries), but is also more resistant to the ``curse-of-dimensionality'' in nonparametric regression. Our proofs make use of a novel connection between GP interpolation and finite-element modeling.
\end{abstract}

\begin{keyword}[class=MSC]
\kwd[Primary ]{62G08, 62M40}
\end{keyword}

\begin{keyword}
\kwd{Gaussian Process}
\kwd{Kriging}
\kwd{Computer Experiment}
\kwd{Uncertainty Quantification}
\kwd{Finite-element Modeling}
\kwd{High-dimensional Inputs}
% \kwd{Experiment design}
% \kwd{High dimensional input}
\end{keyword}

\end{frontmatter}
\section{Introduction}

With advances in mathematical modeling and computation, complex phenomena can now be simulated via computer code. This code numerically solves a system of governing equations which represents the underlying science of the problem. Due to the time-intensive nature of these numerical simulations \citep{yeh2018common}, Gaussian processes (GPs; \citealp{sacks1989}) are often used as surrogate models to emulate the expensive computer code. Let $\Bx \in \mathcal{X} = [0,1]^d$ be a vector of $d$ code inputs, and let $f(\Bx)$ be its corresponding code output. The idea is to adopt a GP prior for $f(\cdot)$, then use the posterior process given data to infer code output at an unobserved input. GP emulators are now widely used to study a broad range of scientific and engineering problems, such as rocket engines \citep{Mak18}, universe expansions \citep{kaufman2011efficient} and high energy physics \citep{goh2013prediction}.

In many applications, there is additional knowledge on the phenomenon than simply computer code output, and incorporating such knowledge can improve GP predictive performance. This ``physics-integrated'' GP modeling has garnered much attention in recent years \citep{Wheeler14,Golchi15,WangBerger16}. 
% This notion of ``physics-integrated'' GP modeling has garnered much attention in the past few years. One flavor is the monotone GPs studied in \cite{Golchi15}, \cite{WangBerger16} and \cite{Matt17}, which capture known monotone behavior in the black-box computer code. \cite{Wheeler14} proposed a mechanistic hierarchical GP model which incorporates information from a linear ordinary differential equation. Recently, \cite{MattnLi18} proposed a GP model for parabolic PDEs via the principle of superposition.
We consider here a specific type of information called \textit{Dirichlet boundaries} \citep{bazilevs2007weak}, which specifies the values of $f$ along certain input boundaries. Dirichlet boundaries are often available from governing physics or from simple physical considerations \citep{Matt18}. One example is the simulation of viscous flows \citep{white2006viscous}, widely used in climatology and high energy physics. Such flows are dictated by the complex Navier-Stokes equations \citep{temam2001navier}, and can be very time-consuming to simulate. At the limits of certain variables (e.g., zero viscosity or fluid incompressibility), the Navier-Stokes equations can be greatly simplified for efficient, even closed-form, solutions \citep{kiehn2001some,humphrey2016introduction}. Incorporating this boundary information within the GP can allow for improved predictive performance.

Despite its promise, the integration of GPs with boundary information is largely unexplored in the literature, with the only reference being a recent paper by \cite{Matt18}. In this paper, a flexible \textit{Boundary Modified Gaussian Process} (BMGP) is proposed, which can integrate a broad range of boundaries by modifying the mean and variance structure of a stationary GP. Due in part to its modeling flexibility, the BMGP model is quite complicated and difficult to analyze theoretically. This raises an important open question: to what extent does incorporating boundary information improve convergence rates for GPs?

To address this, we propose a new GP model, called BdryGP, which has provably improved error rates when incorporating boundary information. The key novelty is a new Boundary Constrained Mat{\'e}rn (BdryMat\'ern) covariance function, which incorporates boundary information of the form:
\begin{equation}
\mathcal{F}_j^{[0]} := \{f(\Bx) : x_j = 0\}, \quad \text{or} \quad \mathcal{F}_j^{[1]} := \{f(\Bx) : x_j = 1\}.
\label{eq:bound}
\end{equation}
The BdryMat\'ern covariance inherits the same smoothness properties as the tensor Mat\'ern kernel, while constraining GP sample paths to satisfy \eqref{eq:bound} almost surely.
% In one dimension, this covariance function is derived by expressing the Mat\'ern kernel as the solution of a differential equation via its reproducing property, then solving for the modified kernel which satisfies the same equation under boundary constraints. The resulting BdryMat\'ern covariance inherits the same smoothness properties as the original Mat\'ern kernel, while also imposing a process variance of 0 along known boundaries. This ensures that sample paths from BdryGP satisfy the desired boundary conditions almost surely.
% \begin{figure}
% \centering
% \includegraphics[width=\textwidth]{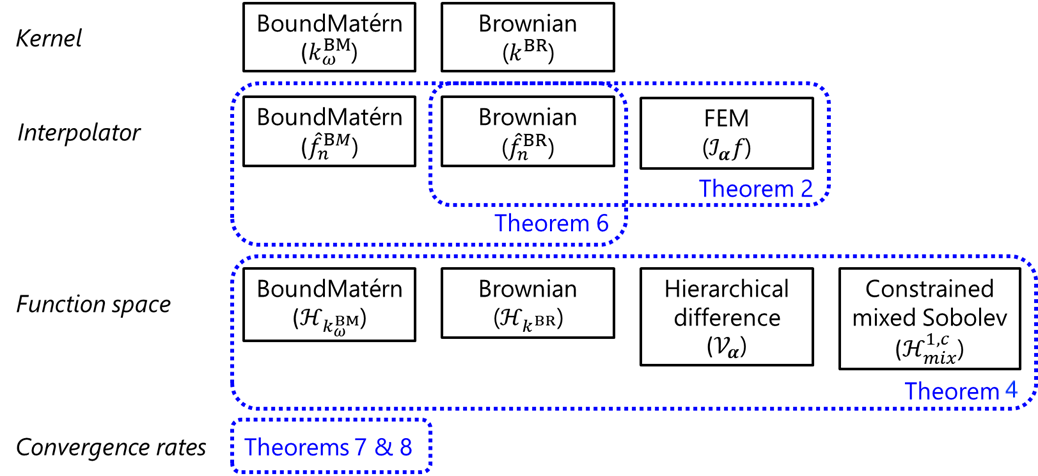}
% \caption{A roadmap of major theorems in the paper.}
% \label{fig:roadmap}
% \end{figure}
Assuming boundaries of the form \eqref{eq:bound} is known for each variable $j = 1, \cdots, d$, we prove two main results for BdryGP. The first is a \textit{deterministic} $L^p$ convergence rate for a \textit{fixed} function $f \in \CalH^{1,c}_{mix}(\mathcal{X})$:
\begin{equation}
\|f - \hat{f}_n^{\rm BM}\|_{L^p} = \smallO(n^{-1}), \quad 1 \leq p < \infty.
\label{eq:lp}
\end{equation}
Here, $n$ is the sample size, $\hat{f}_n^{\rm BM}$ is the BdryGP predictor, and $\CalH^{1,c}_{mix}(\mathcal{X})$ is the Sobolev space with mixed first derivatives satisfying \eqref{eq:bound}. The second is a \textit{probabilistic} uniform bound for a \textit{random} function $Z(\cdot)$ following a GP with sample paths in $\CalH^{1,c}_{mix}(\mathcal{X})$:
\begin{equation}
\sup_{\Bx\in[0,1]^d}|Z(\Bx)-\mathcal{I}_n^{\rm BM}Z(\Bx)|=\CalO_{\mathbb{P}}(n^{-1}[\log n]^{2d-\frac{3}{2}}),
\label{eq:unif}
\end{equation}
where $\mathcal{I}_n^{\rm BM}$ is the BdryGP interpolation operator satisfying $\mathcal{I}_n^{\rm BM}f = \hat{f}_n^{\rm BM}$. Both rates require a sparse grid design \citep{Bungartz04}. Compared to existing GP rates (which do not incorporate boundary information), our BdryGP rates decay much faster in sample size $n$. (A full comparison is given in Section \ref{sec:comp}.) Furthermore, by incorporating boundaries, our rates are also more resistant to the well-known ``curse-of-dimensionality'' in nonparametric regression \citep{geenens2011curse}. Our proof makes use of a novel connection between GP interpolation and finite-element modeling (FEM).

% The roadmap of our paper is shown in figure \ref{fig:roadmap}. We will first derive the BdryMat\'ern kernel $k^{\rm BM}_\omega$ and the Brownian kernel $k^{\rm BR}$ which has an equivalent \text{reproducing kernel Hilbert space} (RKHS) of $k^{\rm BM}_\omega$. In following sections, we use theorem \ref{thm:LagrangeKrigingEquivalent} to show the connection between Brownian kernel and FEM, which is the basis for our proof for convergence rates. In section 4, we will prove the equivalence of four functions spaces: we first prove the equivalence between the RKHSs of $k^{\text{BR}}$ and the first order mixed Sobolev space with boundary condition; we then show that the RKHS of $k^{\text{BR}}$ can be decompose by subspaces of different level which we will call hierarchical difference spaces and is closely related to sparse grid. In the last subsection of section 4, we use theorem \ref{thm:diffBdryGP} to show the difference between the interpolators of BdryMat\'ern kernel and Brownian kernel so that we can only focus on Brownian kernel for our proof of convergence rates. In section 5, we can use the result from theorem \ref{thm:hierarchicalInterpolation2Kriging} and the hierarchical difference representation of $\CalH^{1,c}_{mix}$ to prove our convergence rates.

This paper is organized as follows. In Section 2, we present the new BdryGP model and derive the BdryMat\'ern kernel. In Section 3, we establish a novel connection between the BdryGP predictor and the FEM interpolator. In Section 4, we connect the function space for FEM with the native space for the BdryMat\'ern kernel. Using these results, we then derive in Section 5 the main convergence rates for BdryGP, and verify these rates in Section 6 via numerical simulations. Section 7 concludes the paper.

\section{The BdryGP model}
We first give a brief review of GP modeling, then present a model specification for BdryGP.

\subsection{Gaussian process modeling}
Let $\Bx \in \mathcal{X}$ be an input vector on domain $\mathcal{X} = [0,1]^d$, with $f(\Bx)$ denoting its corresponding computer code output. In Gaussian process emulation \citep{sacks1989,santner2003}, $f(\cdot)$ is assumed to be a realization of a Gaussian process with mean function $\mu: \mathcal{X} \to \Real$, and covariance function $k: \mathcal{X} \times \mathcal{X} \to \Real$. Further details on GP modeling can be found in \cite{Adler81}.

Suppose the code is evaluated at $n$ input points $\BX=\{\Bx_1,\cdots,\Bx_n\}\subset \mathcal{X}$, yielding observations $f(\BX)=[f(\Bx_1),\cdots,f(\Bx_n)]^\intercal$. Given data $f(\BX)$,  one can show that the conditional process $f(\cdot)|f(\BX)$ is still a GP, with mean function:
\begin{equation}\label{eq:Kriging_posteriorMean}
    \hat{f}_n(\Bx)=\mu(\Bx)+k(\Bx,\BX)k^{-1}(\BX,\BX)\left[f(\BX)-\mu(\BX)\right],
\end{equation}
and covariance function:
\begin{equation}\label{eq:Kriging_posteriorCov}
    k_n(\Bx,\By)=k(\Bx,\By)-k(\Bx,\BX)k^{-1}(\BX,\BX)k(\BX,\By).
\end{equation}
Here, $k(\BX,\Bx) = [k(\Bx,\Bx_1),\cdots,k(\Bx,\Bx_n)]^\intercal$ denotes the covariance vector between design $\BX$ and a new point $\Bx$, $k(\Bx,\BX) = k(\BX,\Bx)^\intercal$, and $k(\BX,\BX)$ is the covariance matrix ${{[k(\Bx_i},\Bx_j)]_{i=1}^n}_{j=1}^n$ over design points. The posterior mean $\hat{f}_n(\cdot)$ is typically used as a predictor (or emulator) for unknown code output $f(\cdot)$, since it is optimal under quadratic and absolute error loss \citep{santner2003}. The posterior variance $k_n(\Bx,\Bx)$ then quantifies the uncertainty of the predictor $\hat{f}_n(\Bx)$ at a new input setting $\Bx$.
% We will denote this predictor as $\hat{f}_n(\cdot) = \hat{\mu}_n(\cdot)$. 

The kernel $k$ is also associated with an important function space $\CalH_k(\mathcal{X})$, called the \textit{reproducing kernel Hilbert space} (RKHS) or \textit{native space} of $k$ \cite{Wendland10}. For a symmetric, positive definite kernel $k$, the RKHS $\CalH_k(\mathcal{X})$ of $k$ is defined as the closure of the linear function space:
\begin{equation}
\left\{\sum_{i=1}^n c_ik(\cdot,\Bx_{i}):c_i\in\Real,\Bx_{i}\in \CalX, n \in \mathbb{N} \right\}.
\label{eq:rkhs}
\end{equation}
% In words, the RKHS $\CalH_k(\mathcal{X})$ consists of functions which can be well-approximated with linear combinations of kernel $k$.
This RKHS is also endowed with an inner product $\langle \cdot,\cdot\rangle _k$ which satisfies the so-called \textit{reproducing property}:
\begin{equation}
f(\Bx)=\langle f,k(\cdot,\Bx)\rangle_k, \quad \Bx \in \mathcal{X},
\end{equation}
for any function $f\in\CalH_k(\CalX)$. Both the RKHS $\CalH_k(\mathcal{X})$ and its reproducing property will play a key role in the derivation and analysis of the BdryGP.
% Further details on the RKHS can be found in \cite{Wahba90} and \cite{Wendland10}.

\subsection{Boundary information}

In many problems, boundary information on $f$ is available from governing physical principles or scientific knowledge. We consider here a common type of boundary called Dirichlet boundaries \citep{bazilevs2007weak}, which specify the values of $f$ along certain boundaries of the input domain $\mathcal{X} = [0,1]^d$. In particular, we consider boundaries of the form $\mathcal{F}_j^{[0]}$ or $\mathcal{F}_j^{[1]}$ in \eqref{eq:bound}, which quantify the values of $f$ along the left hyperplane $\mathcal{S}_j^{[0]} := \{\Bx:x_j = 0\}$ or right hyperplane $\mathcal{S}_j^{[1]} := \{\Bx:x_j = 1\}$ of a variable $x_j$, respectively. We will call $\mathcal{F}_j^{[0]}$ and $\mathcal{F}_j^{[1]}$ the \textit{left} and \textit{right} boundary condition of variable $x_j$. Boundaries of this form arise naturally in many limiting simplifications of physical systems, and provide closed-form expressions for the BdryMat\'ern kernel.

% More specifically, let $\Bx_{-i}$ denote the vector $(x_1,\cdots,x_{i-1},x_{i+1},\cdots,x_p)$ and let $\partial D$ denote the boundary of of the hypercube $D$:
% $$\partial D:=\bigcup_{i=1}^P\{\Bx_{-i}\in [0,1]^{p-1}, x_i=0,1\}.$$
% Full boundary information for $f$ means that the value $f(\Bx)$ is given for any $\Bx\in \partial D$. On the other hand, partial boundary information for $f$ means that the value $f(\Bx)$ is given for any $\Bx\in U$ where $U= \partial D-\{x_i=0\}_{i\in I}\bigcup \{x_j=1\}_{j\in J}$ with some index set $I$ and $J$. More precisely, in the case of partial boundary information, the value of $f$ is known only on some specific boundary hyperplanes $\{x_i\}_{i\in I}$ where $I\subset \{1,\cdots,D\}$.

% For convenient, we will call the boundary information $f(\Bx_{-i};x_i=0)$ the left boundary condition on dimension $i$ and, similarly, $f(\Bx_{-i};x_i=1)$ the right boundary condition on dimension $i$. 

\begin{figure}
\begin{center}
\includegraphics[width=0.325\textwidth]{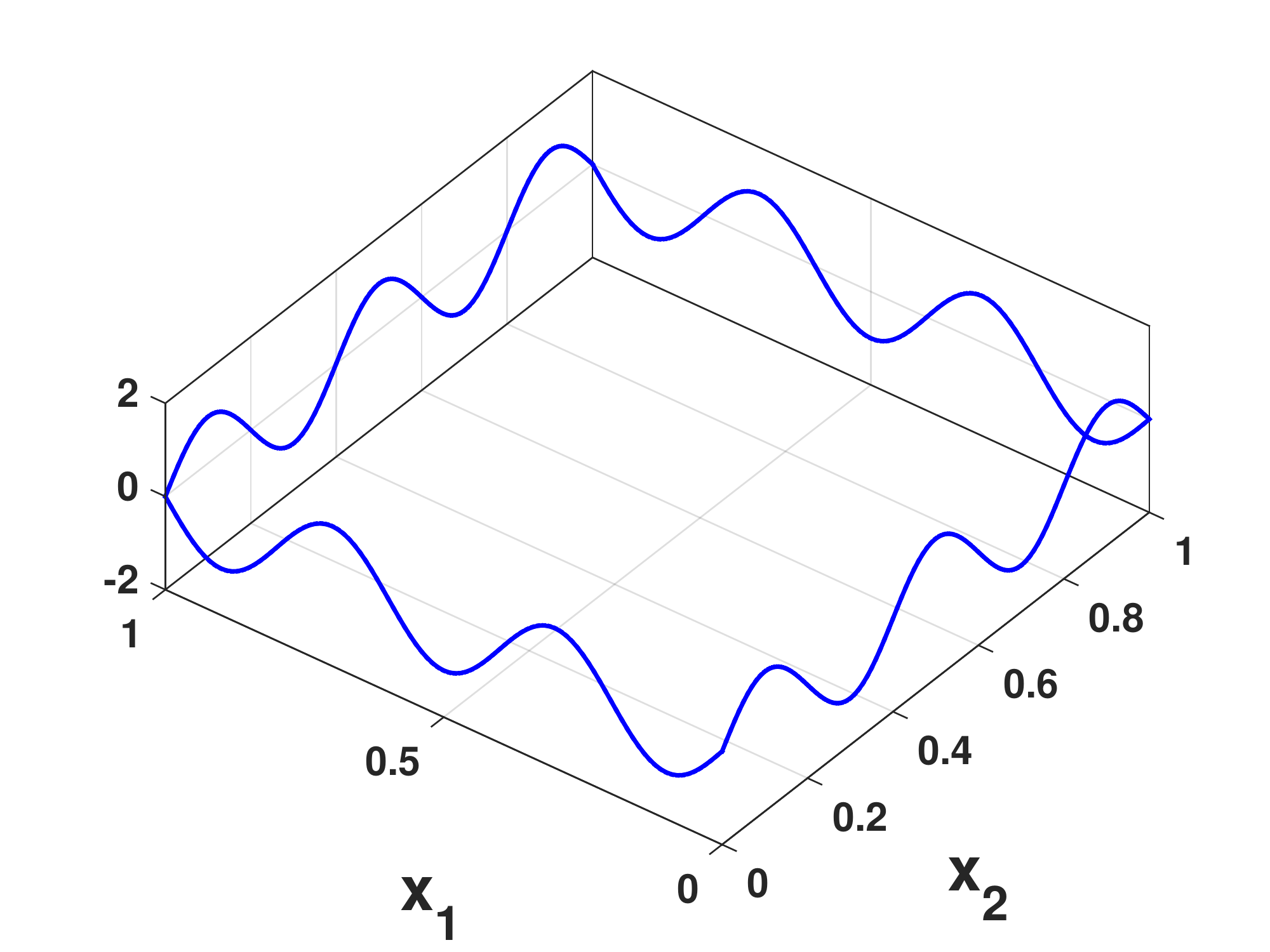}
\includegraphics[width=0.325\textwidth]{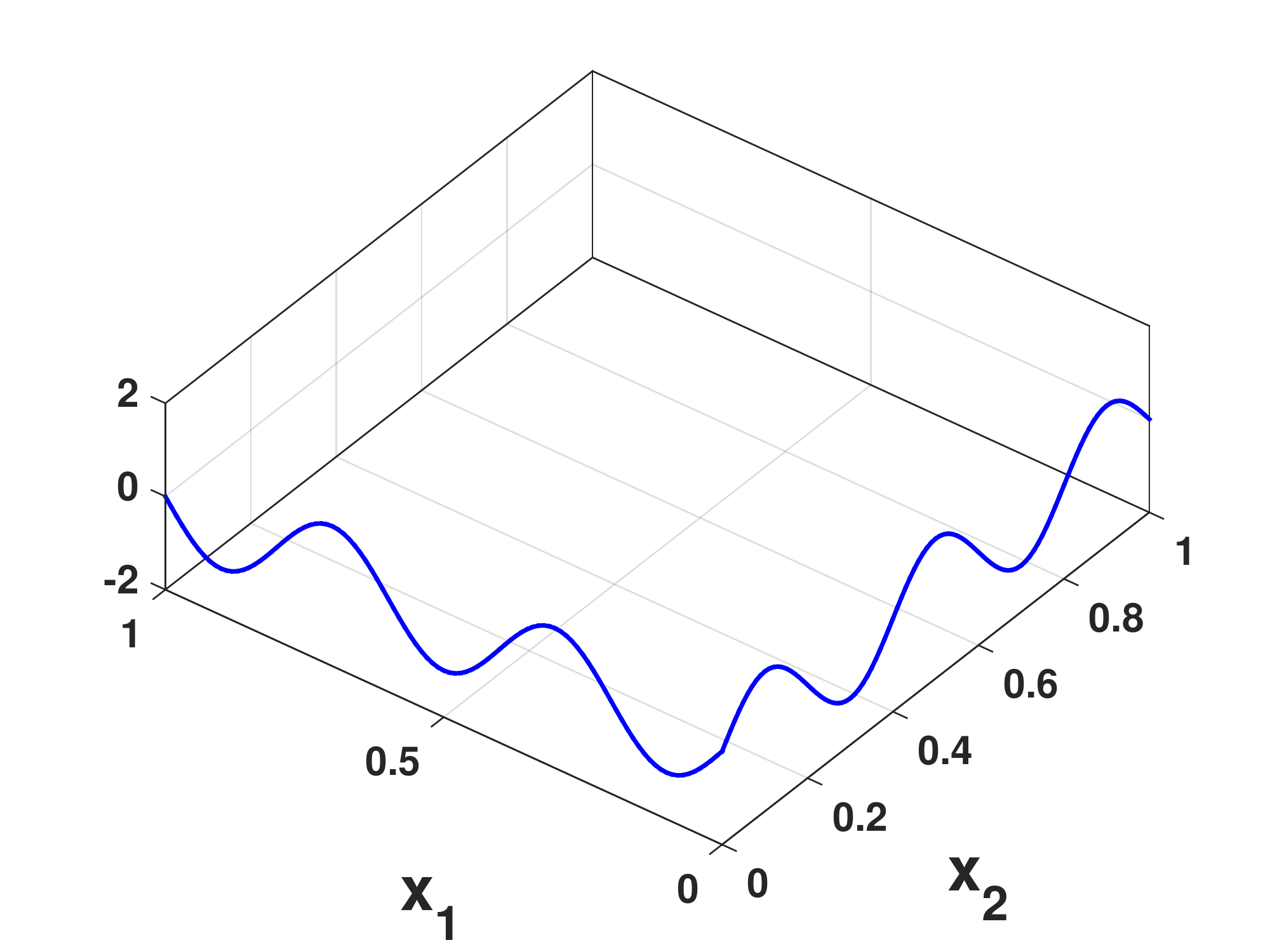}
\includegraphics[width=0.325\textwidth]{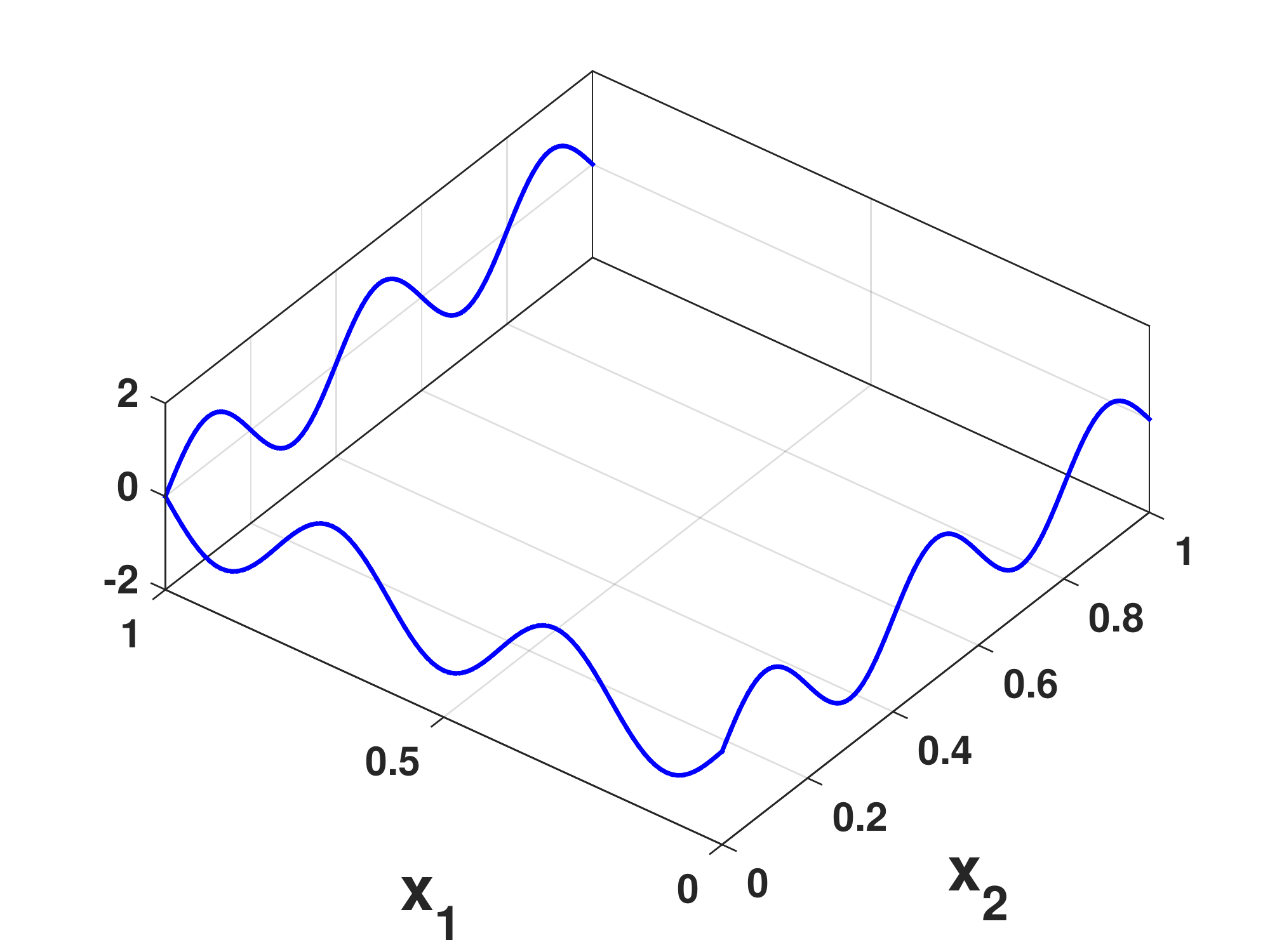}
\end{center}
\caption{Visualizing $I^{[0]} = I^{[1]} = \{1,2\}$ (full boundaries; left), $I^{[0]} = \{1,2\}, I^{[1]} = \emptyset$ (left boundaries; middle), and $I^{[0]} = \{1,2\}$, $I^{[1]} = \{1\}$ (right) for a 2-d function.}
\label{fig:boundary}
    \begin{center}
\includegraphics[width=0.325\textwidth]{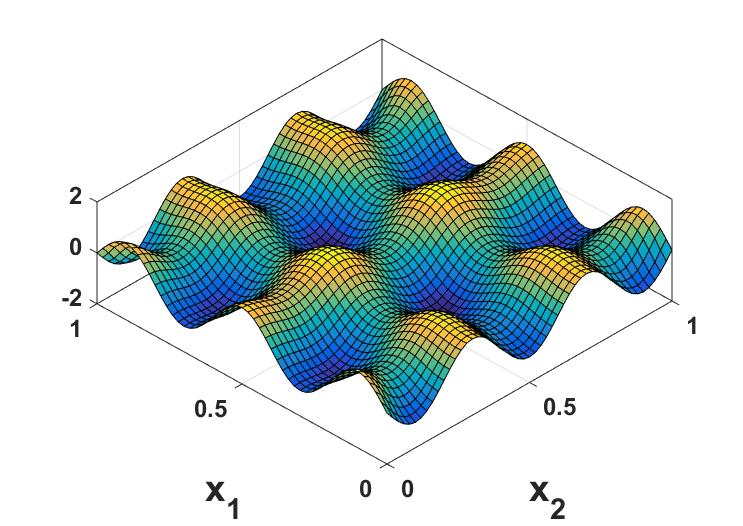}
\includegraphics[width=0.325\textwidth]{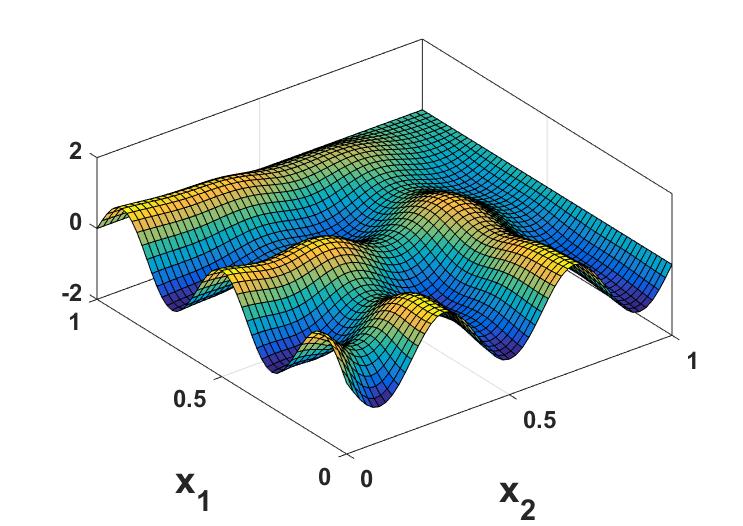}
\includegraphics[width=0.325\textwidth]{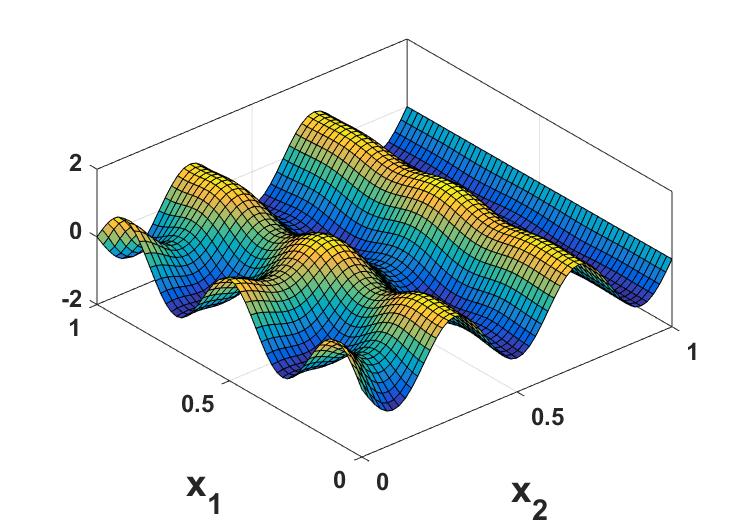}
\end{center}
\caption{Proposed mean function $\mu(\cdot)$ in \eqref{eq:ourmu} for the three boundary cases in Figure \ref{fig:boundary}.}
\label{fig:mean_func_boundary}
\end{figure}

To distinguish which boundaries are known beforehand, let $I^{[0]} \subseteq [d] = \{1, \cdots, d\}$ denote the variables with known \textit{left} boundary condition, and let $I^{[1]} \subseteq [d]$ denote the variables with known \textit{right} boundary condition. With $I^{[0]} = \emptyset$ and $I^{[1]} = \emptyset$, this reduces to the standard setting with no boundary information; with $I^{[0]} = [d]$ and $I^{[1]} = [d]$, this implies knowledge of $f$ along the full boundary of $\mathcal{X}$. Figure \ref{fig:boundary} illustrates this in two dimensions, with known boundaries of $f$ in blue. The left plot shows the case of $I^{[0]} = I^{[1]} = \{1, 2\}$, with the value of $f$ known on all boundaries of $[0,1]^2$. The middle plot shows $I^{[0]} = \{1,2\}, I^{[1]} = \emptyset$, with $f$ known only on the left boundaries of each variable. The right plot shows $I^{[0]} = \{1,2\}$, $I^{[1]} = \{1\}$, with $f$ known on both boundaries of $x_1$ and on the left boundary of $x_2$.

% In our study, we will consider the partial boundary information that at least one of the right and left boundary condition is given for all dimensions. We can further assume without loss of generality that at least the left boundary information on any dimension is given for our analysis which is equivalent to our previous assumption. Now the value of $f$ is known for all left boundary condition plus some right boundary condition:
% $$U_I:=\big[\bigcup_{i=1}^p\{\Bx_{-i}\in [0,1]^{p-1}, x_i=0\}\big]\ \bigcup\ \big[\bigcup_{i\in I}\{\Bx_{-i}\in [0,1]^{p-1}, x_i=1\}\big]$$
% where $I\subseteq\{1,\cdots,p\}$. That is,  $f(\Bx)$ is known  when $x_i=0$ for any $i=1,\cdots,p$ or $x_i=1$ for any $i\in I$.  When $I=\{1,\cdots,p\}$, then $U_I=\partial D$.

To integrate such boundary information, we need to specify two ingredients for BdryGP. First, a mean function $\mu(\cdot)$ is needed which satisfies known boundary conditions on $f$. Second, a covariance function $k(\cdot,\cdot)$ is needed which satisfies $k(\Bx,\Bx)=0$ for any $\Bx \in \mathcal{S}_j^{[0]}$, $j \in I^{[0]}$ and any $\Bx \in \mathcal{S}_j^{[1]}$, $j \in I^{[1]}$. This ensures the BdryGP model satisfies the desired boundary information on $f$ almost surely.

\subsection{Mean function specification}
\label{sec:mean_func}
% Suppose the value of $f$ is given on the partial boundary set $U_I$, we now show how to build a mean function $\mu$ that exactly matches $f$ on $U_I$ and has the same order of smoothness as $f$.
Consider first the specification of the BdryGP mean function $\mu(\cdot)$. We adopt a simple strategy for constructing $\mu(\cdot)$ via an interpolator on known boundary conditions. For a point $\Bx\in \mathcal{X}$, let $\mathcal{P}_j^{[0]}\Bx$ and $\mathcal{P}_j^{[1]}\Bx$ denote the projection of $\Bx$ onto the subspaces $\mathcal{S}_j^{[0]}$ and $\mathcal{S}_j^{[1]}$, respectively. These projected points can be written explicitly as:
\begin{align}
\begin{split}
  &[\mathcal{P}_j^{[0]}\Bx]_k=\begin{cases}
x_k, &\text{if} \ j\neq k\\
0,&\text{if} \ j = k
\end{cases}\ \ \ \ \text{for}\ j \in I^{[0]},\\
&[\mathcal{P}_j^{[1]}\Bx]_k=\begin{cases}
x_k, &\text{if} \ j \neq k\\
1,&\text{if} \ j = k
\end{cases}\ \ \ \ \text{for}\ j \in I^{[1]}.
\end{split}
\end{align}
Furthermore, let $\mathbf{P}(\Bx)=\{ \mathcal{P}_j^{[0]}\Bx : j \in I^{[0]} \} \cup \{ \mathcal{P}_j^{[1]}\Bx : j \in I^{[1]} \}$ be the set of all such projected points of $\Bx$ on known boundaries.

With this, the mean function $\mu(\cdot)$ can then be constructed as:
\begin{equation}
\mu(\Bx)=\phi(\Bx,\mathbf{P}(\Bx)) [\phi(\mathbf{P}(\Bx),\mathbf{P}(\Bx))]^{-1}f(\mathbf{P}(\Bx)).
\label{eq:ourmu}
\end{equation}
where $\phi(\cdot,\cdot)$ is a compactly supported, positive definite, radial basis kernel \citep{Wendland10}. Here, $\mu(\Bx)$ can be interpreted as a GP interpolant at $\Bx$, using boundary information at projected points $\mathbf{P}(\Bx)$ as data. By the interpolation property of GPs, the proposed mean function must therefore satisfy the desired boundary conditions:
\begin{equation}
\left\{\mu(\Bx): \Bx \in \mathcal{S}_j^{[0]}\right\} = \mathcal{F}_j^{[0]}, \; \forall j \in I^{[0]}, \quad \left\{\mu(\Bx): \Bx \in \mathcal{S}_j^{[1]}\right\} = \mathcal{F}_j^{[1]}, \; \forall j \in I^{[1]}.
\label{eq:meanbound}
\end{equation}
We find the radial basis kernel $\phi(\Bx_1,\Bx_2)=\max\{(1-||\Bx_1-\Bx_2||)^\nu,0\}$ \citep{Wendland10} to work well in practice.

% The mean function $\mu(\Bx)$ is in fact a GP-based interpolator of $f$ at $\Bx$ given its projected points $\BX$. Obviously $\mu(\Bx)=f(\Bx)$ for any $\Bx\in U_I$ because of the exactness of GP regression. On the other hand, the smoothness of $\mu(\Bx)$ is determined by $f(\BX(\Bx))$, which has the same order of smoothness as $f$.

Figure \ref{fig:mean_func_boundary} illustrates the proposed mean function $\mu(\cdot)$ in \eqref{eq:ourmu} using the earlier two-dimensional example (with known boundaries marked in blue). From the left plot, which shows the mean function $\mu(\cdot)$ for the full boundary case of $I^{[0]} = I^{[1]} = \{1,2\}$, we see that $\mu(\cdot)$ satisfies the desired boundary conditions from Figure \ref{fig:boundary}. The same is true for the middle and right plots, which shows the proposed $\mu(\cdot)$ given partial boundary information.

\subsection{Covariance function specification}
Consider next the specification of the covariance function $k(\cdot,\cdot)$ for BdryGP. We present below a new BdryMat\'ern covariance function which incorporates boundary information of the form \eqref{eq:bound}. We first discuss the properties of the BdryMat\'ern kernel for modeling, then provide an explicit derivation of this kernel.

% In this subsection, we first present  two one-dimensional non-stationary covariance functions related to Mat\'ern-$\frac{1}{2}$ kernel that give $0$ variance at the end points of the unit interval and discuss some of their properties that are similar to the stationary Matérn-$\frac{1}{2}$ covariance function. 

\subsubsection{The BdryMat\'ern kernel}
For variable $x_j$, the one-dimensional (1-d) BdryMat\'ern kernel is defined as:
 \begin{equation}
%  \label{eq:BdryMatern1d}
\small
     k_{\omega_j}^{\rm BM}(x,y)=\begin{cases}
    \dfrac{\sinh{[\omega_j (x\wedge y)]}\sinh{[\omega_j(1- x\vee y)]}}
    {\sinh(\omega_j)},& \ j \in I^{[0]} \cap I^{[1]} \quad \text{(full)}\\
    \sinh{[\omega_j (x\wedge y)]}\exp{[\omega_j(- x\vee y)]},& \ j \in I^{[0]} \cap \overline{I^{[1]}} \quad \text{(left)}\\
    \exp[\omega_j ( x\wedge y)]\sinh[\omega_j(1- x\vee y)],& \ j \in \overline{I^{[0]}} \cap I^{[1]} \quad \text{(right)}\\
    \exp(-\omega_j{|x-y|}) ,&\ j \in \overline{I^{[0]}} \cap \overline{I^{[1]}} \quad \text{(none)}.\\
    \end{cases}
    \label{eq:BdryMatern1d}
 \end{equation}
 \normalsize
Here, $x \wedge y = \min(x,y)$ and $x\vee y = \max(x,y)$, and $\sinh(\cdot)$ and $\cosh(\cdot)$ are the hyperbolic sine and cosine functions. The first case corresponds to known boundaries for both the left and right endpoints of $x_j$ (i.e., \textit{full} boundary information). The second and third cases correspond to known boundaries for only the left and only the right endpoints of $x_j$, respectively (i.e., \textit{partial} boundary information). The last case is for \textit{no} boundary information on $x_j$; this reduces to the Mat\'ern-1/2 correlation function.

Using \eqref{eq:BdryMatern1d}, we adopt the following product form for the BdryMat\'ern covariance over all $d$ variables:
\begin{equation}
k^{\rm BM}_\omega(\Bx,\By)= \sigma^2 \prod_{j=1}^d k^{\rm BM}_{\omega_j}(x_j,y_j),
\label{eq:BdryMatern}
\end{equation}
where $\sigma^2$ is a variance parameter. This product form of the BdryMat\'ern kernel $k_\omega^{\rm BM}(\Bx,\By)$ yields a very useful native space, which can be connected to FEM for proving improved GP convergence rates.
% $e^{-\omega{|x-y|}}$ is called   Matérn-$\frac{1}{2}$ covariance function \cmtS{why is there the 1/2 in equation (9)?} or exponential kernel and we will call the parameter $\omega$ of $k_\omega$ the generalized wavelength parameter.
% We can see the symmetry between left and right boundary condition this is the reason that in the following content, we only need to focus on the left boundary case for our analysis.

To see why the BdryMat\'ern kernel \eqref{eq:BdryMatern} can incorporate boundary information, consider a simple 1-d setting with $\sigma^2 = 1$ and $\omega=1$. Figure \ref{fig:variance_plot} visualizes the process variance $k_\omega^{\rm BM}(x,x)$ as a function of $\Bx$ over $[0,1]$, for the first three cases in \eqref{eq:BdryMatern1d}. The left plot shows $k_\omega^{\rm BM}(x,x)$ for the full boundary case, where \textit{both} left and right boundaries are known. Here, the process variance equals zero at the endpoints $x=0$ and $x=1$, meaning the BdryGP constrains sample paths to satisfy the left and right boundaries almost surely. The middle plot show $k^{\rm BM}_\omega(x,x)$ when \textit{only} the left boundary is known. Here, the process variance equals zero only when $x=0$, meaning all BdryGP sample paths satisfy the left boundary almost surely. A similar interpretation holds for the right plot, where \textit{only} the right boundary known.

The wavelength parameter $w_j$ in the BdryMat\'ern kernel \eqref{eq:BdryMatern} plays a similar role as the scale parameter in the Mat\'ern kernel: it controls the smoothness of sample paths from the BdryGP. To visualize this, Figure \ref{fig:wavelength_plot} plots the process covariance $k_\omega^{\rm BM}(0.5,x)$ between a point $x \in [0,1]$ and a fixed point at 0.5, for difference choices of wavelength $\omega$. The left plot shows, as $\omega \rightarrow \infty$, this covariance  converges to zero everywhere \textit{except} at $x=0.5$, which suggests that for larger wavelengths $\omega$, the sample paths from BdryGP become more rugged. The right plot shows, as $\omega \rightarrow 0^+$, this covariance converges to zero everywhere, \textit{including} at $x=0.5$. This suggests that the process variance $k_\omega^{\rm BM}(x,x)$ becomes smaller as $\omega \rightarrow 0^+$, which results in smoother sample paths.

The 1-d BdryMat\'ern kernel \eqref{eq:BdryMatern1d} also has an inherent connection to the covariance functions for the Brownian bridge and the Brownian motion. Suppose either the left or right boundary is known for variable $x_j$. Taking wavelength $\omega_j \rightarrow 0^+$ for the normalized BdryMat\'ern kernel, we get:
\begin{equation}
\label{eq:brownian_kernel1d}
    k_j^{\rm BR}(x,y)=\lim_{\omega_j \to 0^+}\frac{k_{\omega_j}(x,y)}{\omega_j}=\begin{cases}
    (x\wedge y)(1-x\vee y),& \; j \in I^{[0]} \cap I^{[1]} \quad \text{(full)}\\
    x\wedge y,& \; j \in I^{[0]} \cap \overline{I^{[1]}} \quad \text{(left)}\\
    1-x\vee y,& \; j \in \overline{I^{[0]}} \cap I^{[1]} \quad \text{(right)}.
     \end{cases}
\end{equation}
The first case is the covariance function of a Brownian bridge, and the second and third cases are variants of the covariance function for a Brownian motion. We will call $k_j^{\rm BR}(x,y)$ the 1-d \textit{Brownian kernel}, and its product form:
\begin{equation}
k^{\rm BR}(\Bx,\By) = \prod_{j=1}^d k_j^{\rm BR}(x_j,y_j)
\label{eq:brownian_kernel}
\end{equation}
the \textit{Brownian kernel}. The link between the BdryMat\'ern kernel (used in BdryGP) and the Brownian kernel will serve as the basis for proving improved convergence rates via finite-element modeling. We note that the Brownian kernel is \textit{not} used for modeling purposes, but rather as a theoretical tool for bridging the BdryGP model with FEM.

% Obviously, $k_\omega(x,y)$ is symmetry in $x$ and $y$ and from direct calculation, we can also see that  $k_\omega(x,y)=0$ if one of $x$ and $y$   is the end point of the unit interval $[0,1]$ at which boundary constraint must be satisfied. Furthermore, the value of $k_\omega(x,y)$ is decreasing with increasing distance between $x$ and $y$ in a non-stationary way. This is because of two reasons. Firstly, as $|x-y|$ become larger, then either  $x$ or $y$, or both, is closer to the end point of $[0,1]$ so either $x\wedge y$ or $1-x\vee y$, or both, becomes smaller; secondly, both $\sinh$ and $\cosh$ are monotonic functions on $[0 1]$, which will preserve  the changing direction of $k_\omega(x,y)$ with $|x-y|$.

\begin{figure}
\begin{center}
\includegraphics[width=0.3\textwidth]{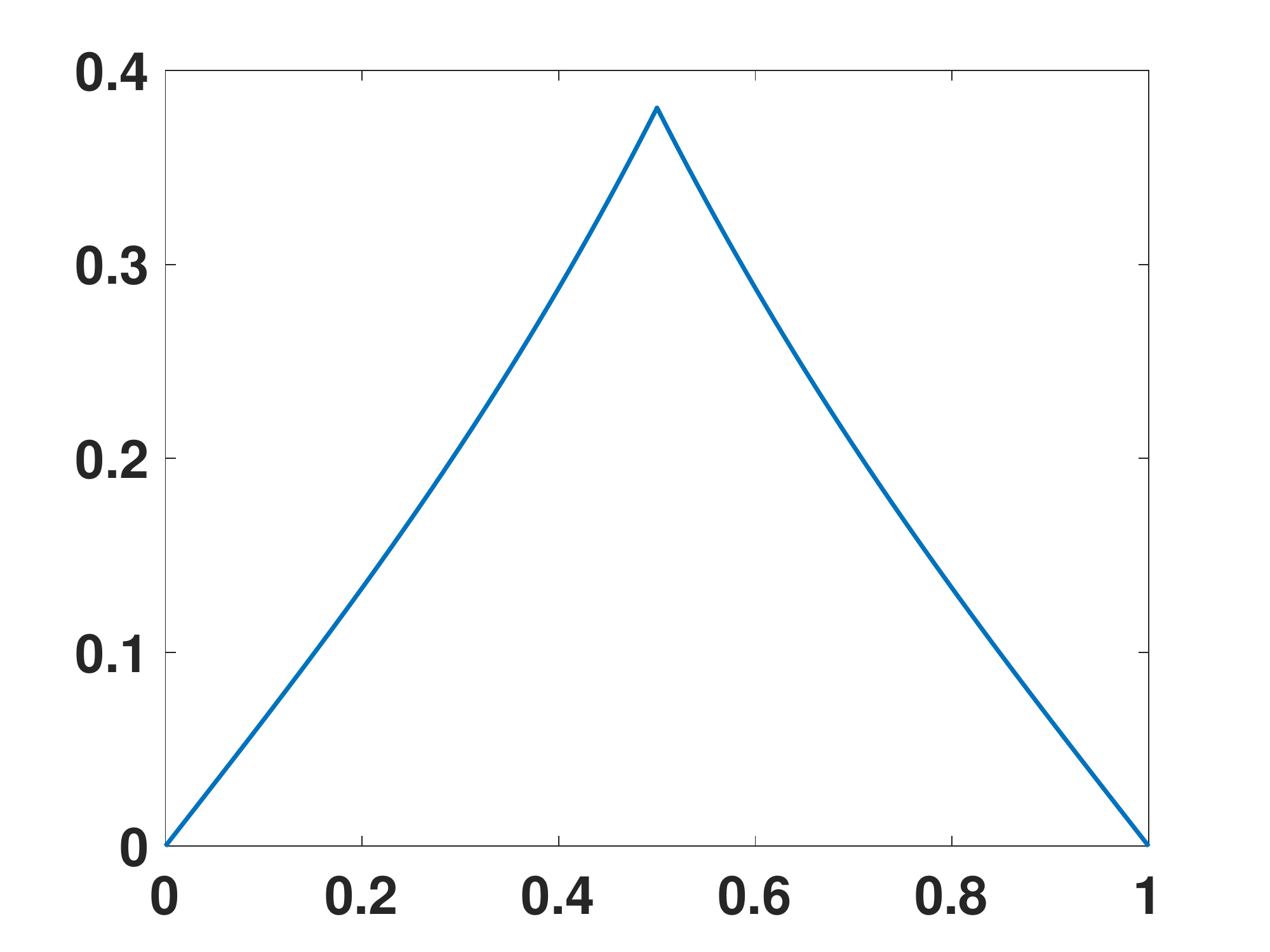}
\includegraphics[width=0.3\textwidth]{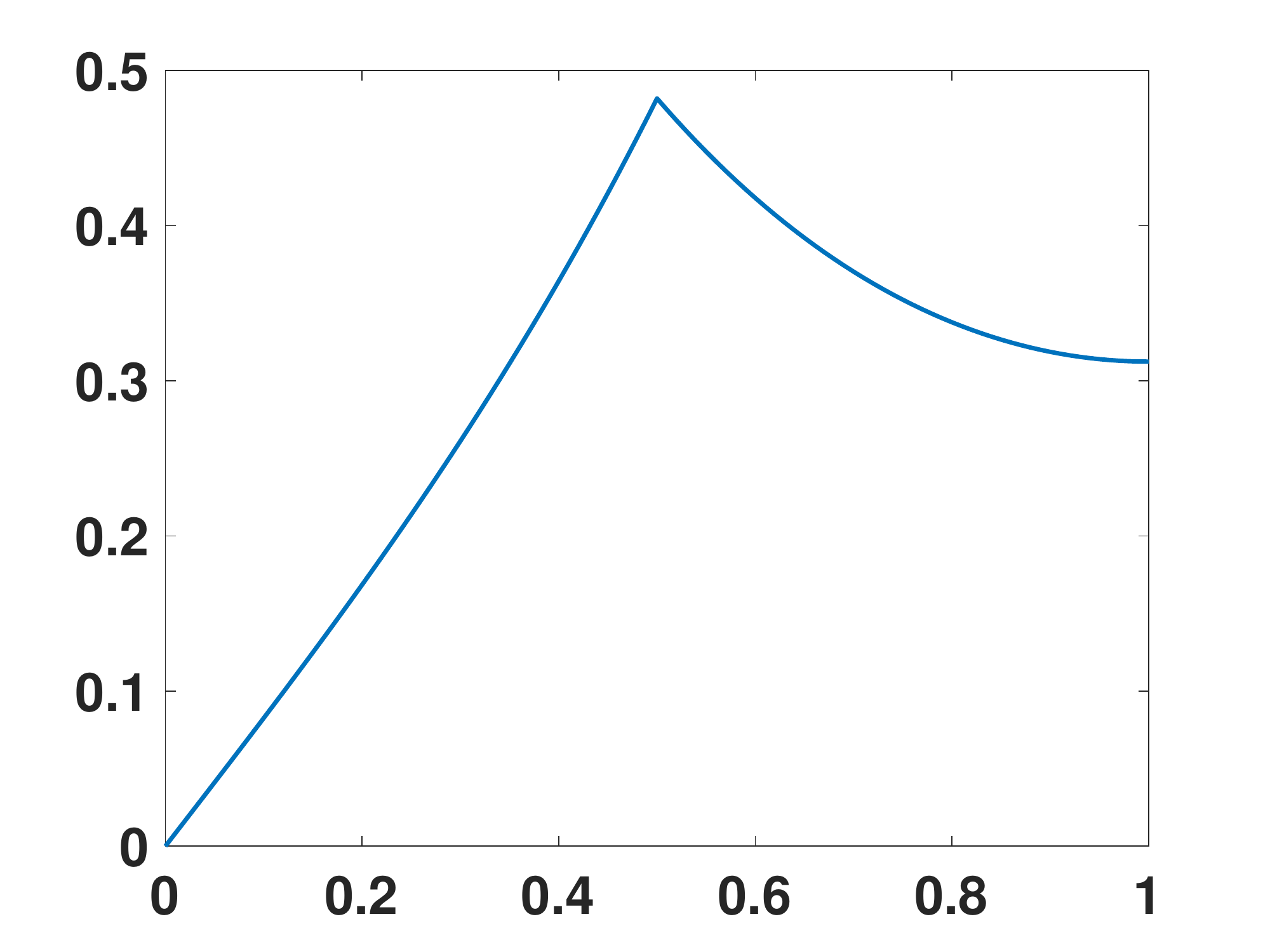}
\includegraphics[width=0.3\textwidth]{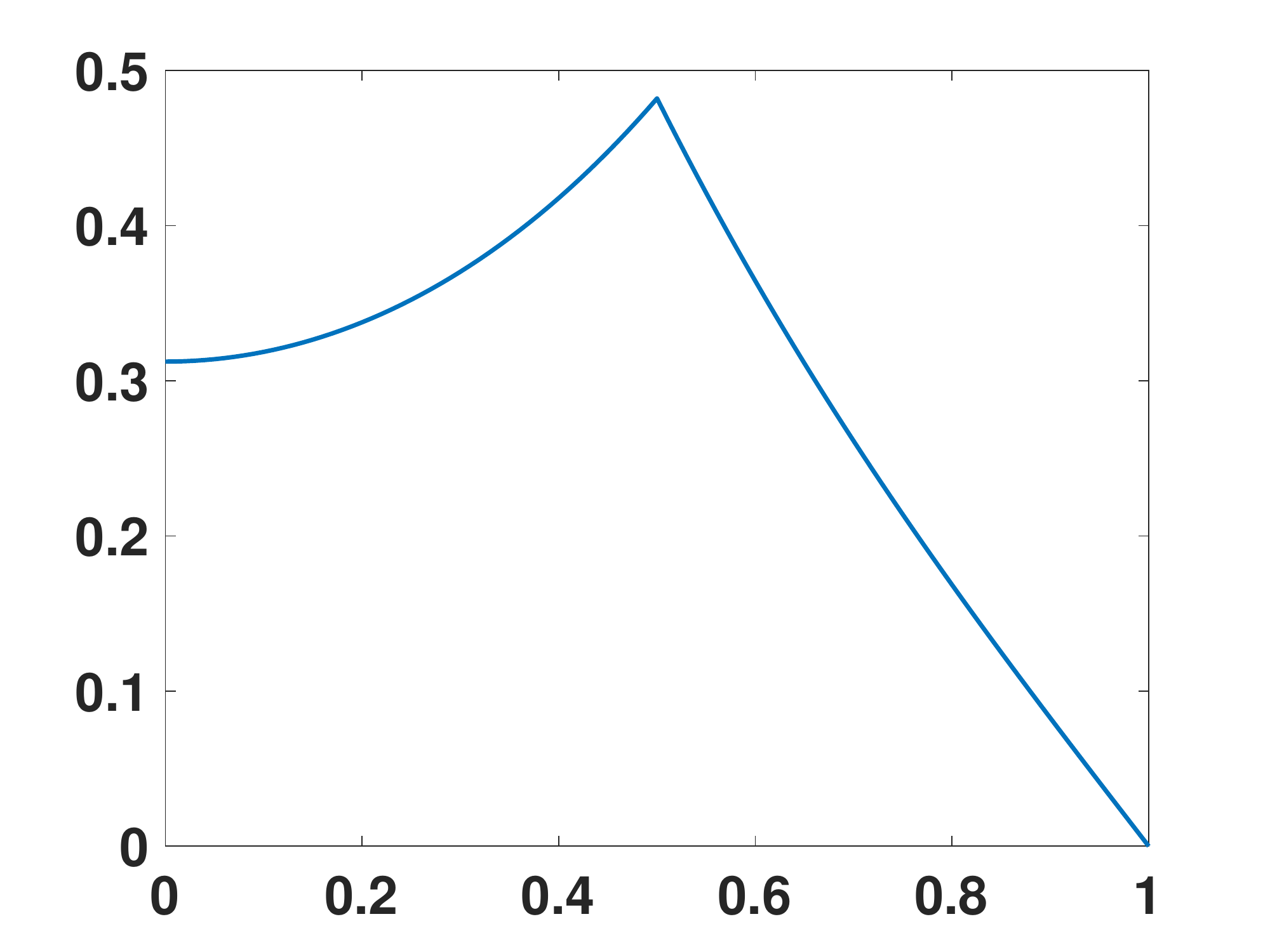}
\end{center}
\caption{Visualizing the BdryGP variance $k^{\rm BM}_\omega(x,x)$ over $x \in [0,1]$ for full boundary (left), left boundary (middle), and right boundary (right) information.}
\label{fig:variance_plot}
\end{figure}

\begin{figure}
\begin{center}
\includegraphics[width=0.4\textwidth]{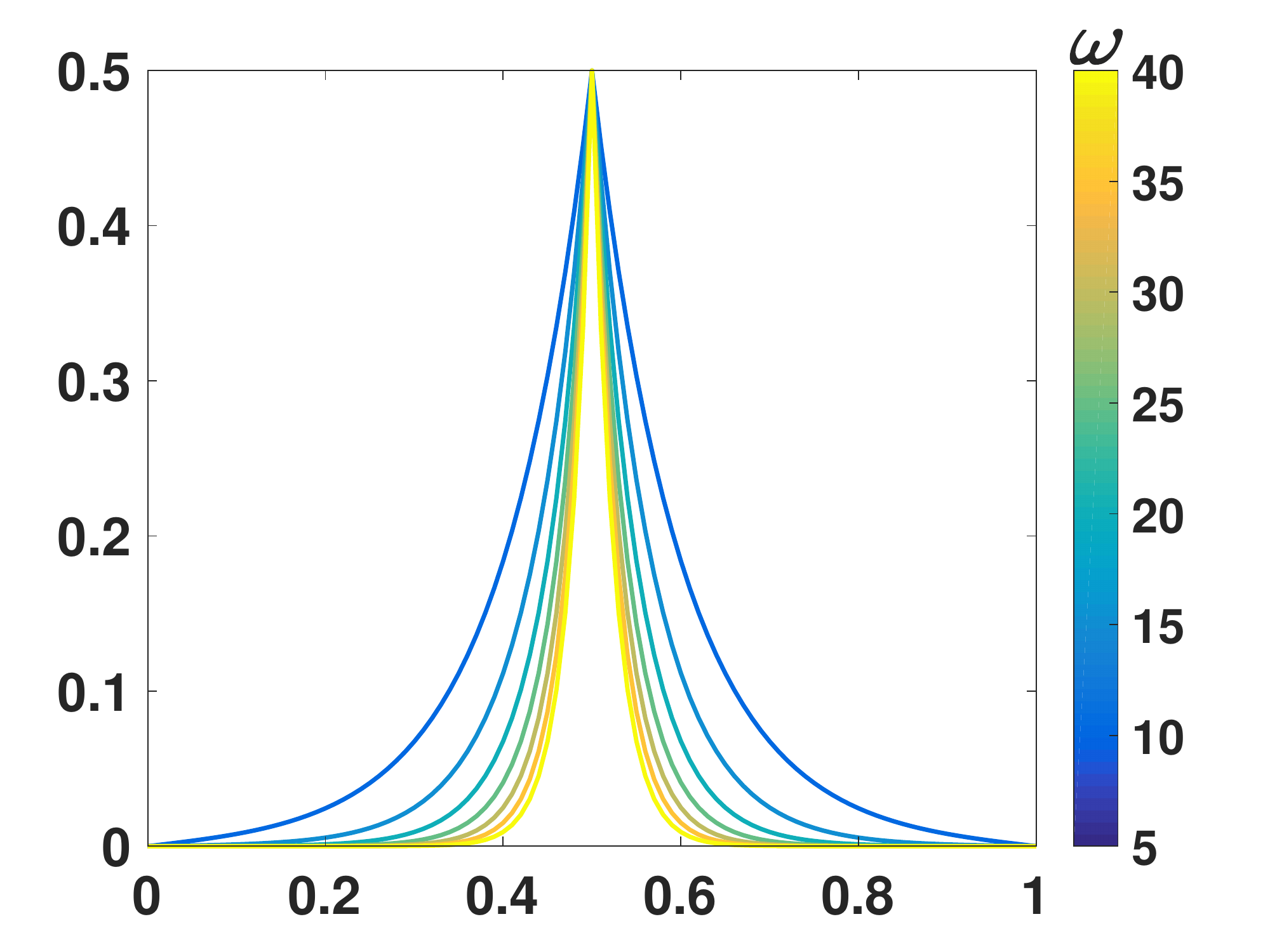}
\includegraphics[width=0.4\textwidth]{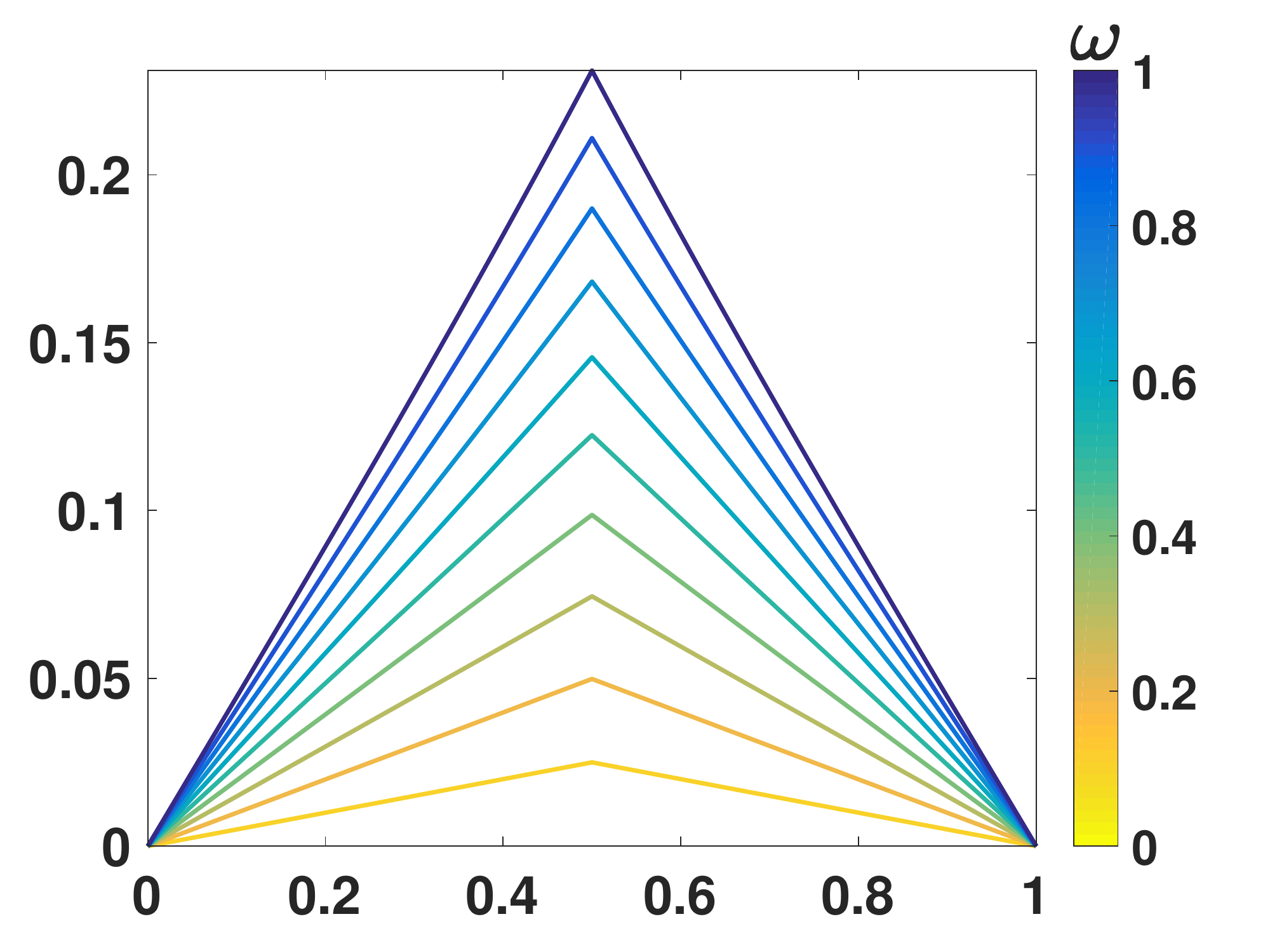}
\end{center}
\caption{Visualizing the BdryGP covariance $k_\omega^{\rm BM}(0.5,x)$ for full boundary information, with wavelength $\omega \rightarrow \infty$ (left) and $\omega \rightarrow 0^+$ (right).}
\label{fig:wavelength_plot}
\end{figure}

% Another important property of $k_\omega$ worth discussing is how $k_\omega(x,y)$ changes with the generalized wavelength $\omega$. Figure \ref{fig:wavelength_plot} shows that as $\omega$ converges to infinity and $0$, the full boundary condition $k_\omega(\frac{1}{2},x)$ converges to $\frac{1}{2}\bold{1}_{\{x=y\}}$ and $0$  respectively. In the case that $\omega\to \infty$, $k_\omega$ and exponential kernel coincides that the exponential kernel also converges to $\frac{1}{2}\bold{1}_{\{x=y\}}$. However, as $\omega \to 0$, the $k_\omega$ with full boundary condition converges to 0 while the exponential kernel converges to a constant. This is because $k_\omega$ has one less degree of freedom due to the boundary condition. A zero-valued wavelength parameter leads to a perfectly correlated random field. Because our GP is fixed at the boundary, a perfectly correlated random field becomes a deterministic constant function over $[0,1]$. From this perspective,  the generalized wavelength parameter still represents the parametrization of function smoothness  in the sense that the larger the wavelength, the smaller the correlation between two points.

% To generalize our kernel to higher dimension, we only need to take tensor product of 1D kernels:
% $$k_\omega(\Bx,\By)=\prod_{i=1}^pk_{\omega_i}(x_i,y_i)$$
% according to given boundary condition on each dimension. The tensor product of $k_\omega$ gives a very useful native space with boundary condition, which is commonly used in numerically solving PDE as we will show later.

\subsubsection{Derivation under boundary conditions}
% The strategy is to formulate the Mat\'ern kernel as the solution of a differential equation via its reproducing property, then solve for the kernel which satisfies the same equation under boundary constraints.
We now provide a derivation of the 1-d BdryMat\'ern kernel \eqref{eq:BdryMatern1d}. Consider the 1-d Mat\'ern kernel:
\begin{equation}
k_{\nu,\omega}(x,y) = \frac{2^{1-\nu}}{\Gamma(\nu)}(\sqrt{2\nu}\omega|x-y|)^\nu K_\nu(\sqrt{2\nu}\omega|x-y|),
\label{eq:matern1d}
\end{equation}
where $\nu$ is the smoothness parameter, $\omega$ is the scale parameter, and $K_\nu$ is the modified Bessel function \citep{abramowitz1965handbook}. Let $S_{\nu,\omega}(s)$ denote the spectral density of $k_{\nu,\omega}$ \citep{cressie1991}.
% The spectral density of this kernel can be written \citep{cressie1991} as:
% \begin{equation}
% S_{\nu,\omega}(s) = \bigg[C_{\nu,\omega}+C_{\nu,\omega}'s^2\bigg]^{-\nu-\frac{1}{2}}, \; C_{\nu,\omega} = 2\nu\omega^2C_{\nu,\omega}'', \; C'_{\nu,\omega} = 4\pi^2 C_{\nu,\omega}'',
% \label{eq:maternspec1d}
% \end{equation}
% where $C_{\nu,\omega}'' =\{{2\pi^{{1}/{2}}\Gamma(\nu+{1}/{2})(2\nu)^{\nu}\omega^{2\nu}}/{\Gamma(\nu)}\}^{-1/(\nu+{1}/{2})}$. Letting $\hat{f}$ be the Fourier transform of $f$, the RKHS of $k_{\nu,\omega}$ can be shown \citep{Wendland10} to be:
% \begin{equation}
% \CalH_{k_{\nu,\omega}}=\left\{f\in L^2(\Real)\bigcap C(\Real): \frac{\hat{f}}{\sqrt{S_{\nu,\omega}}}\in L^2(\Real)\right\}.
% \label{eq:maternrkhs}
% \end{equation}
% where $C(\Real)$ denotes the space of continuous function on $\Real$.
With $\simeq$ denoting equality up to an independent constant, the inner product of the RKHS $\CalH_{k_{\nu,\omega}}$ can be written as:
\begin{equation}
\langle f,g \rangle_{k_{\nu,\omega}}\simeq\int_\Real \frac{\hat{f}(s)\hat{g}(s)}{S_{\nu,\omega}(s)}ds, \quad f, g \in \CalH_{k_{\nu,\omega}},
\label{eq:materninprod}
\end{equation}
where $\hat{f}$ is the Fourier transform of $f$. Let $m=\nu+\frac{1}{2}$, and suppose $m\in \NatInt$. Equation \eqref{eq:materninprod} then simplifies to:
\begin{equation}
    % \langle f,g \rangle_{k_{\nu,\omega}} \simeq \sum_{n=0}^m(-1)^n{m \choose n}C^{m-n}_{\nu,\omega}K^{n}_{\nu,\omega}\int f(x) D^{2n}g(x) dx = \langle f,\mathcal{L}_{\nu,\omega}g \rangle_{L^2},
    \langle f,g \rangle_{k_{\nu,\omega}} \simeq \langle f,\mathcal{L}_{\nu,\omega}g \rangle_{L^2},
    \label{eq:materninprod2}
\end{equation}
where $D^l= d/dx^l$, $\mathcal{L}_{\nu,\omega}$ is the self-adjoint differential operator:
\begin{equation}
\mathcal{L}_{\nu,\omega}:=\sum_{l=0}^m(-1)^l{m \choose l}C^{m-l}_{\nu,\omega}{C'}^{l}_{\nu,\omega}D^{2l}, \; C_{\nu,\omega} = 2\nu\omega^2C_{\nu,\omega}'', \; C'_{\nu,\omega} = 4\pi^2 C_{\nu,\omega}'',
\label{eq:diffop}
\end{equation}
and $C_{\nu,\omega}'' =\{{2\pi^{{1}/{2}}\Gamma(\nu+{1}/{2})(2\nu)^{\nu}\omega^{2\nu}}/{\Gamma(\nu)}\}^{-1/(\nu+{1}/{2})}$.

With this, the reproducing property of the RKHS gives:
\begin{equation}
f(x)=\langle k_{\nu,\omega}(x,\cdot),f\rangle_{k_{\nu,\omega}}=\int_\Real k_{\nu,\omega}(x,y)\mathcal{L}_{\nu,\omega}f(y) dy, \quad \forall f \in \CalH_{k_{\nu,\omega}}.
\label{eq:reprod}
\end{equation}
This suggests that the Mat\'ern kernel $k_{\nu,\omega}(x,y)$ is the Green's function of the differential operator $\mathcal{L}_{\nu,\omega}$, and can therefore be \textit{uniquely} obtained by solving the following differential equation for $k$ \citep{ODEHandbook}:
\begin{equation}
\mathcal{L}_{\nu,\omega}k(x,y)=\delta(x-y),
\label{eq:diff}
\end{equation}
where $\delta(x-y)$ is the Dirac delta function. This link serves as the basis for deriving the 1-d BdryMat\'ern kernel.

Consider next the case of full boundary information on $f$. This information can be incorporated into the Mat\'ern RKHS $\CalH_{k_{\nu,\omega}}$ by restricting all functions $f \in \CalH_{k_{\nu,\omega}}$ to satisfy $f(0) = f(1) = 0$. The corresponding kernel $k$ for this constrained function space must satisfy the reproducing property:
\begin{equation}
f(x) = \int_\Real k(x,y)\mathcal{L}_{\nu,\omega}f(y) dy, \quad \forall f \in \CalH_{k_{\nu,\omega}}, \quad f(0) = f(1) = 0,
\label{eq:reprod2}
\end{equation}
or equivalently, the following constrained differential equation:
\begin{equation}
\mathcal{L}_{\nu,\omega}k(x,y)=\delta(x-y), \quad k(0,y) = k(1,y) = 0.
\label{eq:diff2}
\end{equation}
For the cases of only left and only right boundary information, a similar reasoning gives the differential equations:
\begin{align}
\mathcal{L}_{\nu,\omega}k(x,y)=\delta(x-y), &\quad k(0,y) = 0, \quad \text{and}\label{eq:diff3}\\
\mathcal{L}_{\nu,\omega}k(x,y)=\delta(x-y), &\quad k(1,y) = 0,
\label{eq:diff4}
\end{align}
respectively.

The following proposition shows that the 1-d BdryMat\'ern kernel cases in \eqref{eq:BdryMatern1d} satisfy the above differential equations with $\nu = 1/2$, with their corresponding RKHS related to the weighted first-order Sobolev space:
\begin{equation}
\label{eq:sobolev1}
\mathcal{H}^1_{\omega} = \left\{f : \omega \|f\|_{L^2}^2 + \frac{1}{\omega} \|Df\|_{L^2}^2 < \infty\right\}.
\end{equation}
% As a result, we can obtain the reproducing kernel of the native space of Mat\'ern kernel with boundary condition by imposing boundary condition on the previous equation. For example, in the case of full boundary condition: 
% $$\mahtcal{H}=\mathcal{H}_{k_\nu}\bigcap\{f:[0,1]\to \Real; f(0)=f(1)=0\}$$
% we only need to solve the  equation with the boundary condition $k(0,y)=k(1,y)=0$.

% In the case that $m=1$, the differential operator becomes
% \begin{equation}
% \mathcal{L}_{\frac{1}{2},\omega}=\omega-\frac{1}{\omega}D^2
% \label{eq:diffbound}
% \end{equation}
% so we can solve the  differential equation with associated boundary condition to get the kernel in (\ref{eq:BdryMatern1d}) with which the associated native space is exactly the weighted first order Sobolev space: 
% $$\CalH_{k_\omega}=\{f: \omega||f||^2_{L^2}+\frac{1}{\omega}||Df||^2_{L^2}<\infty; f\ \text{satisfies given boundary condition}\}.$$

\begin{prop}
Suppose $\nu = 1/2$. The unique kernel $k$ solving \eqref{eq:diff2}, \eqref{eq:diff3} and \eqref{eq:diff4} are the first three cases of the 1-d BdryMat\'ern kernel \eqref{eq:BdryMatern1d}, with corresponding RKHS equal to $\mathcal{H}^1_{\omega}$ with the additional constraint of $\{f(0)=0,f(1)=0\}$, $\{f(0)=0\}$ and $\{f(1)=0\}$, respectively.
\label{prop:BdryMatern1d}
\end{prop}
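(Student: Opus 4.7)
The plan is to prove the proposition in three stages: specialize $\mathcal{L}_{\nu,\omega}$ at $\nu = 1/2$, verify each candidate kernel solves its boundary-value problem uniquely, then identify the induced RKHS with the constrained Sobolev space. For the first stage, at $\nu = 1/2$ one has $m = \nu + 1/2 = 1$, so \eqref{eq:diffop} collapses to the two-term sum $\mathcal{L}_{1/2,\omega} = C_{1/2,\omega} - C'_{1/2,\omega} D^2$. Substituting the closed forms of $C$, $C'$, $C''$ shows $\mathcal{L}_{1/2,\omega}$ is proportional (up to a rescaling of $\omega$) to $-D^2 + \omega^2$, and since $\simeq$ absorbs such constants, I work with $\mathcal{L} := -D^2 + \omega^2$ throughout.

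For the second stage, I would verify each candidate kernel by direct computation. In the full-boundary case, fix $y \in (0,1)$ and split $[0,1]$ at $x = y$. On each half, the candidate is a constant multiple of $\sinh(\omega x)$ or $\sinh(\omega(1-x))$, both of which satisfy $(-D^2 + \omega^2) u = 0$, so the equation holds pointwise off the diagonal. Continuity at $x = y$ is immediate from the symmetric $\wedge / \vee$ form. The jump in $-\partial_x k$ at $x = y$, computed by the product rule and simplified via the hyperbolic addition identity $\sinh(\omega y)\cosh(\omega(1-y)) + \cosh(\omega y)\sinh(\omega(1-y)) = \sinh(\omega)$, yields exactly the required $\delta(x - y)$ source. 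The Dirichlet conditions $k(0,y) = k(1,y) = 0$ follow from $\sinh(0) = 0$. The partial-boundary cases proceed identically after replacing one $\sinh$ factor with the corresponding $\exp$ factor (also annihilated by $-D^2 + \omega^2$) and dropping the relevant Dirichlet condition. Uniqueness follows from standard Sturm-Liouville theory: under each homogeneous Dirichlet condition, the self-adjoint operator $-D^2 + \omega^2$ has trivial kernel (since $-\omega^2$ is not an eigenvalue), so its Green's function is unique.

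For the third stage, integrating by parts in the formal inner product \eqref{eq:materninprod2} gives, for smooth $f, g$ satisfying the appropriate boundary condition,
\[
\langle f, g \rangle_k \;\simeq\; \int_0^1 f\,(-g'' + \omega^2 g)\,dx \;=\; \int_0^1 f' g'\, dx \;+\; \omega^2 \int_0^1 fg\, dx,
\]
since the boundary term $fg'|_0^1$ vanishes under each of $\{f(0) = f(1) = 0\}$, $\{f(0) = 0\}$, or $\{f(1) = 0\}$. After rescaling by $\omega^{-1}$, this is exactly the $\mathcal{H}^1_\omega$ inner product. To close the identification I would show density in both directions: any $f$ in the constrained $\mathcal{H}^1_\omega$ satisfies $(-D^2 + \omega^2) f = h$ for some $h \in L^2$, hence $f(x) = \int_0^1 k(x,y) h(y)\,dy$ and is approximable in norm by Riemann sums $\sum_i c_i k(\cdot, x_i)$; conversely, every such finite combination visibly lies in the constrained space. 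The main obstacle is this final density argument: one must verify that the pointwise traces $f(0), f(1)$ make sense on $\mathcal{H}^1_\omega$ (valid in one dimension by the Sobolev embedding $H^1 \hookrightarrow C$) and that the trace constraints pass to the limit under norm closure, so that the completed kernel-span coincides exactly with the constrained $\mathcal{H}^1_\omega$ rather than a strictly smaller or larger subspace.
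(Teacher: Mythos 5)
Your proposal follows the same route as the paper's own (very terse) proof --- direct verification that each candidate kernel is the Green's function of $\mathcal{L}_{1/2,\omega}\simeq -D^2+\omega^2$ under its boundary conditions, uniqueness from the trivial kernel of the homogeneous Dirichlet problem, and identification of the RKHS via integration by parts in \eqref{eq:materninprod2} --- and it correctly supplies the computational details (the Wronskian/jump identity via $\sinh(\omega y)\cosh(\omega(1-y))+\cosh(\omega y)\sinh(\omega(1-y))=\sinh(\omega)$, the trace argument via the 1-d embedding $H^1\hookrightarrow C$) that the paper omits. One caveat: in the partial-boundary cases your claim that the boundary term $fg'\big|_0^1$ vanishes under $\{f(0)=0\}$ alone is not right on $[0,1]$ (the term $f(1)\,\partial_y k(x,y)\big|_{y=1}=-\omega f(1)\sinh(\omega x)e^{-\omega}$ survives); the reproducing property for the left-only kernel should instead be verified over the half-line, as in the paper's \eqref{eq:reprod} where the integral runs over $\Real$ and the exponential decay of the kernel kills the contribution at the unconstrained end.
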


\begin{proof}
This can be proven by simply showing each of the first three cases of the 1-d BdryMat\'ern kernel \eqref{eq:BdryMatern1d} satisfies its corresponding differential equation \eqref{eq:diff2}, \eqref{eq:diff3} or \eqref{eq:diff4}. Since a solution to these differential equations is unique, the uniqueness of the kernel then follows. The RKHS claim follows from the equivalence between the differential equations and its corresponding reproducing property.

% The norm defined in equation (\ref{eq:sobolev1}) is  associated to the reproducing property (\ref{eq:reprod2}). So, for example, we can solve the full boundary case, which is equation (\ref{eq:diff2}), to prove the reproducing property:
% \begin{align*}
%     &\ \ \ \ \int_0^1k_{\omega}^{\text{BM}}(x,s)[\omega-\frac{1}{\omega}\frac{\partial^2}{\partial s^2}])f(s)ds\\
%     &=\omega\int_0^1 k_{\omega}^{\text{BM}}(x,s)f(s)ds-\frac{1}{\omega}k_{\omega}^{\text{BM}}(x,s)f'(s)\bigg|_{s=0}^{1}+\int_0^1\frac{1}{\omega}\frac{\partial}{\partial s}k_{\omega}^{\text{BM}}(x,s)f'(s)ds\\
%     &=f(x).
% \end{align*}
\end{proof}

\noindent This shows that the proposed BdryMat\'ern kernel indeed inherits the same smoothness properties as the Mat\'ern-1/2 kernel, while also satisfying the desired boundary conditions.

Unfortunately, the same kernel derivation does not appear to extend for more general smoothness parameters $\nu > 1/2$, since more constraints are needed on kernel $k$ in order to solve the corresponding differential equation. For example, when $\nu = 3/2$, a unique solution to \eqref{eq:diff2} requires boundary conditions on both $k$ and its first derivative, which implies further boundary information on $f$ than the Dirchlet boundaries assumed in the paper.

\section{Interpolation: BdryGP and FEM}
\label{sec:fem}
With the BdryGP in hand, we now reveal a useful connection between FEM and the BdryGP predictor. This connection allows us to extend results from FEM to prove improved convergence rates for BdryGP.
% Further details on FEM can be found in \cite{Johnson87}, \cite{Brenner08}, and \cite{Iserles09}.

% because the key step for the proof is to expand the BLUE or the MSE to some series which is closely related to FEM.
% and let $g \in \mathcal{H}^{-1}$, where $\mathcal{H}^{-1}$ is the dual space of $\CalH$
% Let us define the bilinear form $B(u,v):=\langle u,\mathcal{L}v \rangle_{\mathcal{H}}$, with corresponding functional $F(u):=B(u,u)-\langle  f,u\rangle_{{\mathcal{H}}}$. The weak solution of the PDE \eqref{eq:pde} then becomes the minimizer of the functional $F(u)$ over elements $u \in \mathcal{H}$. The Lax-Milgram theorem \citep{Evans15}  gives conditions under which the weak solution exists and is unique:
% \begin{theorem}[Lax-Milgram Theorem]
% \label{thm:Lax-Milgram}
% \cmtS{Is this used in any of our results? If not, I suggest remove this and just mention in text. paper is too long and tangential.} Assume the bilinear map has positive constants $K_1$ and $K_2$ such that:
%     \begin{align}
%     \begin{split}
%     |B(u,v)|&\leq K_1 ||u||_{\mathcal{H}}||v||_{\mathcal{H}}, \quad \forall u,v\in\mathcal{H},\\
%     |B(u,u)|&\geq K_2 ||u||_{\mathcal{H}}^2, \quad  \quad \forall u\in\mathcal{H}.
%     \end{split}
%     \label{eq:coercive}
%     \end{align}
% Let $f\in\mathcal{H}^{-1}$. Then there exists a unique element $u\in\mathcal{H}$ such that
% \begin{equation*}
%     B(u,v)=\langle  f,v\rangle_{{\mathcal{H}}}, \quad \forall v\in\mathcal{H}.
% \end{equation*}
% \end{theorem}

\subsection{Finite-Element Modeling}
We begin with a brief review of FEM. Consider first the following partial differential equation (PDE) system:
\begin{equation}
\begin{cases}
    \mathcal{L} f(\Bx) = g(\Bx), \quad &\Bx \in \mathcal{X},\\
    f(\Bx) = 0, \quad &\Bx \in \partial \mathcal{X}.
\end{cases}
\label{eq:pde}
\end{equation}
Here, $f$ is a solution on a Hilbert space $\mathcal{V}$, $\partial \mathcal{X}$ is the boundary of $\mathcal{X}$, and $\mathcal{L}$ is a differential operator on $\mathcal{V}$. Under regularity conditions, the Lax-Milgram Theorem \citep{Evans15} ensures the existence of a unique weak solution satisfying \eqref{eq:pde}.
% Suppose we have a finite-dimensional subspace $\mathcal{V}$ of $\mathcal{H}$. Assuming the bilinear form $B(u,v) = \langle u,\mathcal{L}v \rangle_{\mathcal{H}}$ satisfies condition \eqref{eq:coercive}, the Lax-Milgram Theorem then ensures the existence of a unique weak solution $u\in\mathcal{V}$. Much of FEM concerns the construction of the finite-dimensional subspace $\mathcal{V}$, and the corresponding minimizer of $F(u)$ over elements $u \in \mathcal{V}$.

%By applying the Sobolev extension theorem and variable transformation, we can, without loss of generality, assume that the domain $D\subset\mathbb{R}^d$ is the d-hyper-cube $T^d$. 

% \begin{figure}
% \begin{center}
% \begin{tikzpicture}[scale=0.8]
% \draw (0,1) grid (5,5);
% \end{tikzpicture}
% \hspace{2cm}
% \begin{tikzpicture}[scale=2.5]
% \draw (0,0)  node[below left] {$A$} --
% (1,0) node[below right] {$B$} --
% (1,1) node[above right] {$C$} --
% (0,1) node[above left] {$D$} -- cycle;
% \filldraw (0,0) circle (1pt);
% \filldraw (0,1) circle (1pt);
% \filldraw (1,1) circle (1pt);
% \filldraw (1,0) circle (1pt);
% \end{tikzpicture}
% \end{center}
% \caption{Mesh of $\mathcal{X}$ with 20 rectangles and vertices $A,B,C,D$ of some $K_i$. \cmtS{needed? figure takes valuable space. paper too long.}}
% \label{fig:mesh}
% \end{figure}

The idea behind FEM is to approximate \eqref{eq:pde} on a discretization of $\mathcal{X}$. This requires two ingredients: a discretization mesh on $\mathcal{X}$, and a finite-dimensional function space constructed from this mesh. Given a multi-index $\Balpha = (\alpha_1, \cdots, \alpha_d) \in \mathbb{N}^d$, let $\mathcal{X}$ be discretized on the full grid mesh:
\begin{equation}
    \BX_{\Balpha} = \left\{\Bx_{\Balpha,\Bbeta}=[\beta_j2^{-\alpha_j}]_{j=1}^d : \beta_j \in \mathcal{B}_{\Balpha} \right\},
    \label{eq:fullgrid}
\end{equation}
where $\mathcal{B}_{\Balpha}$ is the index set:
\begin{equation}
\mathcal{B}_{\Balpha} = \left\{ \mathbf{1}_{\{j \in I^{[0]}\}}, \cdots, 2^{\alpha_j} - \mathbf{1}_{\{j \in I^{[1]}\}} \right\}.
\label{eq:indexset}
\end{equation}
The mesh size of $\BX_{\Balpha}$ then becomes $h_{\Balpha}=(h_{\alpha_1},\cdots,h_{\alpha_d})=(2^{-\alpha_1},\cdots,2^{-\alpha_d})$.

% Figure \ref{fig:mesh} visualizes this in $d=2$ dimensions. Here, $\mathcal{P}(K_i)=\text{span}\{1,x_1,x_2,x_1x_2\}$ , and for any $u\in\mathcal{P}(K_i)$, $u$ can be determined by the values $u(A),u(B), u(C)$ and $u(D)$, where $A,B,C$ and $D$ are vertices of $K_i$.

Next, given the mesh $\BX_{\Balpha}$, let $\mathcal{V}_{\Balpha}$ be the finite-dimensional function space spanned by first-order polynomials within each hypercube formed by $\BX_{\Balpha}$ (we discuss $\mathcal{V}_{\Balpha}$ in greater detail in Section \ref{sec:funcspace}). The FEM solution is then defined as the projection of the weak solution $f$ on the finite-dimensional space $\mathcal{V}_{\Balpha}$. Using a connection to Lagrange polynomial interpolation (see Chapter 15.2 of \citealp{Wendland10}), this FEM solution can be equivalently represented as:
\begin{equation}\label{eq:LagrangePoly}
    \mathcal{I}_{\Balpha} f = \sum_{\Bx_{\Balpha,\Bbeta}\in\BX_{\Balpha}} f(\Bx_{\Balpha,\Bbeta})\phi_{\Balpha,\Bbeta}(\Bx),
\end{equation}
where:
\begin{equation}
\phi_{\Balpha,\Bbeta}(\Bx)=\prod_{j=1}^d\max\left\{1-\frac{|x-x_{\alpha_j,\beta_j}|}{2^{-\alpha_j}},0\right\}
\end{equation}
are piecewise-linear basis functions over each cube.

\subsection{FEM and the Brownian kernel}
We now reveal a novel connection between the FEM interpolator \eqref{eq:LagrangePoly} and the GP predictor under the Brownian kernel $k^{\rm BR}$, first for full grid designs then for sparse grid designs.

\subsubsection{Full Grids}
We first make this connection for full grid designs:

% the BLUE $\hat{u}$ of a zero-mean GP with the Brownian kernel conditioned on the full grid design $\BX_\alpha$ equals the interpolation projector $\mathcal{I}_\alpha$ defined in equation ). The following theorem tells us how to construct $\mathcal{I}_\alpha$ from a GP model.

\begin{theorem}\label{thm:LagrangeKrigingEquivalent}
Suppose $I^{[0]} \cup I^{[1]} = [d]$, and assume the full grid design $\BX_{\Balpha}$ with $n=|\BX_{\Balpha}|$ points. For any $f\in \CalH^{1,c}_{mix}$, the posterior predictor $\hat{f}_n^{\rm BR}$ of a GP with mean function $\mu(\cdot)$ in \eqref{eq:ourmu} and Brownian kernel $k^{\rm BR}$ is equivalent to the FEM solution $\mathcal{I}_{\Balpha} f$, i.e.:
\begin{equation}\label{eq:LagrangeKrigingEquivalent}
\hat{f}_n^{\rm BR}(\cdot) = \mu(\cdot)+k^{\rm BR}(\cdot,\bold{X_{\Balpha}}) [k^{\rm BR}(\bold{X}_{\Balpha},\bold{X}_{\Balpha})]^{-1} \left[f(\bold{X}_{\Balpha})-\mu(\bold{X}_{\Balpha})\right]=\mathcal{I}_{\Balpha} f(\cdot).
\end{equation}
% $\bold{X}_{\Balpha}$ be a full grid design with partial-boundary condition on $\mathcal{X}$ with $I^{[0]}=[d]$. Let $k$ be the Brownian kernel defined on $\mathcal{X}$ as follows:
% \begin{equation*}
%     k(\bold{x},\bold{y})=\prod_{i\in I^{[1]}} x_i\wedge y_i\bigg(1-x_i\vee y_i\bigg)\prod_{i\not\in I^{[1]}}x_i\wedge y_i .
% \end{equation*}
% where $\mathcal{I}_\alpha$ is the interpolation operator defined in (\ref{eq:LagrangePoly}) with which the lattice points of the mesh $\{K_i\}$ is on $\BX_\alpha$.
\end{theorem}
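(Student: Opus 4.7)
The plan is to show that both the BdryGP predictor $\hat{f}_n^{\rm BR}$ and the FEM interpolator $\mathcal{I}_{\Balpha} f$ lie in the finite-dimensional space $\mathcal{V}_{\Balpha}$, and both agree with $f$ on the grid $\BX_{\Balpha}$. Since $\dim \mathcal{V}_{\Balpha} = |\BX_{\Balpha}| = n$ and evaluation at $\BX_{\Balpha}$ is a bijection onto $\Real^n$, the two functions must coincide.

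First, I would reduce the mean component. Under $I^{[0]} \cup I^{[1]} = [d]$, any $f \in \CalH^{1,c}_{mix}$ vanishes on every constrained face $\mathcal{S}_j^{[0]}$ ($j \in I^{[0]}$) and $\mathcal{S}_j^{[1]}$ ($j \in I^{[1]}$). Hence $f(\mathbf{P}(\Bx)) = \mathbf{0}$ for every $\Bx$, and the construction \eqref{eq:ourmu} gives $\mu \equiv 0$. The GP predictor then collapses to the pure kernel interpolant $k^{\rm BR}(\Bx,\BX_{\Balpha})[k^{\rm BR}(\BX_{\Balpha},\BX_{\Balpha})]^{-1} f(\BX_{\Balpha})$, i.e.\ a linear combination of the kernel sections $k^{\rm BR}(\cdot,\Bx_i)$, $\Bx_i \in \BX_{\Balpha}$.

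The crux is to verify that each such kernel section already lives in $\mathcal{V}_{\Balpha}$. From the closed forms in \eqref{eq:brownian_kernel1d}, for each of the three relevant cases (``full'', ``left'', ``right''), the 1-d Brownian kernel $k^{\rm BR}_j(\cdot,x_{i,j})$ is continuous and piecewise linear with a single break at $x_{i,j}$, and vanishes at whichever endpoints are constrained by $I^{[0]}$ or $I^{[1]}$. Since $x_{i,j}$ is a node of the 1-d grid $\{\beta_j 2^{-\alpha_j}\}$, this section lies in the 1-d piecewise-linear space on that grid. Taking products via \eqref{eq:brownian_kernel} then places $k^{\rm BR}(\cdot,\Bx_i)$ in the tensor-product space $\mathcal{V}_{\Balpha}$ with exactly the same face-vanishing pattern as the hat functions $\phi_{\Balpha,\Bbeta}$, so $\hat{f}_n^{\rm BR} \in \mathcal{V}_{\Balpha}$. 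Combined with the facts that $\hat{f}_n^{\rm BR}$ interpolates $f$ on $\BX_{\Balpha}$ (because $k^{\rm BR}(\BX_{\Balpha},\BX_{\Balpha})$ is strictly positive definite and the predictor is noise-free) and that $\mathcal{I}_{\Balpha} f$ does so by the Lagrange property of $\{\phi_{\Balpha,\Bbeta}\}$, uniqueness of interpolation in $\mathcal{V}_{\Balpha}$ closes the argument.

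The main obstacle I foresee is the coordinate-wise bookkeeping when $I^{[0]}$ and $I^{[1]}$ differ coordinate by coordinate: different coordinates invoke different cases of \eqref{eq:brownian_kernel1d}, and the index set $\mathcal{B}_{\Balpha}$ in \eqref{eq:indexset} correspondingly includes or excludes each endpoint $\beta_j \in \{0,2^{\alpha_j}\}$. Matching the boundary behaviour of the tensor-product kernel sections to that of the hat basis in every coordinate requires a careful case split, but no machinery beyond the explicit formulas in \eqref{eq:brownian_kernel1d}.
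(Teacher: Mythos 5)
Your argument is correct, but it is a genuinely different route from the paper's. The paper proves the identity \emph{constructively}: it invokes the explicit tridiagonal form of $[k^{\rm BR}(\BX_{\Balpha},\BX_{\Balpha})]^{-1}$ (from Theorem 2 of \cite{DingZhang18}) in one dimension, and then runs an induction on $d$ through the Kronecker-product structure of the Gram matrix to show that the weight vector $k^{-1}(\BX_{\Balpha},\BX_{\Balpha})k(\BX_{\Balpha},\Bx)$ has at most $2^d$ nonzero entries equal to the hat-function values $\prod_j \phi_{\alpha_j,i_j}(x_j)$. You instead argue abstractly: reduce to $\mu \equiv 0$ (same reduction as the paper), observe that each kernel section $k^{\rm BR}(\cdot,\Bx_i)$ is tensor-product piecewise linear with breaks only at grid nodes and the correct face-vanishing pattern, hence lies in $\mathcal{V}_{\Balpha}$, and conclude by uniqueness of interpolation in the $n$-dimensional nodal space. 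Your key containment step is essentially the content of the paper's Lemma \ref{lem:finitespace} (the equivalence of $\mathcal{V}_{\Balpha}$ with the span of kernel sections), which the paper proves separately in the appendix for a different purpose; you are in effect deriving that lemma directly from the closed forms in \eqref{eq:brownian_kernel1d} and then letting linear algebra finish the job. Your route is shorter and more conceptual; what the paper's computation buys in exchange is the explicit sparse weight matrix $\bold{\Phi}^{[d]}(\Bx)$ and the closed-form 1-d predictor, both of which are reused later (e.g., in the proof of Theorem \ref{thm:diffBdryGP}, which Taylor-expands the explicit BdryMat\'ern weights against the Brownian ones). One small point worth making explicit in your write-up: strict positive definiteness of the Brownian Gram matrix holds precisely because the index set $\mathcal{B}_{\Balpha}$ in \eqref{eq:indexset} excludes the constrained endpoints where the kernel degenerates, so the linear system defining the predictor is indeed solvable and the predictor interpolates.
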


\noindent In other words, assuming $I^{[0]} \cup I^{[1]} = [d]$ (i.e., there exists left or right boundary information for each of the $d$ variables), the predictor $\hat{f}_n^{\rm BR}$ for a GP with Brownian kernel $k^{\rm BR}$ is equivalent to the FEM solution $\mathcal{I}_{\Balpha} f$, under the full grid design (or mesh) $\BX_{\Balpha}$.

The key idea in proving Theorem \ref{thm:LagrangeKrigingEquivalent} is to show that, under the Brownian kernel $k^{\rm BR}$, the matrix inverse $[k^{\rm BR}(\bold{X},\bold{X})]^{-1}$ has an explicit closed-form expression. Under this expression, the desired equivalence can be shown via an inductive argument on dimension $d$. This result can be viewed as an extension of Proposition 2 in \cite{DingZhang18}.

% $[k^{\rm BR}(\bold{X},\bold{X})]^{-1}f(\bold{X})$ is the hierarchical surplus of $\mathcal{I}_{\Balpha}$, we can then show the desired equivalence in \eqref{eq:LagrangeKrigingEquivalent}

\begin{proof}
Without loss of generality (WLOG), we assume the setting of only known left boundaries, i.e., $I^{[0]} = [d]$, $I^{[1]} = \emptyset$, since the setting of $I^{[0]} \cup I^{[1]} = [d]$ follows immediately. Furthermore, since the mean function $\mu(\cdot)$ in \eqref{eq:ourmu} satisfies the desired boundary conditions, we can simply show that the claim holds for a zero-mean GP with Brownian kernel $k^{\rm BR}$, under a boundary condition of zero. Let $k(\Bx,\By):=k^{\text{BR}}(\Bx,\By)$ and $k_j(x_j,y_j):=k^{\text{BR}}_j(x_j,y_j)$, $j = 1, \cdots, p$. We first prove the theorem for the base cases of $d=1$ and $d=2$, then show the claim holds for $d>2$ via induction.

Consider first the base case of $d=1$. Under the assumption of known left boundaries, $\bold{X}_{\Balpha}=\{x_i=i 2^{-\Balpha}:i=1,\cdots , 2^{\Balpha}\}$ with $x_{0}=0$ and $n:=|\BX_{\Balpha}|$. By Theorem 2 of \cite{DingZhang18}, $k^{-1}(\bold{X}_{\Balpha},\bold{X}_{\Balpha})$ is a symmetric tridiagonal matrix with entries
\begin{align*}
    [k^{-1}(\bold{X}_{\Balpha},\bold{X}_{\Balpha})]_{i,i}&=
    \begin{cases}
     \frac{x_{i+1}-x_{i-1}}{(x_{i}-x_{i-1})(x_{i+1}-x_{i})}, & \text{if} \ i< n\\
     \frac{1}{x_n-x_{n-1}}, & \text{if} \ i=n
    \end{cases},\\
    [k^{-1}(\bold{X}_{\Balpha},\bold{X}_{\Balpha})]_{i-1,i}&=[k^{-1}(\bold{X}_{\Balpha},\bold{X}_{\Balpha})]_{i,i-1}=-\frac{1}{x_{i}-x_{i-1}}
\end{align*}
for $i=1,2,\cdots,n$. Given any point $x \in [0,1]$, assume $x_{i}<x<x_{i+1}$. By straightforward calculations, we have
$$
    \hat{f}^{\text{BR}}_n(x)=\frac{x_{i+1}-x}{x_{i+1}-x_{i}}f(x_{i})+\frac{x-x_{i}}{x_{i+1}-x_{i}}f(x_{i+1})=\sum_{x_{{\Balpha},\beta}\in \BX_{\Balpha}}f(x_{{\Balpha},\beta})\phi_{{\Balpha},\beta}(x).$$
This proves the base case of $d=1$. 

% be a grid on $\mathbb{R}^2$ with mesh size $(2^{-\alpha_1},2^{-\alpha_2})$
For clarity in the later inductive step, we will also show the case of $d=2$. Here, $\bold{X}_{\Balpha}=\bigtimes_{j=1}^2\bold{X}_{\alpha_j} = \bigtimes_{j=1}^2 \{x_{\alpha_j,1}, x_{\alpha_j,2}, \cdots\}$, and  $k(\bold{x},\bold{y})=k_1(x_1,y_1)k_2(x_2,y_2)$. Let $\Bx$ be a point in $\mathcal{X}$, and let $K_{\Bx}$ be the hypercube in $\BX_{\Balpha}$ containing $\Bx$ with vertices $\{\Bx_{{\Balpha},(i_1,i_2)}:i_j=\beta_j,\beta_j+1,j=1,2\}$. The vector $k^{-1}(\bold{X}_{\Balpha},\bold{X}_{\Balpha})k(\bold{X}_{\Balpha},\bold{x})$  can be decomposed as: 
\begin{align*}
    &\ \ \ \ k^{-1}(\bold{X}_{\Balpha},\bold{X}_{\Balpha})k(\bold{X}_{\Balpha},\bold{x})\\
    &=\big\{k_1^{-1}(\bold{X}_{\alpha_1},\bold{X}_{\alpha_1})\bigotimes k_2^{-1}(\bold{X}_{\alpha_2},\bold{X}_{\alpha_2})\big\}\text{vec}\big([k_1(x_{1},x_{\alpha_1,\beta_1})k_2(x_{2},x_{\alpha_2,\beta_2})]_{\beta_1,\beta_2}\big)\\
    &=\text{vec}\big(k_2^{-1}(\bold{X}_{\alpha_2},\bold{X}_{\alpha_2})[k_1(x_{1},x_{\alpha_1,\beta_1})k_2(x_{2},x_{\alpha_2,\beta_2})]_{\beta_1,\beta_2}k_1^{-1}(\bold{X}_{\alpha_1},\bold{X}_{\alpha_1})\big)
\end{align*}
where vec$\big(\bold{M}\big)$ denotes the vectorization of the matrix $\bold{M}$. Define
\begin{equation*}
    \bold{\Phi}^{[2]}(\Bx):=k_2^{-1}(\bold{X}_{\alpha_2},\bold{X}_{\alpha_2})[k_1(x_{1},x_{\alpha_1,\beta_1})k_2(x_{2},x_{\alpha_2,\beta_2})]_{\beta_1,\beta_2}k_1^{-1}(\bold{X}_{\alpha_1},\bold{X}_{\alpha_1}).
\end{equation*}
%and, similar to the case $d=1$, let:
%\begin{align*}
%    &Z_d^{i+1}(\Bx):=\frac{x_d(1-x_d^{(i)})-(1-x_d)x_d^{(i)}}{x_d^{(i+1)}-x_d^{(i)}}=1-\frac{|x_d-x_d^{(i+1)}|}{x_d^{(i+1)}-x_d^{(i)}}\\
%    &Z_d^{i}(\Bx):=\frac{(1-x_d)x_d^{(i+1)}-x_d(1-x_d^{(i+1)})}{x_d^{(i+1)}-x_d^{(i)}}=1-\frac{|x_d-x_d^{(i)}|}{x_d^{(i+1)}-x_d^{(i)}}.
%\end{align*}
By straightforward calculations similar to the 1-d case, it follows that $\bold{\Phi}^{[2]}(\Bx)$ has only the four non-zero entries:
\begin{align*}
    &\bold{\Phi}^{[2]}_{\beta_1,\beta_2}(\Bx)=\phi_{\alpha_1,\beta_1}(x_1)\phi_{\alpha_2,\beta_2}(x_2),\ \ \bold{\Phi}^{[2]}_{\beta_1+1,\beta_2}(\Bx)=\phi_{\alpha_1,\beta_1+1}(x_1)\phi_{\alpha_2,\beta_2}(x_2),\\ &\bold{\Phi}^{[2]}_{\beta_1,\beta_2+1}(\Bx)=\phi_{\alpha_1,\beta_1}(x_1)\phi_{\alpha_2,\beta_2+1}(x_2), \ \ \bold{\Phi}^{[2]}_{\beta_1+1,\beta_2+1}(\Bx)=\phi_{\alpha_1,\beta_1+1}(x_1)\phi_{\alpha_2,\beta_2+1}(x_2),
\end{align*}
%  and $(\beta_1,\beta_2)$ is the multi-index which satisfies the condition that $\Bx\in K_i$ where $K_i$ is a hyper-cube with vertices $\{\Bx_{{\Balpha},(i_1,i_2)}:i_j=\beta_j,\beta_j+1,j=1,2\}$.
where $\phi_{{\Balpha},\beta}$ is the hat function defined previously. Thus, the predictor $\hat{f}_n^{\rm BR}$ can be rewritten as:
\begin{align*}
    \hat{f}^{\text{BR}}_n(\Bx)&=\text{vec}\big[\Phi^{[2]}(\Bx)\big]^\intercal f(\BX_{\Balpha})\\
    &=\sum_{i_1=\beta_1}^{\beta_1+1}\sum_{i_2=\beta_2}^{\beta_2+1}\bold{\Phi}^{[2]}_{i_1,i_2}(x)f(x_{\alpha_1,i_1},x_{\alpha_2, i_2})\\
    &=\sum_{\Bx_{{\Balpha},{\Bbeta}}\in\BX_{\Balpha}}f(\Bx_{{\Balpha},\beta})\phi_{{\Balpha},\beta}(\Bx),
\end{align*}
which proves the theorem for $d=2$.

Consider next the inductive step on $d$. Here, the full grid becomes $\bold{X}_{\Balpha}=\bigtimes_{j=1}^d\bold{X}_{\alpha_j}= \bigtimes_{j=1}^d \{x_{\alpha_j,1},x_{\alpha_j,2},\ldots\}$. Let $\Bx\in\mathcal{X}$ and let $K_i$ be the hyper-cube in $\BX_{\Balpha}$ containing $\Bx$ with vertices $\{\Bx_{{\Balpha},(i_1,\cdots,i_{d})}:i_j=\beta_j,\beta_j+1,j=1,\cdots,d\}$ as before. Suppose the inductive hypothesis:
\begin{align*}
    \hat{f}^{\text{BR}}_n(\Bx)&=k(\Bx,\BX_{\Balpha})\left[k(\BX_{\Balpha},\BX_{\Balpha})\right]^{-1}f(\BX_{\Balpha})\\
    &=\text{vec}\big[\Phi^{[d]}(\Bx)\big]^\intercal f(\BX_{\Balpha})\\
    &=\sum_{i_1=\beta_1}^{\beta_1+1}\sum_{i_2=\beta_2}^{\beta_2+1}\cdots\sum_{i_d=\beta_d}^{\beta_d+1}\bold{\Phi}^{[d]}_{i_1,\cdots,i_d}(\Bx)f(x_{\alpha_1,i_1},x_{\alpha_2,i_2},\cdots,x_{\alpha_d,i_d}),
\end{align*}
where:
\begin{equation*}
    \bold{\Phi}^{[d]}_{i_1,\cdots,i_d}(\Bx)=\prod_{j=1}^d\phi_{\alpha_j,i_j}(x_j).
\end{equation*}
 From this hypothesis, we can see that  there are at most $2^{d}$ non-zeros entries on $\bold{\Phi}^{[d]}(\Bx)$, namely, the entries $\bold{\Phi}^{[d]}_{i_1,\cdots,i_{d}}(\Bx)$ with $i_j=\beta_j$ or $\beta_j+1$. Since $\hat{f}^{\text{BR}}_n$ is the Lagrange polynomial interpolation of $f$ and is continuous, this assumption is equivalent to $\hat{f}^{\text{BR}}_n=\mathcal{I}_{\Balpha} f$.

Under this inductive hypothesis, consider the case for dimension $d+1$. Here, the full grid design becomes $\bold{X}_{\Balpha}=\BX_{{\Balpha}_{1:d}}\bigtimes\BX_{\alpha_{d+1}}$. Now let $\Bx\in \mathcal{X}$ and let $K_i$ be a hyper-cube  in $\BX_{\Balpha}$ containing $\Bx$ with vertices $\{\Bx_{{\Balpha},(i_1,\cdots,i_{d})}:i_j=\beta_j,\beta_j+1,j=1,\cdots,d+1\}$. Let $k(\Bx,\bold{y})=k(\Bx_{1:d},\bold{y}_{1:d})k_{d+1}(x_{d+1},y_{d+1})$. Then the vector $k^{-1}(\BX_{\Balpha},\BX_{\Balpha})k(\BX_{\Balpha},\Bx)$ becomes:
\begin{align*}
    &\ \ \ \ \ k^{-1}(\bold{X}_{\Balpha},\bold{X}_{\Balpha})k(\bold{X}_{\Balpha},\bold{x})\\
    &=\big\{k^{-1}(\bold{X}_{{\Balpha}_{1:d}},\bold{X}_{{\Balpha}_{1:d}})\bigotimes k_{d+1}^{-1}(\bold{X}_{\alpha_{d+1}},\bold{X}_{\alpha_{d+1}})\big\}\\
    & \quad \quad \quad \quad \text{vec}\big([k(\Bx_{1:d},\Bx_{{\Balpha}_{1:d},\Bbeta_{1:d}})k_{d+1}(x_{d+1},x_{\alpha_{d+1},\beta_{d+1}})]_{\Bbeta_{1:d},\beta_{d+1}}\big)\\
    &=\text{vec}\Big\{k_{d+1}^{-1}(\bold{X}_{\alpha_{d+1}},\bold{X}_{\alpha_{d+1}})\\
    & \quad \quad \quad [k(\Bx_{1:d},\Bx_{{\Balpha}_{1:d},\Bbeta_{1:d}})k_{d+1}(x_{d+1},x_{\alpha_{d+1},\beta_{d+1}})]_{\Bbeta_{1:d},\beta_{d+1}}k^{-1}(\bold{X}_{{\Balpha}_{1:d}},\bold{X}_{{\Balpha}_{1:d}})\Big\}.
\end{align*}
Similarly, define:
\begin{align*}
\bold \Phi(\Bx)^{[d+1]}& = k_{d+1}^{-1}(\bold{X}_{\alpha_{d+1}},\bold{X}_{\alpha_{d+1}})\\
& \quad \quad   [k(\Bx_{1:d},\Bx_{{\Balpha}_{1:d},\Bbeta_{1:d}})k_{d+1}(x_{d+1},x_{\alpha_{d+1},\beta_{d+1}})]_{\Bbeta_{1:d},\beta_{d+1}}k^{-1}(\bold{X}_{{\Balpha}_{1:d}},\bold{X}_{{\Balpha}_{1:d}}).
\end{align*}

From the inductive hypothesis, we know that
$$k^{-1}(\BX_{{\Balpha}_{1:d}},\BX_{{\Balpha}_{1:d}})k(\BX_{{\Balpha}_{1:d}},\Bx_{1:d})=\text{vec}\left(\bold{\Phi}^{[d]}(\Bx_{1:d})\right)$$
which is the vectorization of the sparse matrix $\bold{\Phi}^{[d]}(\Bx_{1:d})$, which has at most $2^d$ non-zero entries. Hence, $\Phi^{[d+1]}(\Bx)$ can be decomposed as:
\begin{align*}
    & \bold{\Phi}^{[d+1]}(\Bx)=k_{d+1}^{-1}(\BX_{\alpha_{d+1}},\BX_{\alpha_{d+1}})\bigg[\text{vec}\big(\bold{\Phi}^{[d]}(\Bx_{1:d})\big)k_{d+1}(x_{d+1},x_{\alpha_{d+1},\beta_{d+1}}) \bigg]_{\beta_{d+1}}
\end{align*}
So there are at most $2^{d+1}$ non-zeros entries on $\bold{\Phi}^{[d+1]}(\Bx)$, namely, the entries $\bold{\Phi}_{i_1,\cdots,i_{d+1}}(\Bx)$ where $i_j=\beta_j$ or $\beta_j+1$. Incorporating this, we then have:
\begin{align*}
    \hat{f}^{\text{BR}}_n(x)&=\sum_{i_1=\beta_1}^{\beta_1+1}\sum_{i_2=\beta_2}^{\beta_2+1}\cdots\sum_{i_{d+1}=\beta_{d+1}}^{\beta_{d+1}+1}\bold{\Phi}_{i_1,\cdots,i_d}(\Bx)f(x_{\alpha_1,i_1},x_{\alpha_2,i_2},\cdots,x_{\alpha_{d+1},i_{d+1}})\\
    &=\sum_{\Bx_{{\Balpha},\beta}\in\BX}\phi_{{\Balpha},\Bbeta}(\Bx)f(\Bx_{{\Balpha},{\Bbeta}})=\mathcal{I}_{\Balpha} f(\Bx),
\end{align*}
which completes the inductive step.
\end{proof}

% However, we notice that if  equation (\ref{eq:LagrangeKrigingEquivalent}) holds, then the vector $k^{-1}(\bold{X},\bold{X})f(\bold{X})$ must be the hierarchical surplus of $\mathcal{I}_\alpha$. For detailed introduction of hierarchical surplus, please refer to \cite{Garcke12}, \cite{Bungartz04} or see section 4. The key step of the proof is to write down the matrix  $k^{-1}(\bold{X},\bold{X})$ explicitly. We can extend the work in \cite{DingZhang18} to show that equation (\ref{eq:LagrangeKrigingEquivalent}) does hold. In general, theorem \ref{thm:LagrangeKrigingEquivalent} is inspired by the fact that the hierarchical surplus (see equation (11) of \cite{Garcke12}) is equal to tensor product of matrices in example 2 of \cite{DingZhang18}  up to a scale. Detailed proof is left in appendix.

\subsubsection{Sparse Grids}
One disadvantage of full grid designs is the so-called \textit{curse-of-dimensionality}: both the design size and its corresponding prediction error grow exponentially in dimension $d$. To this end, we extend next the earlier equivalence between FEM and the Brownian kernel for a broader class of designs called \textit{sparse grids} \citep{Bungartz04}, which ``sparsify'' a full grid by retaining only certain subgrids of interest. These designs are used later to prove the improved convergence rates for BdryGP.

We first provide a brief review of sparse grid designs. A sparse grid of level $k$, denoted as $\BX^{\rm SP}_k$, is defined as follows:
\begin{equation}
    \BX_k^{\rm SP}=\bigcup_{k\leq|{\Balpha}|\leq k+d-1}\BX_{\Balpha}, \quad |\Balpha| := \sum_{j=1}^d \alpha_j.
    \label{eq:sparsegrid}
\end{equation}
\noindent In words, the sparse grid $\BX_k^{\rm SP}$ is the union of full grids $\BX_{\Balpha}$ whose multi-indices $\Balpha$ sums between $k$ and $k+d-1$. Figure \ref{fig:sgg} shows sparse grids of levels 1 to 4 in two dimensions; we see that sparse grids provide a sizable reduction in design size compared to full grids. This reduction plays a key role in providing relief from dimensionality in many numerical approximation problems \citep{Wendland10,Dick13}.
\begin{figure}
\centering
\includegraphics[width=0.24\textwidth]{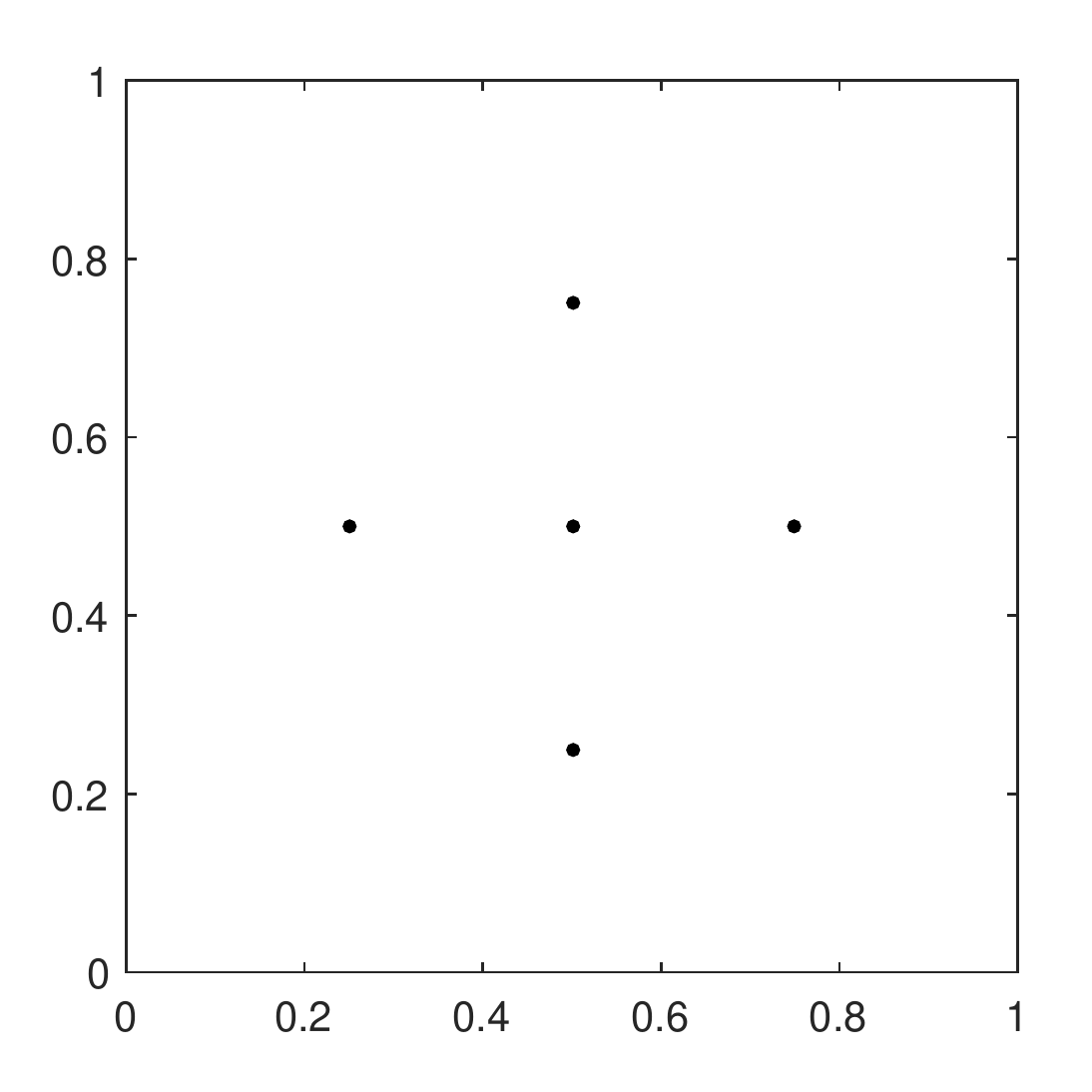}
\includegraphics[width=0.24\textwidth]{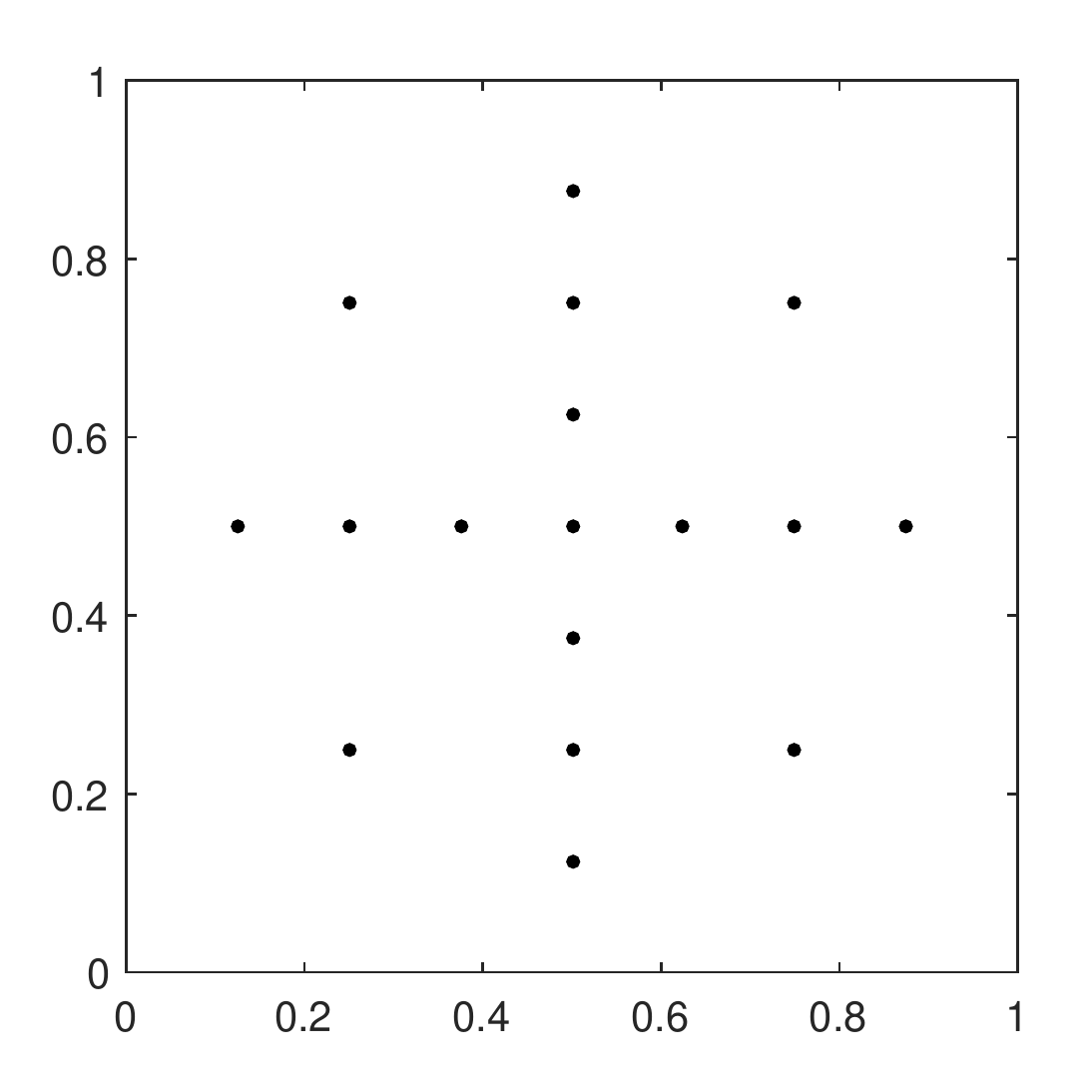}
\includegraphics[width=0.24\textwidth]{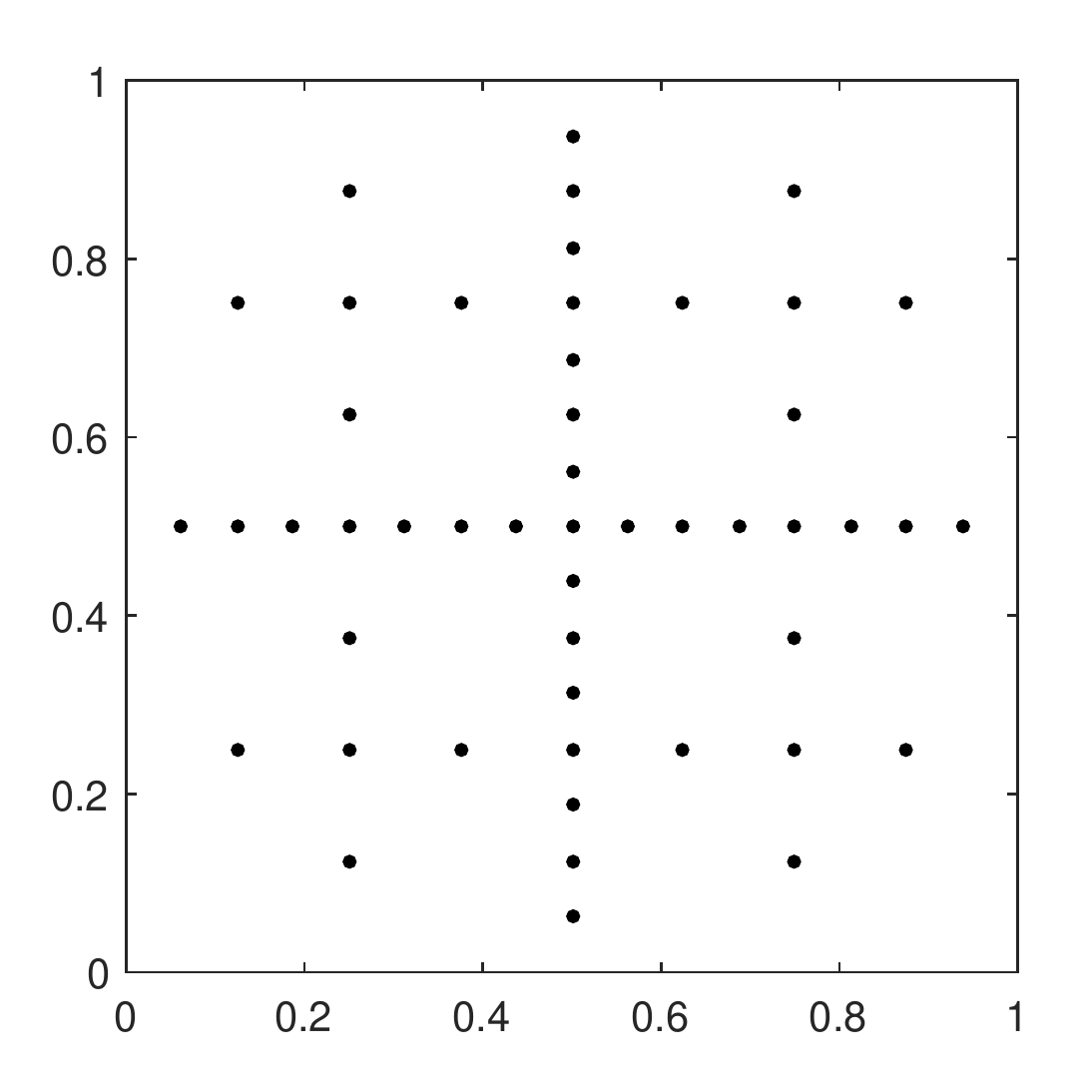}
\includegraphics[width=0.24\textwidth]{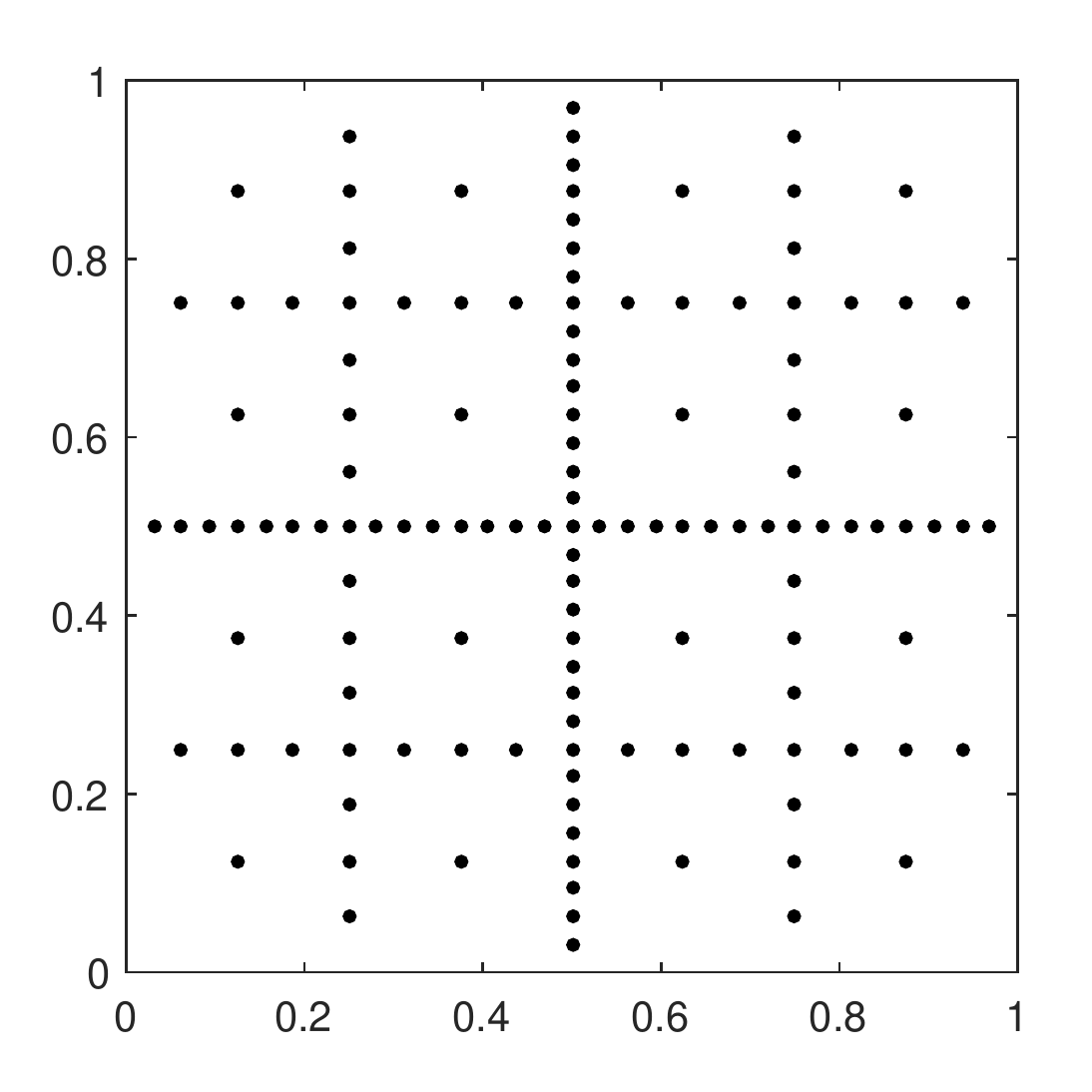}

\caption{Sparse grid designs of levels 1 to 4 in two dimensions.}
\label{fig:sgg}
\end{figure}

The FEM solution $\mathcal{I}_{\Balpha}f$ in \eqref{eq:LagrangePoly}, previously defined for the full grid $\BX_{\Balpha}$, can be extended analogously for sparse grids. Similar to before, let $\mathcal{V}_k^{\rm SP}$ be the sum of the finite-dimensional function spaces for each of the component full grids in the sparse grid \eqref{eq:sparsegrid}. The FEM solution on sparse grid $\BX_k^{\rm SP}$, defined as the projection of the weak solution $f$ on $\mathcal{V}_k^{\rm SP}$, can be shown (Equation (28) of \citealp{Garcke12}) to have the form:
\begin{equation}\label{eq:combinationtechnique}
    \begin{aligned}
    \mathcal{I}_k^{\rm SP} f &=\sum_{j=0}^{d-1}(-1)^j {d-1 \choose j}\sum_{|{\Balpha}|=k+d-1-j}\mathcal{I}_{\Balpha}f.
    \end{aligned}
\end{equation}

With this in hand, we show that under sparse grid designs, the FEM solution $\mathcal{I}_k^{\rm SP} f$ is also equivalent to the GP posterior mean $\hat{f}_n^{\rm BR}$ with the Brownian kernel $k^{\rm BR}$:
\begin{theorem}
\label{thm:sggFEMkriging}
Suppose $I^{[0]} \cup I^{[1]} = [d]$, and assume the sparse grid design $\BX_k^{\rm SP}$ with $n=|\BX_{\Balpha}|$ points. For any $f\in \CalH^{1,c}_{mix}$, the posterior predictor $\hat{f}_n^{\rm BR}$ of a GP with mean function \eqref{eq:ourmu} and Brownian kernel $k^{\rm BR}$ is equivalent to the FEM solution $\mathcal{I}_k^{\rm SP} f$ in \eqref{eq:combinationtechnique}.
\end{theorem}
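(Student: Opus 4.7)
The plan is to combine Theorem~\ref{thm:LagrangeKrigingEquivalent} with the combination-technique representation \eqref{eq:combinationtechnique}, and to close the argument with a uniqueness characterization of the GP posterior mean as the only function in $\mathrm{span}\{k^{\rm BR}(\cdot,\Bx):\Bx\in\BX_k^{\rm SP}\}$ that interpolates the data at the sparse-grid nodes.

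I would first reduce, as in the proof of Theorem~\ref{thm:LagrangeKrigingEquivalent}, to the zero-mean setting by working with $f-\mu$, which vanishes on every boundary face that $\mu$ from \eqref{eq:ourmu} was constructed to interpolate. I would then apply the combination formula to write $\mathcal{I}_k^{\rm SP}(f-\mu)=\sum_{j=0}^{d-1}(-1)^j\binom{d-1}{j}\sum_{|\Balpha|=k+d-1-j}\mathcal{I}_\Balpha(f-\mu)$, and invoke Theorem~\ref{thm:LagrangeKrigingEquivalent} on each summand to rewrite $\mathcal{I}_\Balpha(f-\mu)$ as the zero-mean Brownian GP posterior mean $k^{\rm BR}(\cdot,\BX_\Balpha)[k^{\rm BR}(\BX_\Balpha,\BX_\Balpha)]^{-1}(f-\mu)(\BX_\Balpha)$, which lies in $\mathrm{span}\{k^{\rm BR}(\cdot,\Bx):\Bx\in\BX_\Balpha\}$. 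Since every $\BX_\Balpha$ appearing in \eqref{eq:combinationtechnique} is a subset of $\BX_k^{\rm SP}$, summing places $\mathcal{I}_k^{\rm SP}(f-\mu)$ itself inside $\mathrm{span}\{k^{\rm BR}(\cdot,\Bx):\Bx\in\BX_k^{\rm SP}\}$.

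Next, I would verify that $\mathcal{I}_k^{\rm SP}(f-\mu)$ agrees with $f-\mu$ at every node of $\BX_k^{\rm SP}$. This is the standard interpolation property of the sparse-grid combination technique: for any fixed $\Bx^\ast\in\BX_k^{\rm SP}$, the signed sum of the full-grid multilinear values at $\Bx^\ast$ telescopes via the coefficients $(-1)^j\binom{d-1}{j}$ to collapse exactly to $f(\Bx^\ast)-\mu(\Bx^\ast)$ (see \cite{Bungartz04} and Chapter~15 of \cite{Wendland10}). With interpolation and span inclusion in hand, positive-definiteness of the tensor-product Brownian kernel on $\BX_k^{\rm SP}$ (which holds because the index set \eqref{eq:indexset} excludes the boundary faces where $k^{\rm BR}$ vanishes) makes the Gram matrix $k^{\rm BR}(\BX_k^{\rm SP},\BX_k^{\rm SP})$ invertible. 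Hence the coefficients of any representation in $\mathrm{span}\{k^{\rm BR}(\cdot,\Bx_i):\Bx_i\in\BX_k^{\rm SP}\}$ are uniquely determined by the interpolation equations, which identifies $\mathcal{I}_k^{\rm SP}(f-\mu)$ with the zero-mean Brownian GP posterior mean on $\BX_k^{\rm SP}$. Adding $\mu$ back yields $\mathcal{I}_k^{\rm SP}f=\hat f_n^{\rm BR}$.

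The main obstacle is the combinatorial verification just cited: that the signed sum in \eqref{eq:combinationtechnique} evaluated at a given $\Bx^\ast\in\BX_k^{\rm SP}$ collapses to $f(\Bx^\ast)-\mu(\Bx^\ast)$. A single $\Bx^\ast$ typically belongs to many component full grids $\BX_\Balpha$, so one must track precisely which hat-function values contribute at each multi-index level and verify that the alternating coefficients cancel correctly. If a self-contained proof is preferred over a direct appeal to the sparse-grid literature, this can be derived inductively on $d$ through the hierarchical-surplus decomposition of $\mathcal{V}_k^{\rm SP}$, with the $d=1$ base case already supplied by Theorem~\ref{thm:LagrangeKrigingEquivalent}. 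Once this identity is in hand, the rest is the representer-type uniqueness already described.
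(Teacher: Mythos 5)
Your argument is correct, but it reaches the conclusion by a different route than the paper. The paper's proof is a two-line outsourcing job: it substitutes $\hat{f}_n^{\rm BR}$ for $\mathcal{I}_{\Balpha}f$ in \eqref{eq:combinationtechnique} via Theorem \ref{thm:LagrangeKrigingEquivalent}, and then invokes Algorithm 1 of \cite{Plumlee14}, which asserts that the GP posterior mean on a sparse grid with a tensor-product kernel is itself given by exactly that signed combination of full-grid posterior means. You instead prove that identity from scratch with a representer-type uniqueness argument: (i) each full-grid summand lies in $\mathrm{span}\{k^{\rm BR}(\cdot,\Bx):\Bx\in\BX_k^{\rm SP}\}$ because $\BX_{\Balpha}\subseteq\BX_k^{\rm SP}$ for every $\Balpha$ in the combination formula; (ii) $\mathcal{I}_k^{\rm SP}f$ interpolates $f$ at all sparse-grid nodes; (iii) invertibility of the Gram matrix forces any interpolant in that span to coincide with the posterior mean. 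This is sound, and the step you flag as the main obstacle (nodal exactness of the combination technique) is in fact already available from the paper's own hierarchical decomposition \eqref{eq:hieraDiffSpa_proj2}: for a node $\Bx^\ast$ of level $\Balpha_0$, every surplus term $f_{\Balpha'}(\Bx^\ast)$ with $\Balpha'\not\leq\Balpha_0$ vanishes, so the sum collapses to $\mathcal{I}_{\Balpha_0}f(\Bx^\ast)=f(\Bx^\ast)$ — no delicate alternating-sign cancellation is needed. What your approach buys is self-containedness: it replaces the external citation to \cite{Plumlee14} with elementary linear algebra plus a standard sparse-grid fact, at the cost of a slightly longer write-up. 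One small point to tidy: your final step ``adding $\mu$ back'' gives $\mathcal{I}_k^{\rm SP}f-\mathcal{I}_k^{\rm SP}\mu+\mu$, which equals $\mathcal{I}_k^{\rm SP}f$ only because $f\in\CalH^{1,c}_{mix}$ forces the boundary data, and hence $\mu$ in \eqref{eq:ourmu}, to vanish identically — the same reduction the paper makes implicitly in Theorem \ref{thm:LagrangeKrigingEquivalent}, but worth stating.
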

\begin{proof}
In Theorem \ref{thm:LagrangeKrigingEquivalent}, we have shown the equivalence between the FEM solution $\mathcal{I}_{\Balpha}f$ and the GP predictor $\hat{f}_n^{\rm BR}$ on the full grid design $\BX_{\Balpha}$. Hence, $\mathcal{I}_{\Balpha}f$ can be replaced by $\hat{f}_n^{\rm BR}$ in Equation (\ref{eq:combinationtechnique}). The result then follows by Algorithm 1 of \cite{Plumlee14}.
\end{proof}
% Therefore, the BLUE of BdryGP with the Brownian kernel conditioned on $\BX_k^{\rm SP}$ can also be written in the form of equation (\ref{eq:combinationtechnique}).

\subsection{FEM and the BdryMat\'ern kernel}
Having proved the connection between FEM and the Brownian kernel $k^{\rm BR}$, we then show how this relates to the BdryMat\'ern kernel $k^{\rm BM}_\omega$ used in BdryGP. Of course, the GP predictors $\hat{f}_n^{\rm BR}$ and $\hat{f}_n^{\rm BM}$ under the Brownian and BdryMat\'ern kernels are not equivalent. However, we show below that the BdryGP approximation error $|f - \hat{f}_n^{\rm BM}|$ can be upper bounded by the approximation error $|f - \hat{f}_n^{\rm BR}|$ from the Brownian kernel:

% However, if we replace $k^{\text{BR}}$ by $k^{\text{BM}}_\omega$ in the previous theorem, the equality does not hold obviously. Nevertheless, we can show that the difference between $\mathcal{V}_{k}^s$ and the function space:
% $$\{k^{\text{BM}}_\omega(\Bx_{\Balpha,\Bbeta},\cdot):\Bx_{\Balpha,\Bbeta}\in\BX^{\rm SP}_k\}$$
% converges to $0$ faster than the the convergence rates we will show in next section. As a result, all BdryGPs conditioned on the same sparse grid design share the same convergence rate.
\begin{theorem}
\label{thm:diffBdryGP}
Suppose $I^{[0]} \cup I^{[1]} = [d]$, and assume the sparse grid design $\BX_k^{\rm SP}$ with $n=|\BX_k^{\rm SP}|$ points. Let $\hat{f}_n^{\rm BM}$ be the BdryGP predictor with mean function \eqref{eq:ourmu} and BdryMat\'ern kernel $k^{\rm BM}_{\omega}$. For any $f\in \CalH^{1,c}_{mix}$ and any $\omega > 0$:
%Assume a sparse grid design $\BX_k^{\rm SP}$ with $n$ design points. Let $\hat{f}_n^{\rm BM}$ and $\hat{f}_n^{\rm BR}$ be the BLUEs of the BdryGP with the BdryMat\'ern kernel \eqref{eq:BdryMatern} and the Brownian kernel \eqref{eq:brownian_kernel}, respectively. For any wavelength $\omega$, we have the following uniform asymptotic equality \cmtS{$f \in C(\mathcal{X})$? absolute error? may be more clear to explicitly write out $\mathcal{O}$ (since constants indep of both $f$ and $n$)}:
  \begin{equation}
  \begin{aligned}
      &||f-\hat{f}_n^{\rm BR}||_{L^\infty} \leq C||f-\hat{f}_n^{\rm BM}||_{L^{\infty}}. 
      \end{aligned}
      \label{eq:diffBdryGP}
  \end{equation}
  for some constant $C$ independent of $f$, $\omega$ and $n$.
\end{theorem}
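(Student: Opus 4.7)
The natural starting point is Theorem \ref{thm:sggFEMkriging}, which identifies $\hat{f}_n^{\rm BR}$ with the sparse-grid FEM interpolant $\mathcal{I}_k^{\rm SP} f$. Since $\hat{f}_n^{\rm BM}$ interpolates $f$ at every sparse-grid point (a standard kriging property) and the operator $\mathcal{I}_k^{\rm SP}$ depends only on grid values, we immediately get the identity $\mathcal{I}_k^{\rm SP}\hat{f}_n^{\rm BM} = \mathcal{I}_k^{\rm SP} f = \hat{f}_n^{\rm BR}$. The plan is to combine this identity with a cell-level stability analysis of the FEM operator applied to the smooth interpolant $\hat{f}_n^{\rm BM}$.

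The first step is the decomposition
\[
f - \hat{f}_n^{\rm BR} = (f - \hat{f}_n^{\rm BM}) + (\hat{f}_n^{\rm BM} - \mathcal{I}_k^{\rm SP}\hat{f}_n^{\rm BM}),
\]
which by the triangle inequality reduces the claim to bounding $\|\hat{f}_n^{\rm BM} - \mathcal{I}_k^{\rm SP}\hat{f}_n^{\rm BM}\|_{L^\infty}$ by a constant multiple of $\|f - \hat{f}_n^{\rm BM}\|_{L^\infty}$. Working cell-by-cell on each finest-level full grid appearing in the combination technique \eqref{eq:combinationtechnique}, I would exploit that $\hat{f}_n^{\rm BM}$ and $\mathcal{I}_k^{\rm SP}\hat{f}_n^{\rm BM}$ both coincide with $f$ at the cell vertices, so their local difference equals the deviation of $\hat{f}_n^{\rm BM}$ from its own multi-linear interpolant on that cell. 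Using the explicit $\sinh$-based representation of the BdryMat\'ern kernel and its product form \eqref{eq:BdryMatern}, I would express this deviation in terms of the kernel-expansion coefficients of $\hat{f}_n^{\rm BM}$ and compare it to the data residual on the cell boundary.

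The central technical claim is a cell-oscillation estimate of the form
\[
\mathrm{osc}_{C}\bigl(\hat{f}_n^{\rm BM}\bigr) \leq C_0 \, \|f - \hat{f}_n^{\rm BM}\|_{L^\infty(C)},
\]
where $C$ ranges over cells in the finest full grids of the combination technique and $C_0$ is independent of $C$, $\omega$, and $n$. This estimate should follow from the minimum-norm variational characterization of $\hat{f}_n^{\rm BM}$ together with the tensor product structure of $k^{\rm BM}_\omega$, which reduces the multidimensional oscillation bound to a one-dimensional computation on each coordinate. Summing over the finite number of component grids in \eqref{eq:combinationtechnique} would then deliver the desired constant $C$.

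The hardest part will be proving the cell-oscillation bound uniformly in $\omega$. For large $\omega$, the basis functions $k^{\rm BM}_\omega(\cdot,\Bx_i)$ become sharply peaked near design points, so standard Taylor-based interpolation bounds (whose constants scale with $\omega$ through $\|D^2\hat{f}_n^{\rm BM}\|_{L^\infty}$) are inadmissible. The remedy is to calibrate the estimate against the data interpolation error $\|f - \hat{f}_n^{\rm BM}\|_{L^\infty}$ itself, which also grows with $\omega$ in tandem with the impulsive character of $\hat{f}_n^{\rm BM}$, so that the $\omega$-dependences cancel. This self-calibration is the crux of keeping $C$ independent of $\omega$.
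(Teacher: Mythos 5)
Your reduction is circular. The identity $\mathcal{I}_k^{\rm SP}\hat{f}_n^{\rm BM}=\mathcal{I}_k^{\rm SP}f=\hat{f}_n^{\rm BR}$ is correct, but it means the term you isolate, $\hat{f}_n^{\rm BM}-\mathcal{I}_k^{\rm SP}\hat{f}_n^{\rm BM}$, is literally $\hat{f}_n^{\rm BM}-\hat{f}_n^{\rm BR}=(f-\hat{f}_n^{\rm BR})-(f-\hat{f}_n^{\rm BM})$; bounding its sup norm by $C\|f-\hat{f}_n^{\rm BM}\|_{L^\infty}$ is, by the triangle inequality, exactly equivalent to the theorem with constant $C+1$. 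The decomposition therefore buys nothing, and the entire burden falls on the ``cell-oscillation estimate,'' for which you give only a plan, not a proof. Worse, that estimate is false with a universal constant: take $f=k^{\rm BM}_\omega(\cdot,\Bx_i)$ for a design point $\Bx_i\in\BX_k^{\rm SP}$ (this lies in $\CalH^{1,c}_{mix}$ by Theorem \ref{thm:4spacesequal}, and has zero boundary data so the mean term vanishes). Kriging reproduces kernel translates at design points, so $\hat{f}_n^{\rm BM}=f$ and the right-hand side of your estimate is zero on every cell, while $f$ is a product of $\sinh$'s rather than a piecewise multilinear function, so the left-hand side is strictly positive. No finite $C_0$ works; this also shows that \eqref{eq:diffBdryGP} must be understood in the asymptotic-order sense in which the paper actually derives and uses it. A further local error: on a cell of a finest component grid, $\mathcal{I}_k^{\rm SP}\hat{f}_n^{\rm BM}$ is an alternating sum of multilinear interpolants over the coarser anisotropic grids in \eqref{eq:combinationtechnique}, not the multilinear interpolant at that cell's vertices, so the claimed identification of the local difference with a single-cell interpolation deviation does not hold for sparse grids (it would for a full grid).

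The paper's proof proceeds quite differently and avoids these traps: it expands both $f-\hat{f}_n^{\rm BR}$ and $f-\hat{f}_n^{\rm BM}$ as Smolyak-type sums over $|\Balpha|\geq k+d$ of tensor products of one-dimensional hierarchical difference operators $\Delta^{\rm BR}_{\alpha_j}$ and $\Delta^{\rm BM}_{\alpha_j}$, then uses the explicit $\sinh$ representation of the one-dimensional BdryMat\'ern interpolant and a Taylor expansion to show $\Delta^{\rm BM}_{\alpha_j}=\Delta^{\rm BR}_{\alpha_j}+\CalO(2^{-2\alpha_j})$, so the two error expansions are comparable term by term. If you want to salvage your outline, the comparison must be made at the level of these hierarchical increments (which degenerate in a matched way on the reproducing counterexample above), not at the level of raw residuals on cells.
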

\noindent The proof of this theorem can be found in Appendix \ref{apdix:pf4diffBdryGP}.

\section{Function spaces: BdryGP and FEM}
\label{sec:funcspace}
Next, we prove the equivalence between the RKHS of the Brownian kernel $k^{\rm BR}$, the constrained Sobolev space with mixed first derivatives, and the weak solution space for FEM.% \crd{[Move within proof] To keep notation simple, we assume in this section that $I^{[0]} = [d]$, i.e., all \textit{left} boundaries are known. These results hold analogously for the more general assumption $I^{[0]} \cup I^{[1]} = [d]$, i.e., at least \textit{one} boundary is known for each variable.}

% and introduce the sparse grid design from a function space perspective.

%In this section, we focus on the analysis of the native space induced by  BB kernel. Results in previous section are not surprising in that Lagrange polynomial interpolation really is related to kriging  as long as we can build the relation between  the Sobolev space with dominating mixed derivative $\mathcal{H}^{1}_{mix}(T^d)$ and the native space induced by BB kernel.

\subsection{Brownian Kernel RKHS and the Constrained Mixed Sobolev Space}

We first establish the equivalence between the Brownian kernel RKHS and the mixed Sobolev space under boundary constraints. Let $\CalH^1_{mix}$ be the Sobolev space of functions with mixed first derivative:
\begin{equation}
\CalH^1_{mix} := \left\{f: D^{\Balpha} f\in L^2(\Real^d), |\Balpha|_\infty\leq 1\right\}.
\label{eq:mixsob}
\end{equation}
where $|\Balpha|_\infty=\max_{j\in[d]}|\alpha_j|$. Further let $\CalH^{1,c}_{mix}$ be the space of functions in $\CalH^1_{mix}$ with boundary value zero on known boundaries:
\begin{equation}
\CalH^{1,c}_{mix}:=\left\{f\in\CalH^1_{mix}:f(\Bx)=0 \text{ if } x_i \leq 0, i \in I^{[0]} \text{ or if } x_i \geq 1, i\in I^{[1]} \right\}.
\label{eq:mixsobbound}
\end{equation}
\noindent We will call $\CalH^{1,c}_{mix}$ the \textit{constrained} mixed Sobolev space.
% \begin{equation}
% \CalH^1_{I^{[1]},mix}:=\left\{f\in\CalH^1_{mix}:f(\Bx)=0, \ \text{if\ } x_i\leq 0,\forall i\in[d]\  \text{or\ }  x_j\geq 1, \forall j\in I^{[1]} \right\}.
% \label{eq:mixsobbound}
% \end{equation}
 
The following proposition shows that the Brownian kernel RKHS $\mathcal{H}_{k^{\rm BR}}$ and the constrained Sobolev space $\CalH^{1,c}_{mix}$ are equivalent function spaces:
\begin{prop}\label{prop:NativeSpaceBB}
The function spaces $\mathcal{H}_{k^{\rm BR}}$ and $\mathcal{H}^{1,c}_{mix}$ are equivalent.
\end{prop}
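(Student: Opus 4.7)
The plan is to establish the identification dimension by dimension and then lift to the full product via the standard tensor–product structure of RKHS.

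First I would carry out the one–dimensional case. By Proposition \ref{prop:BdryMatern1d}, the RKHS of the $1$-d BdryMat\'ern kernel $k_{\omega_j}^{\rm BM}$ is the weighted Sobolev space $\mathcal{H}^1_\omega$ in \eqref{eq:sobolev1} together with the boundary constraint dictated by whether $j\in I^{[0]}$, $j\in I^{[1]}$, or both. Using the normalization $k_j^{\rm BR}=\lim_{\omega_j\to 0^+}k_{\omega_j}^{\rm BM}/\omega_j$ from \eqref{eq:brownian_kernel1d}, the inner product \eqref{eq:materninprod} degenerates in the limit into $\langle f,g\rangle = \int_0^1 f'(x)g'(x)\,dx$, so the corresponding $1$-d native space is exactly $H^1([0,1])$ with the same boundary constraint. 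Alternatively (and more directly), one may verify the reproducing property by hand: for the Brownian bridge case, a short calculation gives $\frac{\partial}{\partial x} k_j^{\rm BR}(x,y) = \mathbf{1}\{x< y\}-y$ (a.e.\ in $x$), so that for any absolutely continuous $f$ with $f(0)=f(1)=0$ and $f'\in L^2$,
\begin{equation*}
\int_0^1 f'(x)\,\partial_x k_j^{\rm BR}(x,y)\,dx \;=\; f(y),
\end{equation*}
with the boundary-term cancellation coming precisely from $f(0)=f(1)=0$. The left-only and right-only cases follow by the same computation using $\partial_x k_j^{\rm BR}=\mathbf{1}\{x<y\}$ and $-\mathbf{1}\{x>y\}$ respectively. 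This pins down each $1$-d RKHS as $\{f\in H^1([0,1]) : f \text{ vanishes on the boundary indicated by } j\in I^{[0]}\cup I^{[1]}\}$, endowed with the $\dot H^1$ inner product.

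Next I would pass from one dimension to $d$ dimensions using the tensor–product formula for RKHS. Because $k^{\rm BR}(\Bx,\By)=\prod_{j=1}^d k_j^{\rm BR}(x_j,y_j)$ is a product kernel, a classical result (e.g.\ Theorem 5.11 of \cite{Wendland10} adapted to products) gives
\begin{equation*}
\mathcal{H}_{k^{\rm BR}} \;=\; \mathcal{H}_{k_1^{\rm BR}}\otimes\cdots\otimes\mathcal{H}_{k_d^{\rm BR}},
\end{equation*}
where the tensor product is taken in the Hilbert–space sense (completion under the cross–norm). Substituting in the $1$-d identifications, the right-hand side is the completion of finite sums of separable products $\prod_{j=1}^d f_j(x_j)$ with each $f_j$ in the appropriately constrained $H^1([0,1])$, equipped with the inner product $\prod_j \int f_j'\,g_j'$.

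Finally I would identify this tensor product with $\mathcal{H}^{1,c}_{mix}$. In one direction, any separable product $\prod_j f_j$ as above satisfies $D^{\Balpha}f\in L^2$ for every $\Balpha$ with $|\Balpha|_\infty\le 1$ and vanishes on the constrained boundaries (since at least one factor vanishes there), so by closure the whole tensor product sits inside $\mathcal{H}^{1,c}_{mix}$. For the reverse direction I would argue density: on $[0,1]^d$ the span of products of $1$-d hat functions (or, equivalently, tensor Fourier/sine bases adapted to the appropriate boundary condition in each coordinate) is dense in $\mathcal{H}^{1,c}_{mix}$ under the norm $\|f\|^2=\sum_{\Balpha:|\Balpha|_\infty\le 1}\|D^{\Balpha}f\|_{L^2}^2$, which up to a constant is equivalent to the cross–norm coming from the tensor product of $\dot H^1$ inner products (this equivalence uses the Poincar\'e inequality on each constrained coordinate). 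Since such basis products lie in the algebraic tensor product of the $1$-d spaces, density gives the reverse inclusion and the two Hilbert spaces coincide with equivalent norms.

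The main obstacle I anticipate is the last step, namely a clean justification that the tensor product of the constrained $\dot H^1([0,1])$ spaces is norm-equivalent to $\mathcal{H}^{1,c}_{mix}$ rather than a strictly larger or smaller space. The delicate point is that $\mathcal{H}^{1,c}_{mix}$ only requires boundary vanishing on the constrained faces (not on every face), and one must check that the Poincar\'e-type inequality applied coordinatewise yields equivalence with the full mixed Sobolev norm uniformly. Once this norm equivalence is established, density of tensor products of $1$-d bases closes the argument.
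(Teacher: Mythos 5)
\begin{rem}
Your proposal is correct and follows essentially the same route as the paper's proof: both rest on identifying the $\mathcal{H}_{k^{\rm BR}}$ inner product with $\int D^{\mathbf{1}}f\,D^{\mathbf{1}}g$ via the reproducing property of the Brownian kernel, and then obtaining norm equivalence with $\mathcal{H}^{1,c}_{mix}$ by applying the one-dimensional Poincar\'e inequality coordinatewise. Your explicit $1$-d verification of $\partial_x k_j^{\rm BR}$ and the tensor-product lift simply spell out what the paper asserts as ``a straightforward extension of Proposition \ref{prop:BdryMatern1d},'' and the delicate point you flag at the end is resolved exactly as the paper does it, since under the standing assumption every coordinate has at least one constrained face so Poincar\'e applies in each direction.
\end{rem}
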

\begin{proof}
By a straight-forward extension of Proposition \ref{prop:BdryMatern1d}, the Brownian kernel $k^{\rm BR}$ satisfies the following equation for any $f\in \mathcal{H}^{1,c}_{mix}$:
\begin{equation}
\label{eq:Brownian_RKHS}
    f(\Bx)=\int_{\Real^d}D^{\bold{1}}f(\bold{s})D^{\bold{1}}k^{\text{BR}}(\Bx,\bold{s})d\bold{s}
\end{equation}
From equation (\ref{eq:Brownian_RKHS}), the inner product of $\mathcal{H}_{k^{\rm BR}}$ is:
$$\langle f,g \rangle_{\mathcal{H}_{k^{\rm BR}}}=\int_{\Real^d}D^{\Balpha}fD^{\Balpha}gd\Bx, \quad f, g \in \mathcal{H}^{1,c}_{mix}.$$
Thus, we only need to show the norm equivalence identity:
\begin{equation*}
    C_1||f||^2_{\CalH_{k^{\text{BR}}}}\leq ||f||^2_{\mathcal{H}^{1,c}_{mix}}\leq C_2||f||^2_{\CalH_{k^{\text{BR}}}}.
\end{equation*}
Obviously, $C_1=1$. By the 1-d Poincar\'e inequality for locally absolutely continuous functions, there exists some constant $C$ such that:
\begin{equation*}
    \int_{\Real^d}[D^{\Balpha} f]^2\leq C \int_{\Real^d}[D^{\bold{1}}f]^2, \quad \text{for any } |\Balpha|_\infty\leq 1 \text{ and any } f\in \mathcal{H}^{1,c}_{mix}.
\end{equation*}
Iteratively applying the Poincaré inequality again, we get:
\begin{equation*}
    ||f||^2_{\mathcal{H}^{1,c}_{mix}}\leq 2^dC||f||^2_{\CalH_{k^{\text{BR}}}}
\end{equation*}
which proves the norm equivalence identity.
\end{proof}

\subsection{Hierarchical Difference Spaces}
% The two equivalent finite-dimensional spaces in theorem \ref{thm:LagrangeKrigingEquivalent} can be decomposed into subspaces of different levels. We call these subspaces hierarchical difference spaces.
Next, we introduce the idea of a hierarchical difference space, which is widely used in FEM analysis. These spaces will allow for a multi-level decomposition of the finite-dimensional function spaces for FEM, and thereby the FEM solution as well.

% In this subsection, we first introduce the concept of hierarchical difference space. To this end, we first briefly define some basic concepts related to $\mathcal{H}^{1,c}_{mix}$. WLOG, we assume only left boundary condition is given on each dimension which implies that $I^{[1]}=\emptyset$. The case that $I^{[1]}\neq\emptyset$ will be discussed at the end of this subsection. \cmtS{repetitive?} Let $\BX_{\Balpha}$ be a grid on $\CalX$ with mesh size $h_{\Balpha}=(h_{\alpha_1},\cdots,h_{\alpha_d})=(2^{-\alpha_1},\cdots,2^{-\alpha_d})$ as before. We first define a function space $\mathcal{V}_{{\Balpha}}$ which consists of d-linear functions:

    % \mathcal{V}_{{\Balpha}}:&=\text{span}\{\phi_{{\Balpha},{\Bbeta}}:\beta_i=1,\cdots,2^{\alpha_i}-1\ \text{if}\ i\in I^{[1]}\ \text{and}\ \beta_i=1,\cdots,2^{\alpha_i}\ \text{if}\ i\not\in I^{[1]} \}\\
    % &=\text{span}\{\phi_{{\Balpha},{\Bbeta}}:\ \beta_i=1,\cdots,2^{\alpha_i} \} 

% he last line is from our assumption and $\phi_{{\Balpha},{\Bbeta}}$

Let us define the finite-dimensional function space $\mathcal{V}_{{\Balpha}}$ for the FEM solution on full grid $\BX_{\Balpha}$:
\begin{equation}
\label{eq:va}
    \mathcal{V}_{{\Balpha}}:=\text{span}\{\phi_{{\Balpha},{\Bbeta}}: \Bx_{\Balpha,\Bbeta}\in \BX_{\Balpha} \}
\end{equation}
where $\phi_{{\Balpha},{\Bbeta}}$ is the earlier hat function with $\phi_{0,1}(x)=x$ and $\phi_{0,2}(x)=1-x$. It is clear that $\mathcal{V}_{\Balpha}$ is the tensor product of these 1-d spaces, i.e., $\mathcal{V}_{{\Balpha}}=\bigotimes_{j=1}^d\mathcal{V}_{\alpha_j}$.

% The following proposition tells us that $\mathcal{V}_{{\Balpha}}$ has a so-called hierarchical decomposition 
% \begin{prop}
% \label{prop:hierdiff}
% \begin{equation}
% \label{eq:subspacedecomp}
%     % \mathcal{V}_\bold{n}:=\mathcal{V}_{(n,\cdots,n)}=\bigoplus_{0\leq|{\Balpha}|_{\infty}\leq n} W_{{\Balpha}},
%     \mathcal{V}_{\Balpha}=\bigoplus_{ \bm{0} \leq \Balpha' \leq \Balpha} W_{\Balpha'},
% \end{equation}
% where $W_{{\Balpha}}=\text{span}\{\phi_{{\Balpha},{\Bbeta}}:{\Bbeta}\in B_{{\Balpha}}\}$ is called a \textit{hierarchical difference space}.
% \end{prop}

Furthermore, $\mathcal{V}_{\Balpha}$ can be represented as the following multi-level subspace decomposition:
\begin{equation}
\label{eq:subspacedecomp}
 % \mathcal{V}_\bold{n}:=\mathcal{V}_{(n,\cdots,n)}=\bigoplus_{0\leq|{\Balpha}|_{\infty}\leq n} W_{{\Balpha}},
 \mathcal{V}_{\Balpha}=\bigoplus_{ \bm{0} \leq \Balpha' \leq \Balpha} W_{\Balpha'},
\end{equation}
where $W_{{\Balpha}}=\text{span}\{\phi_{{\Balpha},{\Bbeta}}:{\Bbeta}\in B_{{\Balpha}}\}$ is called a \textit{hierarchical difference space}.
Further details on these spaces can be found in \cite{Yse86} and \cite{Bungartz04}. We note that, in order to incorporate partial boundaries, the hierarchical difference space used here is slightly modified from that in the literature. In the case of full boundaries (i.e., $I^{[0]} = I^{[1]} = [d]$), the two spaces are equivalent.

The subspace decomposition \eqref{eq:subspacedecomp} allows for the following useful multi-level decomposition of the FEM solution. Consider first the FEM solution $\mathcal{I}_{\Balpha}f$ on the full grid $\BX_{\Balpha}$. From equation \eqref{eq:subspacedecomp}, $\mathcal{I}_{\Balpha}f$ can be decomposed as:
 \begin{equation}\label{eq:hieraDiffSpa_proj}
     \mathcal{I}_{\Balpha}f = \sum_{ \bm{0} \leq \Balpha' \leq \Balpha}f_{\Balpha'}(\Bx).
 \end{equation}
 Here, $f_{{\Balpha}}$ is the projection of $f$ on $W_{\Balpha}$, given by:
  \begin{equation}\label{eq:Proj_W_a}
     f_{{\Balpha}}(\Bx)=\sum_{{\Bbeta}\in B_{\Balpha}}c_{{\Balpha},{\Bbeta}}\phi_{{\Balpha},{\Bbeta}}(\Bx),
 \end{equation}
 and the constant $c_{{\Balpha},{\Bbeta}}$ is known as the \textit{hierarchical surplus}, defined as:
 \begin{equation}\label{eq:HierarchicalSurplus}
     c_{{\Balpha},{\Bbeta}}=\bigg(\prod_{j=1}^dA_{\alpha_j,\beta_j}\bigg)f(\bold{X}_{\Balpha}), \quad      A_{\alpha_j,\beta_j}=\begin{cases}
     [-\frac{1}{2}\ \  1 \ \ -\frac{1}{2}]& \text{if}\ \alpha_j\geq 1\\
     [  -1\ \  1] & \text{if}\ \alpha_j=0
     \end{cases}.
 \end{equation}
% which gives the coefficients for a linear combination of nodal values
% of its argument $f$
Here, $\prod_{j=1}^dA_{\alpha_j,\beta_j}$ denotes the Kronecker product of vectors $A_{\alpha_j,\beta_j}$; this is the standard
stencil notation used in numerical analysis. Similarly, the sparse grid FEM solution $\mathcal{I}_k^{\rm SP}$ can be decomposed as:
\begin{equation}
\label{eq:hieraDiffSpa_proj2}
\mathcal{I}_k^{\rm SP} f = \sum_{0 \leq |\Balpha| \leq k+d-1} f_{\Balpha}(\Bx).
\end{equation}
\noindent This decomposition, along with the equivalences in Section \ref{sec:fem}, provides the basis for proving improved convergence rates for BdryGP.

% \crd{(Simon: need?) The definition of hierarchical subspace is similar to the one used in numerical computation except that only left boundary condition is given in our case. When $I^{[1]}=[d]$, then two hierarchical subspaces  are equivalent. The only difference between the two subspaces is that we have some extra subspaces $W_{\Balpha}$ when $\exists \alpha_i=0$, that is also the reason why we can assume $I^{[1]}=\emptyset$.} 
% For more concrete introduction of hierarchical difference spaces, please refer to \cite{Yse86},\cite{Bungartz04} or \cite{Garcke12}.
 
 \subsection{Brownian Kernel RKHS and Hierarchical Difference Spaces}
%  We can see in theorem \ref{thm:LagrangeKrigingEquivalent} that, conditioned on observations from a full grid design  $\BX_{\bold{n}}$ with mesh size $h_{\bold{n}}$ where $\bold{n}:=(n, \cdots,n)$, the BLUE of a zero-mean GP of Brownian kernel can be written as:
%  \begin{equation*}
%      \hat{f}(\Bx)=\sum_{\Bx_{\bold{n},{\Bbeta}}\in \BX_{\bold{n}}}f(\Bx_{\bold{n},{\Bbeta}})\phi_{\bold{n},{\Bbeta}}(\Bx)
%  \end{equation*}
%  which is very similar to the projection $\pi_\nu[f]$ defined in equation (\ref{eq:hieraDiffSpa_proj}).  In this subsection, we show that they are equal:
%  \begin{equation}
%      \pi_{\mathcal{V}_\bold{n}}[f]=\hat{f}
%  \end{equation}
% and then we can show the equivalence of three function spaces.
 
Consider now the limiting function space $\mathcal{V}$:
\begin{equation}\label{eq:hierarchicalInterpolation2Kriging}
     \mathcal{V}=\bigotimes_{j=1}^d\overline{\lim_{\alpha_j\to\infty}\mathcal{V}_{\alpha_j}}.
\end{equation}
\noindent In other words, $\mathcal{V}$ is the tensor product of the limiting 1-d finite-dimensional spaces in equation \eqref{eq:va}. The space $\mathcal{V}$ can be viewed as the weak solution space on which FEM aims to solve the PDE system \eqref{eq:pde} in the limit.

% The reason that we are interested in $\mathcal{V}$ is because we can decompose and function $f\in\mathcal{V}$ by its projection on $W_{\Balpha'}$ according to equation (\ref{eq:subspacedecomp}). This is similar to principle analysis in that the projection of $f$ on $W_{\Balpha'}$ contains less information of $f$ if $\Balpha'$ is very large.  \cmtS{I will add a sentence or two describing why we need $\mathcal{V}$.}

The following proposition shows the equivalence of $\mathcal{V}$  to the native space of $k^{\text{BR}}$:

%Let $k$ denote the BB kernel. The goal of this subsection  is to show  the following spaces  are equivalent:
%\begin{equation}\label{eq:SpacesEquivalent}
%    \mathcal{H}_k\sim\mathcal{H}^1_{0,mix}\sim\overline{\mathcal{V}}.
%\end{equation}
%We now generalize equation ($\ref{eq:hierarchicalInterpolation2Kriging }$) to the follow theorem.
\begin{prop}\label{thm:hierarchicalInterpolation2Kriging}
% \crd{(Simon: new notation?)Let $k$ be the Brownian kernel defined in equation (\ref{eq:brownian_kernel})
% Let $\hat{f}_\bold{n}$ be the posterior mean defined in equation (\ref{eq:LagrangeKrigingEquivalent}).} Let $\pi_{\mathcal{V}_\bold{n}}f$ be the projection of $f$ on the space $\mathcal{V}_{\bold{n}}$. Then
% \begin{equation*}
%     \hat{f}_\bold{n}=\pi_{\mathcal{V}_{\bold{n}}}[f].
% \end{equation*}
% \cmtS{Mention later in proof to Theorem \ref{thm:4spacesequal}?} As a result, the following spaces are equivalent for any $\bold{n}$:
% $$\mathcal{V}_{\bold{n}}=\{k^{\text{BR}}(\Bx_{\bold{n},\Bbeta},\cdot):\Bx_{\bold{n},\Bbeta}\in\BX_{\bold{n}}\}$$
% The function spaces $\mathcal{V}_{\Balpha}$ is equivalent  to the following finite dimensional finite dimensional function space:
% $$\text{span}\{k^{\text{BR}}(\Bx_{\Balpha,\Bbeta},\cdot):\Bx_{\Balpha,\Bbeta}\in \BX_{\Balpha}\}$$
% for any $\Balpha\in\mathbb{N}^d$.
The function spaces $\mathcal{V}$ and $\mathcal{H}_{k^{\rm BR}}$ are equivalent.
\label{thm:hierdiff}
\end{prop}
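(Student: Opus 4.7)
The plan is to combine Proposition \ref{prop:NativeSpaceBB}, which identifies $\mathcal{H}_{k^{\rm BR}}$ with the constrained mixed Sobolev space $\mathcal{H}^{1,c}_{mix}$, with a direct identification of $\mathcal{V}$ as $\mathcal{H}^{1,c}_{mix}$. This reduces the proposition to showing $\mathcal{V} = \mathcal{H}^{1,c}_{mix}$ as normed spaces with equivalent norms, which I would establish by proving the two inclusions separately.

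For $\mathcal{V} \subseteq \mathcal{H}^{1,c}_{mix}$, I would examine the hat basis in \eqref{eq:va}. By the index set \eqref{eq:indexset}, every $\phi_{\Balpha,\Bbeta}$ is a tensor product of univariate piecewise-linear hat functions, each of which vanishes at the constrained endpoint of $[0,1]$ for the corresponding coordinate, since $\beta_j \geq 1$ when $j \in I^{[0]}$ and $\beta_j \leq 2^{\alpha_j}-1$ when $j \in I^{[1]}$. Such tensor products lie in $\mathcal{H}^{1,c}_{mix}$, and the closure/tensor-product operation in \eqref{eq:hierarchicalInterpolation2Kriging}, taken coordinate-wise in the $H^1([0,1])$ norm, preserves both the mixed first-derivative regularity and the vanishing trace on constrained boundaries. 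Hence $\mathcal{V}\subseteq \mathcal{H}^{1,c}_{mix}$.

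For the reverse inclusion $\mathcal{H}^{1,c}_{mix} \subseteq \mathcal{V}$, I would invoke the density of tensor-product piecewise-linear hat functions in mixed Sobolev spaces. Given any $f\in\mathcal{H}^{1,c}_{mix}$, the FEM interpolant $\mathcal{I}_{\Balpha}f \in \mathcal{V}_{\Balpha}$ from \eqref{eq:LagrangePoly} admits the hierarchical decomposition \eqref{eq:hieraDiffSpa_proj}–\eqref{eq:HierarchicalSurplus}, and satisfies
\[
\|f-\mathcal{I}_{\Balpha}f\|_{\mathcal{H}^{1,c}_{mix}} \longrightarrow 0 \qquad \text{as } \min_j \alpha_j \to \infty.
\]
This is the standard approximation result for tensor-product linear finite elements on functions with bounded mixed first derivatives, proven via Bernstein-type estimates on the hierarchical surpluses $c_{\Balpha,\Bbeta}$ combined with the subspace decomposition \eqref{eq:subspacedecomp} (see, e.g., \cite{Bungartz04}). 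Since each $\mathcal{I}_{\Balpha}f$ lies in $\bigotimes_j \mathcal{V}_{\alpha_j}$, taking $\min_j \alpha_j \to \infty$ places $f$ in the coordinate-wise $H^1$-closure defining $\mathcal{V}$, giving the reverse inclusion.

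The main obstacle is pinning down the topology of the closure in \eqref{eq:hierarchicalInterpolation2Kriging}: taking the closure only in $L^2$ would make $\mathcal{V}$ larger than $\mathcal{H}^{1,c}_{mix}$, while taking it in $H^1([0,1])$ coordinate-wise produces exactly the completion in the mixed Sobolev norm $\sum_{|\Balpha|_\infty\leq 1}\|D^{\Balpha}\cdot\|_{L^2}^2$ and makes both inclusions go through. Once this convention is fixed, the equivalence of norms $\|\cdot\|_{\mathcal{V}} \simeq \|\cdot\|_{\mathcal{H}^{1,c}_{mix}} \simeq \|\cdot\|_{\mathcal{H}_{k^{\rm BR}}}$ follows directly, with the last equivalence borrowed from Proposition \ref{prop:NativeSpaceBB}.
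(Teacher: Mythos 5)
Your proof is essentially correct, but it takes a genuinely different route from the paper. The paper's own argument never passes through the Sobolev space: it uses Lemma \ref{lem:finitespace} to identify the finite-dimensional span $\{k^{\rm BR}(\Bx_{\Balpha,\Bbeta},\cdot):\Bx_{\Balpha,\Bbeta}\in\BX_{\Balpha}\}$ with $\mathcal{V}_{\Balpha}$ directly (so that the corresponding projectors coincide), and then identifies $\mathcal{H}_{k^{\rm BR}}$ with $\mathcal{V}$ by taking the completion of these coinciding finite-dimensional spaces as $\alpha_j\to\infty$ over the dense dyadic grids. You instead factor the equivalence through $\mathcal{H}^{1,c}_{mix}$, reusing Proposition \ref{prop:NativeSpaceBB} on one side and classical FEM density on the other. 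Your route buys independence from Lemma \ref{lem:finitespace} and makes explicit a point the paper glosses over, namely that the closure in \eqref{eq:hierarchicalInterpolation2Kriging} must be taken in the (coordinate-wise) $H^1$ topology rather than $L^2$ for the statement to be true; the paper's route buys a shorter argument and, via the coinciding projectors, the additional fact that the FEM and kernel interpolators agree level by level, which is what later sections actually use. One soft spot in your reverse inclusion: the claim $\|f-\mathcal{I}_{\Balpha}f\|_{\mathcal{H}^{1,c}_{mix}}\to 0$ justified by ``Bernstein-type estimates on the hierarchical surpluses'' implicitly requires mixed \emph{second} derivatives (as in Lemma 3.5 of \cite{Bungartz04}); for $f$ with only mixed first derivatives you should instead argue by density, i.e.\ the tensorized nodal interpolation operators are uniformly bounded on $\mathcal{H}^{1}_{mix}$ (each univariate piecewise-linear interpolant is a contraction in the $H^1$ seminorm) and converge to the identity on the dense subset of smooth functions vanishing on the constrained boundaries, hence strongly on all of $\mathcal{H}^{1,c}_{mix}$. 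With that repair the argument is complete.
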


\noindent The proof of this proposition requires the following lemma, which shows that the finite-dimensional RKHS of $k^{\rm BR}$ on grid $\BX_{\Balpha}$ is equivalent to $\mathcal{V}_{\Balpha}$.

\begin{lem}
The finite-dimensional spaces $\mathcal{V}_{\Balpha}$ and $\{k^{\rm BR}(\Bx_{\Balpha,\Bbeta},\cdot):\Bx_{\Balpha,\Bbeta}\in\BX_{\Balpha}\}$ are equivalent for any ${\Balpha}\in\mathbb{N}^d$.
\label{lem:finitespace}
\end{lem}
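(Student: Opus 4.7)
The plan is to exploit the equivalence proven in Theorem \ref{thm:LagrangeKrigingEquivalent} between the FEM interpolant (which lives in $\mathcal{V}_{\Balpha}$) and the GP predictor with Brownian kernel (which lives in $\text{span}\{k^{\rm BR}(\Bx_{\Balpha,\Bbeta},\cdot) : \Bx_{\Balpha,\Bbeta} \in \BX_{\Balpha}\}$). Both candidate spaces are finite-dimensional of cardinality $n = |\BX_{\Balpha}|$, so showing one containment suffices to establish equality.

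First I would verify the dimension count. The hat functions $\phi_{\Balpha,\Bbeta}$ are cardinal at the grid ($\phi_{\Balpha,\Bbeta}(\Bx_{\Balpha,\Bbeta'}) = \delta_{\Bbeta,\Bbeta'}$), hence linearly independent, giving $\dim \mathcal{V}_{\Balpha} = n$. On the other side, because the Brownian kernel is strictly positive definite on distinct points (its Gram matrix $k^{\rm BR}(\BX_{\Balpha},\BX_{\Balpha})$ is invertible, as already used in Theorem \ref{thm:LagrangeKrigingEquivalent}), the kernel sections $\{k^{\rm BR}(\Bx_{\Balpha,\Bbeta},\cdot)\}_{\Bbeta}$ are linearly independent, so the kernel span also has dimension $n$.

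Next I would argue that $\mathcal{V}_{\Balpha} \subset \mathcal{H}^{1,c}_{mix}$, so that Theorem \ref{thm:LagrangeKrigingEquivalent} is applicable to any $g \in \mathcal{V}_{\Balpha}$. Each hat function is piecewise linear with mixed first partials being products of step functions, which lie in $L^2$; furthermore, the index set $\mathcal{B}_{\Balpha}$ in \eqref{eq:indexset} deliberately omits the boundary nodes on sides carrying known Dirichlet data, so every element of $\mathcal{V}_{\Balpha}$ vanishes on those boundaries. Analogously, the Brownian kernel sections $k^{\rm BR}(\Bx_{\Balpha,\Bbeta},\cdot)$ vanish on the appropriate faces (since each 1-d factor in \eqref{eq:brownian_kernel1d} vanishes on the constrained endpoint), placing the kernel span inside $\mathcal{H}^{1,c}_{mix}$ as well.

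The main step is the containment $\mathcal{V}_{\Balpha} \subset \text{span}\{k^{\rm BR}(\Bx_{\Balpha,\Bbeta},\cdot)\}$. Given $g \in \mathcal{V}_{\Balpha}$, expand $g = \sum_{\Bbeta} y_{\Bbeta}\phi_{\Balpha,\Bbeta}$ where $y_{\Bbeta} = g(\Bx_{\Balpha,\Bbeta})$ by the cardinal property. Since $g \in \mathcal{H}^{1,c}_{mix}$, Theorem \ref{thm:LagrangeKrigingEquivalent} applies (with zero mean, after absorbing the boundary-respecting mean as in the proof of Theorem \ref{thm:LagrangeKrigingEquivalent}), yielding
\begin{equation*}
\mathcal{I}_{\Balpha} g(\cdot) = k^{\rm BR}(\cdot,\BX_{\Balpha})\,[k^{\rm BR}(\BX_{\Balpha},\BX_{\Balpha})]^{-1}\,g(\BX_{\Balpha}).
\end{equation*}
The left-hand side equals $g$ itself, because $\mathcal{I}_{\Balpha}$ reproduces elements of $\mathcal{V}_{\Balpha}$ (hat interpolation is the identity on its own span). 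The right-hand side is a linear combination of the kernel sections $k^{\rm BR}(\Bx_{\Balpha,\Bbeta},\cdot)$, proving $g$ belongs to the kernel span. Combined with the dimension match from step one, this gives the claimed equivalence. The only subtlety, and the step that requires care, is keeping track of boundary compatibility when invoking Theorem \ref{thm:LagrangeKrigingEquivalent}; once one observes that both bases live in $\mathcal{H}^{1,c}_{mix}$, the argument reduces to a clean application of the FEM--Kriging identity.
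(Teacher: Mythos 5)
Your proof is correct, but it takes a genuinely different route from the paper's. The paper's argument works at the level of \emph{operators}: it writes the hierarchical projector $\pi_{\mathcal{V}_{\Balpha}}$ as a tensor product of one-dimensional hierarchical-surplus operators $\bigtriangleup_i$, reduces to the one-dimensional case, and then shows by induction on the grid level that $\pi_{\mathcal{V}_{\alpha}}[f]$ coincides with the Brownian-kernel posterior mean $\hat{f}^{\rm BR}$, explicitly matching the surplus coefficients $c_{\Balpha,\Bbeta}$ against the kriging weights. You instead argue at the level of \emph{spaces}: a dimension count ($\dim\mathcal{V}_{\Balpha}=n$ by cardinality of the hat basis, $\dim\mathrm{span}\{k^{\rm BR}(\Bx_{\Balpha,\Bbeta},\cdot)\}=n$ by invertibility of the Gram matrix) plus a single containment obtained by applying Theorem \ref{thm:LagrangeKrigingEquivalent} to each $g\in\mathcal{V}_{\Balpha}$ and using that $\mathcal{I}_{\Balpha}$ reproduces its own range. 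This is legitimate and non-circular, since Theorem \ref{thm:LagrangeKrigingEquivalent} is proved without Lemma \ref{lem:finitespace}, and your verification that $\mathcal{V}_{\Balpha}\subset\CalH^{1,c}_{mix}$ (so the theorem applies, with $\mu\equiv 0$) is exactly the point that needs checking. Your route is shorter and cleaner for the lemma as stated; what the paper's longer induction buys is the explicit identification of the kriging predictor with the hierarchical-surplus expansion $\sum_{\Bbeta}c_{\Balpha,\Bbeta}\phi_{\Balpha,\Bbeta}$, which is the representation actually exploited later (e.g., in Lemma \ref{prop:BoundofSurplus} and Theorem \ref{thm:L1error}), whereas your argument only certifies that the two spans coincide. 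Since the downstream use in Proposition \ref{thm:hierdiff} is phrased in terms of equality of projectors, note that this stronger statement is supplied by Theorem \ref{thm:LagrangeKrigingEquivalent} itself once the spaces are known to agree, so nothing is lost.
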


\noindent The proof of Lemma \ref{lem:finitespace} is given in Appendix \ref{apdix:pf4hierdiff}.

\begin{proof}[Proof (Proposition \ref{thm:hierdiff})]
From Lemma \ref{lem:finitespace}, we know that the projectors to $\mathcal{V}_{\Balpha}$ and $\{k^{\text{BR}}(\Bx_{\Balpha,\Bbeta},\cdot):\Bx_{\Balpha,\Bbeta}\in\BX_{\Balpha}\}$ are equal for any ${\Balpha}\in\mathbb{N}^d$. Since $\mathcal{H}_{k^{\rm BR}}$ is the completion of the space $\lim_{\{\alpha_j\to\infty\}_{j=1}^d}\{k^{\text{BR}}(\Bx_{\Balpha,\Bbeta},\cdot):\Bx_{\Balpha,\Bbeta}\in\BX_{\Balpha}\}$, it is therefore the function space defined in equation \eqref{eq:hierarchicalInterpolation2Kriging}.
\end{proof}

% Theorem \ref{thm:LagrangeKrigingEquivalent} and theorem \ref{thm:hierarchicalInterpolation2Kriging} have shown that
% $$\mathcal{V}_{\Balpha}=\{\phi_{\Balpha,\Bbeta}:\Bx_{\Balpha,\Bbeta}\in\BX_{\Balpha}\}=\{k^{\text{BR}}(\Bx_{\Balpha,\Bbeta},\cdot):\Bx_{\Balpha,\Bbeta}\in\BX_{\Balpha}\}$$
% for any $\Balpha \in \NatInt^d$ because the projectors of these spaces are equal. When $\bold{n}\to \infty$, we have the following theorem, which tells that the RKHS's of $k^{\text{BR}}$ and $k^{\text{BM}}_\omega$, $\CalH^{1,c}_{mix}$ and $\mathcal{V}$ are in fact equivalent function spaces.

Combining Propositions \ref{prop:NativeSpaceBB} and \ref{thm:hierdiff}, we can then prove the desired equivalence between the two RKHSs $\CalH_{k^{\text{BR}}}$ and $\CalH_{k^{\text{BM}}_\omega}$, the constrained mixed Sobolev space $\CalH^{1,c}_{mix}$, and the weak solution space $\mathcal{V}$:

\begin{theorem}
$\CalH_{k^{\text{BR}}}$, $\CalH_{k^{\text{BM}}_\omega}$, $\CalH^{1,c}_{mix}$ and $\mathcal{V}$ are equivalent function spaces.
\label{thm:4spacesequal}
\end{theorem}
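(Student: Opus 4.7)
The plan is to chain together the equivalences already established and to fill in the remaining link, namely $\CalH_{k^{\text{BM}}_\omega} \equiv \CalH^{1,c}_{mix}$. Propositions \ref{prop:NativeSpaceBB} and \ref{thm:hierdiff} already give us the triangle $\CalH_{k^{\text{BR}}} \equiv \CalH^{1,c}_{mix} \equiv \mathcal{V}$ (the second equivalence being routed through $\CalH_{k^{\text{BR}}}$). So all that remains is to insert $\CalH_{k^{\text{BM}}_\omega}$ into the chain.

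For this last step, I would exploit the product structure of the BdryMat\'ern kernel in \eqref{eq:BdryMatern}. By a standard RKHS tensor-product fact (see, e.g., Aronszajn's theorem), since $k^{\rm BM}_\omega(\Bx,\By) = \sigma^2 \prod_{j=1}^d k^{\rm BM}_{\omega_j}(x_j,y_j)$, its native space is the Hilbert tensor product of the 1-d native spaces $\CalH_{k^{\rm BM}_{\omega_j}}$. Proposition \ref{prop:BdryMatern1d} identifies each $\CalH_{k^{\rm BM}_{\omega_j}}$ with the weighted first-order Sobolev space $\mathcal{H}^1_{\omega_j}$ from \eqref{eq:sobolev1}, supplemented with the Dirichlet constraint $\{f(0) = 0\}$, $\{f(1)=0\}$, or $\{f(0)=f(1)=0\}$ depending on which of $I^{[0]}, I^{[1]}$ contain $j$. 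Since $\omega_j > 0$ is a fixed positive constant, the norm $\omega_j \|f\|_{L^2}^2 + \omega_j^{-1}\|Df\|_{L^2}^2$ is equivalent (as norms) to the unweighted $\CalH^1$-norm, so $\mathcal{H}^1_{\omega_j}$ coincides with $\CalH^1$ as a set. Taking tensor products then yields a function space equal to the mixed first-order Sobolev space with the boundary constraints prescribed by $I^{[0]}, I^{[1]}$, which is precisely $\CalH^{1,c}_{mix}$ as defined in \eqref{eq:mixsobbound}.

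Concretely, I would structure the proof as:
\begin{enumerate}
\item Invoke Proposition \ref{prop:NativeSpaceBB} to assert $\CalH_{k^{\rm BR}} \equiv \CalH^{1,c}_{mix}$.
\item Invoke Proposition \ref{thm:hierdiff} to assert $\mathcal{V} \equiv \CalH_{k^{\rm BR}}$.
\item Argue that $\CalH_{k^{\rm BM}_\omega} \equiv \bigotimes_{j=1}^d \CalH_{k^{\rm BM}_{\omega_j}}$ by the product structure of $k^{\rm BM}_\omega$, then apply Proposition \ref{prop:BdryMatern1d} in each coordinate, absorb the wavelength weights into the equivalent unweighted $\CalH^1$-norm, and conclude $\CalH_{k^{\rm BM}_\omega} \equiv \CalH^{1,c}_{mix}$.
\end{enumerate}
Combining the three equivalences by transitivity gives the claim.

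The main obstacle, I expect, is step (3): carefully justifying that the tensor product of the 1-d constrained weighted Sobolev spaces really yields the constrained mixed Sobolev space $\CalH^{1,c}_{mix}$, as opposed to some larger or smaller space. The key observation that makes this go through cleanly is that the mixed Sobolev norm $\sum_{|\Balpha|_\infty \leq 1} \|D^{\Balpha} f\|_{L^2}^2$ expands as a tensor-product sum of 1-d $\CalH^1$-norms, which matches the Hilbert tensor-product norm of the 1-d spaces up to multiplicative constants depending on $\omega_1,\ldots,\omega_d$ and $\sigma^2$. The Dirichlet constraints are inherited coordinatewise from the 1-d factors, which is exactly how $\CalH^{1,c}_{mix}$ is defined in \eqref{eq:mixsobbound}. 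Everything else is bookkeeping.
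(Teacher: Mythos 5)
Your proposal is correct, and steps (1) and (2) coincide exactly with the paper's proof, which also just cites Propositions \ref{prop:NativeSpaceBB} and \ref{thm:hierdiff} and reduces everything to the single new link involving $\CalH_{k^{\text{BM}}_\omega}$. Where you diverge is in how that link is supplied. The paper compares $\CalH_{k^{\text{BM}}_\omega}$ directly with $\CalH_{k^{\text{BR}}}$ by ``following the same reasoning'' as Proposition \ref{prop:NativeSpaceBB}: it writes both RKHS inner products via the differential-operator/Green's-function representation (so that the BdryMat\'ern norm is a weighted sum of mixed-derivative $L^2$ norms) and then invokes the Poincar\'e inequality to get a two-sided norm equivalence for any fixed $\omega \in (0,\infty)$. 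You instead factor $\CalH_{k^{\text{BM}}_\omega}$ as a Hilbert tensor product of the 1-d native spaces via Aronszajn's theorem, identify each factor with a constrained weighted $\CalH^1$ via Proposition \ref{prop:BdryMatern1d}, absorb the weights, and reassemble $\CalH^{1,c}_{mix}$ coordinatewise. Both routes work; yours is arguably more self-contained since it leans on a result the paper has actually proved (Proposition \ref{prop:BdryMatern1d}) rather than on an argument sketched by analogy, while the paper's route avoids the one point where your argument needs extra care: the identification $\bigotimes_{j=1}^d \CalH^1_{\omega_j,c} = \CalH^{1,c}_{mix}$ requires showing that finite sums of separable functions obeying the coordinatewise Dirichlet constraints are dense in the constrained mixed Sobolev space, a density fact you label ``bookkeeping'' but which is essentially the content of the paper's Proposition \ref{thm:hierdiff} machinery and should be stated explicitly. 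Also note that Proposition \ref{prop:BdryMatern1d} covers only the three boundary-constrained cases of \eqref{eq:BdryMatern1d}; for any coordinate in $\overline{I^{[0]}} \cap \overline{I^{[1]}}$ you would need the classical characterization of the Mat\'ern-$1/2$ native space as $\CalH^1$, though this is moot under the standing assumption $I^{[0]} \cup I^{[1]} = [d]$ used throughout the convergence analysis.
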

\begin{proof}
Proposition \ref{prop:NativeSpaceBB} shows the equivalence between the RKHS $\CalH_{k^{\text{BR}}}$ and the constrained mixed Sobolev space $\CalH^{1,c}_{mix}$. Following the same reasoning (i.e., via the norm equivalence identity), the equivalence between the two RKHSs $\CalH_{k^{\text{BR}}}$ and $\CalH_{k^{\text{BM}}_\omega}$ can also be shown for any $\omega\in(0,\infty)$. The equivalence between $\mathcal{H}_{k^{\rm BR}}$ and $\mathcal{V}$ (Proposition \ref{thm:hierdiff}) then completes the proof.

% Because $\CalH^{1,c}_{mix}=\CalH_{k^{\text{BM}}_\bold{1}}$, we only need to show the equivalence between $\mathcal{V}$ and $\CalH_{k^{\text{BR}}}$, we then can finish the proof.

% From theorem \ref{thm:hierarchicalInterpolation2Kriging}, we can easily derive that  $\mathcal{V}= \CalH_{k^{\text{BR}}}$ by letting $n_i\to\infty$ for $i=1,\cdots,d$.

% As a result, the following finite-dimensional function spaces are equivalent for any $\Balpha\in\NatInt^d$:
% \begin{equation}
% \textup{span}\{\phi_{\Balpha,\Bbeta}(\cdot):\Bx_{\Balpha,\Bbeta}\in\BX_{\Balpha}\}=\textup{span}\{k^{\textup{BR}}(\Bx_{\Balpha,\Bbeta},\cdot):\Bx_{\Balpha,\Bbeta}\in\BX_{\Balpha}\}
% \label{eq:finitespace}
% \end{equation}

\end{proof}

% From the above theorem, we can decompose the RKHS of  $\CalH_{k^{\text{BR}}}$ into hierarchical difference subspaces of different levels:
% $$ \CalH_{k^{\text{BR}}}=\bigoplus_{{\Balpha}\in\mathbb{Z}^d_{\geq0}} W_{{\Balpha}}.$$

This function space equivalence allows for the decomposition of the RKHS $\CalH_{k^{\text{BR}}}$ (and its corresponding interpolator) into hierarchical difference subspaces (and its corresponding projections) of different levels. This decomposition plays a key role in proving the following convergence rates.

\section{Convergence rates for BdryGP}
\label{sec:convrates}
With these equivalences in hand, we now prove the desired rates for BdryGP under sparse grids. All of these rates assume that $I^{[0]} \cup I^{[1]} = [d]$, i.e., at least one boundary is known for each of the $d$ variables. Of course, the same rates also hold in the full boundary setting of $I^{[0]} = I^{[1]} = [d]$, where \textit{all} boundaries of $f$ are known.

\subsection{$L^p$ and $L^\infty$ Convergence Rates}
\label{sec:lprate}
Suppose $f$ is a \textit{deterministic} function from the constrained mixed Sobolev space $\mathcal{H}^{1,c}_{mix}$. Under boundary information, the following theorem proves the $L^p$ and $L^\infty$ convergence rates for the proposed BdryGP (with BdryMat\'ern kernel $k^{\rm BM}_\omega$):

\begin{theorem}\label{thm:L1error}
Suppose $I^{[0]} \cup I^{[1]} = [d]$, and assume the sparse grid design $\BX_k^{\rm SP}$ with $n = |\BX_k^{\rm SP}|$ points. For any $f \in \mathcal{H}^{1,c}_{mix}$ and any wavelength $\omega$, the BdryGP has an $L^p$ convergence rate of:
\begin{equation}\label{eq:L1error}
    ||f-\hat{f}_n^{\rm BM}||_{L^p}=\smallO(n^{-1}), \quad 1 \leq p < \infty
\end{equation}
and an $L^\infty$ convergence rate of:
\begin{equation}\label{eq:Linferror}
||f-\hat{f}_n^{\rm BM}||_{L^\infty}=\mathcal{O}(n^{-1}[\log n]^{2(d-1)}).
\end{equation}
\end{theorem}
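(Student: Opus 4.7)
The plan is to chain the structural results from Sections \ref{sec:fem} and \ref{sec:funcspace} to reduce the BdryGP error to a sparse-grid FEM interpolation error, then apply hierarchical subspace estimates. First, I would invoke Theorem \ref{thm:diffBdryGP} to control $\|f-\hat{f}_n^{\rm BM}\|_{L^\infty}$ by a constant multiple of the Brownian-kernel error $\|f-\hat{f}_n^{\rm BR}\|_{L^\infty}$, uniformly in the wavelength $\omega$. Theorem \ref{thm:sggFEMkriging} then identifies this predictor with the sparse-grid FEM interpolant: $\hat{f}_n^{\rm BR}=\mathcal{I}_k^{\rm SP}f$. Thus the task reduces to bounding $\|f-\mathcal{I}_k^{\rm SP}f\|_{L^p}$ and $\|f-\mathcal{I}_k^{\rm SP}f\|_{L^\infty}$ for a deterministic $f\in\mathcal{H}^{1,c}_{mix}$.

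Second, I would exploit the hierarchical decomposition. By Theorem \ref{thm:4spacesequal}, $f\in\mathcal{H}^{1,c}_{mix}=\mathcal{V}$, so the multilevel splitting \eqref{eq:subspacedecomp} yields a convergent expansion $f=\sum_{\Balpha\geq\bm{0}}f_{\Balpha}$ with $f_{\Balpha}\in W_{\Balpha}$ given by the hierarchical surpluses \eqref{eq:Proj_W_a}--\eqref{eq:HierarchicalSurplus}. Since the sparse-grid interpolant retains precisely those terms with $|\Balpha|\leq k+d-1$ by \eqref{eq:hieraDiffSpa_proj2}, the error is the tail sum
\begin{equation*}
f-\mathcal{I}_k^{\rm SP}f \;=\; \sum_{|\Balpha|>k+d-1} f_{\Balpha}.
\end{equation*}

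Third, I would estimate each $f_{\Balpha}$ individually. Rewriting the surplus $c_{\Balpha,\Bbeta}$ as an iterated second-difference of $f$ on a cube of side $2^{-\alpha_j}$ in the $j$th direction, and expressing these finite differences as integrals of $D^{\bm{1}}f$, one obtains a localized bound $\|f_{\Balpha}\|_{L^2}^{2}\lesssim 4^{-|\Balpha|_1}\|D^{\bm{1}}f\|_{L^2(Q_{\Balpha})}^{2}$ on a patch $Q_{\Balpha}$, together with the global Parseval-type identity $\sum_{\Balpha}4^{|\Balpha|_1}\|f_{\Balpha}\|_{L^2}^{2}\lesssim\|f\|^{2}_{\mathcal{H}^{1,c}_{mix}}$. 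The $L^\infty$ rate then follows by combining $\|\phi_{\Balpha,\Bbeta}\|_{L^\infty}=1$ with the surplus estimate and counting the $\mathcal{O}(m^{d-1})$ multi-indices at each level $|\Balpha|_1=m$, producing the $[\log n]^{2(d-1)}$ factor. The conversion $2^{-k}=\mathcal{O}(n^{-1}[\log n]^{d-1})$ comes from the standard sparse-grid cardinality estimate $n=|\BX_k^{\rm SP}|=\mathcal{O}(2^{k}k^{d-1})$.

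The main obstacle will be sharpening the $L^p$ bound to $\smallO(n^{-1})$ rather than a naive $\mathcal{O}(n^{-1})$ with additional log factors. Under the weak first-order regularity $f\in\mathcal{H}^{1,c}_{mix}$ (instead of the classical $\mathcal{H}^{2}_{mix}$ assumption in Bungartz--Griebel type analyses), the hierarchical surpluses are only controlled in $L^2$ rather than pointwise, so a direct geometric sum gives only $\mathcal{O}(n^{-1})$. The $\smallO$ improvement must come from rewriting the tail as $\sum_{|\Balpha|>K}4^{-|\Balpha|_1}\bigl(4^{|\Balpha|_1}\|f_{\Balpha}\|_{L^2}^{2}\bigr)$ and using the absolute continuity of the convergent series $\sum_{\Balpha}4^{|\Balpha|_1}\|f_{\Balpha}\|_{L^2}^{2}$ to extract a vanishing prefactor as $K=k+d-1\to\infty$. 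For general $1\leq p<\infty$, an interpolation between the $L^2$ estimate above and the $L^\infty$ bound of the second half of the theorem extends the $\smallO(n^{-1})$ rate across all finite $p$.
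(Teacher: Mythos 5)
Your reduction chain (Theorem \ref{thm:diffBdryGP} to pass from $\hat{f}_n^{\rm BM}$ to $\hat{f}_n^{\rm BR}$, Theorem \ref{thm:sggFEMkriging} to identify $\hat{f}_n^{\rm BR}$ with $\mathcal{I}_k^{\rm SP}f$, the hierarchical tail sum $\sum_{|\Balpha|\geq k+d}f_{\Balpha}$, and Lemma \ref{lem:num4SPG} for the cardinality conversion) is exactly the paper's architecture, and your $L^\infty$ argument via $\sup_{\Bbeta}|c_{\Balpha,\Bbeta}|=\mathcal{O}(2^{-|\Balpha|})$ matches the paper's. The gap is in the step you yourself flag as the main obstacle: the $\smallO(n^{-1})$ claim for $L^p$. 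Controlling the surpluses only in $L^2$ via the Riesz-basis identity $\sum_{\Balpha}4^{|\Balpha|}\|f_{\Balpha}\|_{L^2}^2\lesssim\|f\|^2_{\CalH^{1,c}_{mix}}$ and then extracting a vanishing prefactor from the tail of that convergent series gives, after Cauchy--Schwarz against $\bigl(\sum_{|\Balpha|>K}4^{-|\Balpha|}\bigr)^{1/2}=\mathcal{O}(2^{-K}K^{(d-1)/2})$, only a bound of the form $\smallO\bigl(n^{-1}[\log n]^{c(d-1)}\bigr)$ for some $c>0$. This does \emph{not} imply $\smallO(n^{-1})$: the unquantified $o(1)$ factor from the series tail need not decay fast enough to absorb the polylogarithmic terms. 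The same problem recurs in your final interpolation step for $2<p<\infty$, since interpolating against the $L^\infty$ bound reintroduces $[\log n]^{2(d-1)}$ factors.

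The paper closes this gap differently, with a \emph{pointwise} surplus estimate (Lemma \ref{prop:BoundofSurplus}): for almost every $(\Balpha,\Bbeta)$ one has $|c_{\Balpha,\Bbeta}|\leq C2^{-(\gamma|\Balpha|_{\infty}+|\Balpha|)}$ for some $\gamma\in(0,1]$, obtained by writing the surplus as an iterated integral of $D^{\bold{1}}f$ over the support cube and invoking density of smooth functions in $\CalH^{1,c}_{mix}$ together with the Lebesgue differentiation theorem to get a local H\"older-type increment bound on $D^{\bold{1}}f$. The extra factor $2^{-\gamma|\Balpha|_{\infty}}\leq 2^{-(\gamma/d)|\Balpha|}$ upgrades the geometric decay to $2^{-(1+\varepsilon)|\Balpha|}$, and the resulting tail sum $\sum_{|\Balpha|\geq k+d}2^{-(1+\varepsilon)|\Balpha|}=\mathcal{O}(2^{-(1+\varepsilon)k}k^{d-1})=\mathcal{O}(n^{-(1+\delta)})$ annihilates \emph{all} logarithmic factors at once, which is precisely what delivers $\smallO(n^{-1})$ uniformly in $1\leq p<\infty$ (the $p$-dependence enters only through the explicit computation $\int\phi_{\Balpha,\Bbeta}^p=[2/(p+1)]^d2^{-|\Balpha|}$, so no interpolation is needed). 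To repair your proof you would need to establish this additional $|\Balpha|_\infty$-decay of the surpluses, not merely their $L^2$ summability.
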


%  Now, we are well prepared to prove theorem \ref{thm:L1error}:
The proof of Theorem \ref{thm:L1error} requires the following three lemmas. The first lemma (from \citealp{Bungartz04}) provides a big-O approximation of the number of points in the sparse grid $\BX_k^{\rm SP}$: 
\begin{lem}\label{lem:num4SPG}[Lemma 3.6 in \cite{Bungartz04}]
Let $n = |\BX_k^{\rm SP}|$ be the number of points in a $d$-dimensional sparse grid of level $k$. Then:
\begin{equation*}
    n=\mathcal{O}(2^k[\log 2^k]^{d-1}).
\end{equation*}
\end{lem}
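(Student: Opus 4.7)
The plan is to bound $n = |\BX_k^{\rm SP}|$ purely combinatorially, by exploiting the definition $\BX_k^{\rm SP} = \bigcup_{k \leq |\Balpha| \leq k+d-1} \BX_{\Balpha}$ and the fact that each component full grid $\BX_{\Balpha}$ has a trivially computable size. Since we only need a big-$\mathcal{O}$ upper bound, I will ignore the (substantial) overlap between the full grids $\BX_{\Balpha}$ for different $\Balpha$ and simply use sub-additivity:
\begin{equation*}
n \;=\; \Big|\bigcup_{k \leq |\Balpha| \leq k+d-1} \BX_{\Balpha}\Big| \;\leq\; \sum_{\ell = k}^{k+d-1} \sum_{|\Balpha| = \ell} |\BX_{\Balpha}|.
\end{equation*}

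First I would bound $|\BX_{\Balpha}|$. From the definition \eqref{eq:fullgrid}--\eqref{eq:indexset}, the index set $\mathcal{B}_{\Balpha}$ in coordinate $j$ has at most $2^{\alpha_j}+1$ elements, so $|\BX_{\Balpha}| \leq \prod_{j=1}^d (2^{\alpha_j}+1) \leq 2^d \cdot 2^{|\Balpha|}$. For any $\Balpha$ appearing in the sum above we have $|\Balpha| \leq k+d-1$, which gives $|\BX_{\Balpha}| \leq 2^{2d-1} \cdot 2^k = \mathcal{O}(2^k)$ with the constant depending only on $d$.

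Next I would count the number of multi-indices. The number of $\Balpha \in \mathbb{N}^d$ with $|\Balpha| = \ell$ is the classical stars-and-bars count $\binom{\ell + d - 1}{d-1}$, which is a polynomial in $\ell$ of degree $d-1$ and hence equals $\mathcal{O}(\ell^{d-1})$ for fixed $d$. Summing over the $d$ admissible values $\ell = k, k+1, \dots, k+d-1$, the total number of multi-indices in the union is $\sum_{\ell=k}^{k+d-1} \mathcal{O}(\ell^{d-1}) = \mathcal{O}(k^{d-1})$. Combining the two estimates yields
\begin{equation*}
n \;\leq\; \mathcal{O}(k^{d-1}) \cdot \mathcal{O}(2^k) \;=\; \mathcal{O}\!\left(2^k \, k^{d-1}\right) \;=\; \mathcal{O}\!\left(2^k [\log 2^k]^{d-1}\right),
\end{equation*}
where the last equality uses $\log 2^k = k \log 2$ and absorbs the constant $(\log 2)^{-(d-1)}$ into the big-$\mathcal{O}$.

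I do not expect a serious obstacle here: the argument is purely combinatorial, and because the claim is only a big-$\mathcal{O}$ upper bound we are free to overcount overlaps between component grids and to absorb all $d$-dependent constants. The only minor care point is checking that the boundary-aware index set $\mathcal{B}_{\Balpha}$ in \eqref{eq:indexset} does not inflate $|\BX_{\Balpha}|$ beyond $\mathcal{O}(2^{|\Balpha|})$, which is immediate since removing boundary indicators can only shrink $\mathcal{B}_{\Balpha}$ relative to the full $\{0,1,\dots,2^{\alpha_j}\}$. (A matching lower bound of the same order follows by restricting attention to a single term, e.g., $\ell = k$ and $\Balpha = (k,0,\dots,0)$, which already contributes $\Omega(2^k)$ points; the logarithmic factor, which comes from counting multi-indices, would require retaining the full range of $\ell$ and a slightly more careful distinct-point count, but this is not needed for the stated $\mathcal{O}$ bound.)
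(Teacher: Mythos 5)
Your proof is correct. Every step checks out: the per-grid bound $|\BX_{\Balpha}| \leq \prod_{j=1}^d (2^{\alpha_j}+1) \leq 2^d 2^{|\Balpha|}$ is valid (the boundary-aware index set $\mathcal{B}_{\Balpha}$ is indeed a subset of $\{0,1,\dots,2^{\alpha_j}\}$ in each coordinate, so it can only shrink the count), the stars-and-bars enumeration of multi-indices with $|\Balpha|=\ell$ is $\binom{\ell+d-1}{d-1} = \mathcal{O}(\ell^{d-1})$ for fixed $d$, and subadditivity over the $d$ admissible levels gives the claimed $\mathcal{O}(2^k k^{d-1}) = \mathcal{O}(2^k[\log 2^k]^{d-1})$.

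It is worth noting, however, that the paper does not prove this lemma at all: it is stated as a citation of Lemma 3.6 in \cite{Bungartz04}, and the argument there is structurally different from yours. Bungartz and Griebel count points through the hierarchical difference spaces $W_{\Balpha}$ (the spaces appearing in the paper's equation \eqref{eq:subspacedecomp}), which partition the sparse grid basis \emph{disjointly}; summing $|W_{\Balpha}|$ over $|\Balpha| \leq k+d-1$ therefore yields an \emph{exact} count, and after evaluating the resulting sum one obtains the sharper asymptotic $n = 2^k\bigl(k^{d-1}/(d-1)! + \mathcal{O}(k^{d-2})\bigr)$, with an explicit leading constant. Your route instead overcounts the (substantial) overlaps between the full grids $\BX_{\Balpha}$ via subadditivity, which is coarser but entirely sufficient for the stated big-$\mathcal{O}$ claim, and has the advantage of being elementary and self-contained, using nothing beyond the definitions \eqref{eq:fullgrid}--\eqref{eq:indexset} and \eqref{eq:sparsegrid}. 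The trade-off is that your constant is inflated by roughly a factor $2^{2d-1}(d-1)!$ and you lose the matching lower bound for free; the disjoint hierarchical decomposition delivers both at once, which matters elsewhere in the paper (e.g., the leading-constant computation in \eqref{eq:deriv3} implicitly relies on the same exact combinatorics).
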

\noindent The second lemma upper bounds the hierarchical surplus in $c_{\Balpha,\Bbeta}$ \eqref{eq:HierarchicalSurplus}:
\begin{lem}\label{prop:BoundofSurplus}
Let $f\in\mathcal{H}^{1,c}_{mix}$. Then there exists constants $C>0$ and $\gamma\in(0,1]$ independent of $f$, $\Balpha$ and $\Bbeta$, such that: \begin{equation}
|c_{\Balpha,\Bbeta}|\leq C2^{-(\gamma|\alpha|_{\infty}+|\alpha|)}
\label{eqn:lemi}
\end{equation}
for almost all $(\Balpha,\Bbeta)$, where $\Balpha \in \NatInt^d$ and $\Bbeta\in B_{\Balpha}$. Moreover, for any $\Balpha \in \NatInt^d$:
\begin{equation}
\sup_{{\Bbeta}\in B_{\Balpha}}|c_{\Balpha,\Bbeta}|\leq C2^{-|\Balpha|}.
\label{eqn:lemii}
\end{equation}
\end{lem}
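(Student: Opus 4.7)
The plan is to express the hierarchical surplus $c_{\Balpha,\Bbeta}$ as a linear functional of the mixed first derivative $D^{\bold{1}}f$, then estimate this functional using the explicit form of the Brownian kernel together with the local support of the stencil.

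First I would use the reproducing formula from Proposition \ref{prop:NativeSpaceBB}. For $f \in \CalH^{1,c}_{mix}$ we have
\[
f(\Bx) = \int_{\CalX} D^{\bold{1}}f(\Bs)\,D^{\bold{1}}_{\Bs}k^{\rm BR}(\Bx,\Bs)\,d\Bs.
\]
Because the stencil $\bigotimes_{j=1}^d A_{\alpha_j,\beta_j}$ is a linear functional in $f$, commuting it with the integral gives
\[
c_{\Balpha,\Bbeta} = \int_{\CalX} D^{\bold{1}}f(\Bs)\,G_{\Balpha,\Bbeta}(\Bs)\,d\Bs, \quad G_{\Balpha,\Bbeta}(\Bs) := \Bigl(\bigotimes_{j=1}^d A_{\alpha_j,\beta_j}\Bigr) D^{\bold{1}}_{\Bs}k^{\rm BR}(\BX_\Balpha,\Bs).
\]
Since $k^{\rm BR}$ is a tensor product, $G_{\Balpha,\Bbeta}$ factorises as $\prod_j g^{(j)}_{\alpha_j,\beta_j}(s_j)$. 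Using the explicit piecewise-linear form \eqref{eq:brownian_kernel1d} of each 1-d Brownian factor, a direct computation shows $g^{(j)}_{\alpha_j,\beta_j}$ is piecewise constant on the stencil support (an interval of length $2\cdot 2^{-\alpha_j}$ centered at $x_{\alpha_j,\beta_j}$) with amplitude $O(1)$. Hence $\|g^{(j)}_{\alpha_j,\beta_j}\|_{L^\infty}=O(1)$ and $\|g^{(j)}_{\alpha_j,\beta_j}\|_{L^2(\Real)}\leq C 2^{-\alpha_j/2}$.

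Next I would establish \eqref{eqn:lemii}. Applying Hölder's inequality with $L^\infty \times L^1$ on each factor,
\[
|c_{\Balpha,\Bbeta}| \leq \|G_{\Balpha,\Bbeta}\|_{L^\infty} \int_{\text{supp}\,G_{\Balpha,\Bbeta}} |D^{\bold{1}}f(\Bs)|\,d\Bs \leq C\,|\text{supp}\,G_{\Balpha,\Bbeta}|^{1/2} \|D^{\bold{1}}f\|_{L^2},
\]
where the support has measure $\prod_j (2\cdot 2^{-\alpha_j}) = 2^d\cdot 2^{-|\Balpha|}$. Combined with the $L^\infty$ amplitude bound and a second Cauchy-Schwarz that pays with one half-power of $2^{-\alpha_j}$ per dimension from $\|g^{(j)}\|_{L^2}\leq C2^{-\alpha_j/2}$, this yields $|c_{\Balpha,\Bbeta}|\leq C\,\|f\|_{\CalH^1_{mix}}\,2^{-|\Balpha|}$, proving \eqref{eqn:lemii}.

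For the sharper bound \eqref{eqn:lemi}, I would extract an additional factor $2^{-\gamma|\Balpha|_\infty}$ from the direction $j^\ast=\arg\max_j \alpha_j$. In that direction the stencil $A_{\alpha_{j^\ast},\beta_{j^\ast}}=[-\tfrac12,1,-\tfrac12]$ is a centered second-order difference, so
\[
A_{\alpha_{j^\ast},\beta_{j^\ast}} f \;=\; -\tfrac12\int_{-h}^{h}\mathrm{sign}(u)\,f'(x_{\alpha_{j^\ast},\beta_{j^\ast}}+u)\,du
\]
(acting on the $j^\ast$-th variable after the remaining stencils), which is the local odd-symmetric average of $D^{\bold{1}}f$ in $s_{j^\ast}$. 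By the Lebesgue differentiation theorem, this symmetric average vanishes at Lebesgue-almost every point, and a quantitative version (via a maximal-function estimate on $D^{\bold{1}}f$ in the $s_{j^\ast}$-direction, combined with the $L^2$ integrability from $\CalH^1_{mix}$) yields an extra decay of order $2^{-\gamma\alpha_{j^\ast}} = 2^{-\gamma|\Balpha|_\infty}$ for some $\gamma\in(0,1]$, at all $(\Balpha,\Bbeta)$ outside a measure-zero set. Multiplying by the bound from the other $d-1$ dimensions gives \eqref{eqn:lemi}.

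The main obstacle is Step 4: squeezing the extra $2^{-\gamma|\Balpha|_\infty}$ factor out of a function with only $L^2$ mixed derivatives. A naive Cauchy-Schwarz cannot see the cancellation inherent in the centered second difference, so one must argue via the differentiation/maximal-function machinery, which is precisely what forces the "almost all" qualifier. The easier bound \eqref{eqn:lemii} follows immediately from the localization and Cauchy-Schwarz described in Steps 1--3.
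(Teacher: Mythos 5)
Your representation of the surplus as $c_{\Balpha,\Bbeta}=\int D^{\bold{1}}f(\Bs)\,G_{\Balpha,\Bbeta}(\Bs)\,d\Bs$ with $G_{\Balpha,\Bbeta}$ the stencil applied to $D^{\bold{1}}_{\Bs}k^{\rm BR}$ is a legitimate reformulation, essentially equivalent to the paper's direct telescoping of the finite differences into $\left(-\tfrac12\right)^d\int_{\text{cube}}D^{\bold{1}}\sum_{\bold i}(-1)^{|\bold i|}f(\Bs-\bold{ih})\,d\Bs$, and your computation of the amplitude and support of each $g^{(j)}_{\alpha_j,\beta_j}$ is right. The problem is that both quantitative steps built on top of it do not go through as written. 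For \eqref{eqn:lemii}: you apply H\"older once to get $\|G\|_{L^\infty}\,|\mathrm{supp}\,G|^{1/2}\,\|D^{\bold 1}f\|_{L^2}=\mathcal{O}(2^{-|\Balpha|/2})$ and then invoke ``a second Cauchy--Schwarz'' using $\|g^{(j)}\|_{L^2}\leq C2^{-\alpha_j/2}$ to upgrade this to $2^{-|\Balpha|}$. You cannot apply two different H\"older pairings to the same integral and multiply their gains; each single legitimate pairing ($L^2\times L^2$, or $L^\infty\times L^1$ on the support) yields exactly $2^{-|\Balpha|/2}$ when all you know globally is $D^{\bold 1}f\in L^2$. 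The missing half power per dimension has to come from the cancellation in the stencil (each $g^{(j)}$ has mean zero, so the integral sees a \emph{difference} $D^{\bold 1}f(\cdot)-D^{\bold 1}f(\cdot-h_j)$ rather than $D^{\bold 1}f$ itself), and exploiting that requires some regularity of $D^{\bold 1}f$ beyond bare $L^2$ membership.

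The same issue is fatal for \eqref{eqn:lemi}, which you correctly identify as the crux but do not actually prove. The Lebesgue differentiation theorem tells you the odd-symmetric average of $D^{\bold 1}f$ at a.e.\ point tends to zero, but it is purely qualitative: it supplies no rate, and a maximal-function bound controls the \emph{size} of local averages (the maximal function of an $L^2$ function is again in $L^2$), not a power-law decay $2^{-\gamma\alpha_{j^\ast}}$ of a centered second difference. For a generic function with only $L^2$ mixed first derivatives there is no pointwise H\"older modulus to extract, so ``a quantitative version yields an extra decay of order $2^{-\gamma|\Balpha|_\infty}$'' is an assertion of the lemma's content, not a proof of it. The paper closes this gap differently: it approximates $f$ in the $\CalH^{1,c}_{mix}$ norm by smooth functions $g^n$ (using the trace-zero/density theorem), splits $\int|D^{\bold 1}f(\bold s)-D^{\bold 1}f(\bold s-h)|\,d\bold s$ into an approximation error (made small by taking $n$ large) plus the corresponding difference for $g^n$, and bounds the latter by $Ch^{1+\gamma}$ using the H\"older continuity of the smooth approximant; the ``almost all $(\Balpha,\Bbeta)$'' qualifier and the fallback bound \eqref{eqn:lemii} then account for the exceptional set where this local H\"older control fails. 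You would need to supply an argument of this kind (or some other mechanism that manufactures a genuine power $\gamma>0$) before your Step 4 can stand.
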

\noindent The last lemma provides a useful identity:
\begin{lem}
For any $x \in (0,1)$,
\[ \sum_{i=0}^\infty x^i{i+k+d-1 \choose d-1} = \sum_{j=0}^{d-1}{k+d-1 \choose j}\left(\frac{x}{1-x}\right)^{d-1-j}\frac{1}{1-x}.\]
\label{lem:id}
\end{lem}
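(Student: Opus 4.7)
The plan is to reduce both sides to the same closed form using two standard combinatorial tools: Vandermonde's convolution and the negative binomial generating function $\sum_{i=0}^\infty \binom{i}{j} x^i = x^j/(1-x)^{j+1}$ (valid for $|x|<1$ and $j \in \NatInt$). The virtue of this approach is that the RHS has the shape $\sum_{j'} (\text{const})_{j'} \left(\frac{x}{1-x}\right)^{d-1-j'}\frac{1}{1-x}$, which after pulling out $1/(1-x)$ looks exactly like $\sum_j c_j \cdot x^j/(1-x)^{j+1}$, so it is natural to aim for that form from the LHS.

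The key step is to expand the binomial coefficient $\binom{i+k+d-1}{d-1}$ using Vandermonde's identity applied to the split $(i) + (k+d-1)$:
\[
\binom{i+k+d-1}{d-1} \;=\; \sum_{j=0}^{d-1} \binom{i}{j}\binom{k+d-1}{d-1-j}.
\]
Substituting this into the LHS and interchanging the two (absolutely convergent, since $0<x<1$) sums gives
\[
\sum_{i=0}^\infty x^i \binom{i+k+d-1}{d-1}
=\sum_{j=0}^{d-1} \binom{k+d-1}{d-1-j}\sum_{i=0}^\infty \binom{i}{j} x^i
=\sum_{j=0}^{d-1} \binom{k+d-1}{d-1-j}\,\frac{x^j}{(1-x)^{j+1}}.
\]

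The final step is just a re-indexing: setting $j' = d-1-j$ turns the last display into
\[
\sum_{j'=0}^{d-1} \binom{k+d-1}{j'}\left(\frac{x}{1-x}\right)^{d-1-j'}\frac{1}{1-x},
\]
which is precisely the RHS of the lemma. There is no real obstacle here; the only thing one must be careful about is the direction of the substitution (making sure the index swap correctly sends $d-1-j$ to $j'$ and the power $j$ of $x$ to $d-1-j'$), and verifying that the interchange of the double sum is legitimate, which follows from absolute convergence for $x\in(0,1)$ since the inner series $\sum_i \binom{i}{j}x^i$ converges for each fixed $j \le d-1$.
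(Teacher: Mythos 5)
Your proof is correct. It takes a different (though closely parallel) route from the paper's: the paper writes $\binom{i+k+d-1}{d-1}x^i = \frac{x^{-k}}{(d-1)!}\bigl(x^{i+k+d-1}\bigr)^{(d-1)}$, pulls the $(d-1)$-fold derivative outside the geometric series, and then applies the general Leibniz rule to $\bigl(x^{k+d-1}\cdot\frac{1}{1-x}\bigr)^{(d-1)}$, whereas you decompose the binomial coefficient combinatorially via Vandermonde's convolution and then sum each piece with the negative-binomial series $\sum_i \binom{i}{j}x^i = x^j/(1-x)^{j+1}$. The two arguments are generating-function shadows of one another (Leibniz's rule on that product is exactly Vandermonde at the level of coefficients), and both land on the identical term-by-term decomposition $\sum_j \binom{k+d-1}{j}\,x^{d-1-j}/(1-x)^{d-j}$. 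What your version buys is that it avoids justifying term-by-term differentiation of an infinite series, replacing it with a single interchange of a finite and an absolutely convergent sum, which you correctly flag and justify; what the paper's version buys is a slightly more mechanical computation with no need to recall Vandermonde. Either is acceptable.
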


The proofs of Lemma \ref{prop:BoundofSurplus} and \ref{lem:id} are found in the Appendix.

\begin{proof}
Consider first the prediction error $f - \hat{f}_n^{\rm BR}$ for some $f \in \mathcal{H}_{mix}^{1,c}$, where $\hat{f}_n^{\rm BR}$ is the GP predictor using the Brownian kernel $k^{\rm BR}$. Using (i) the function space equivalence $\mathcal{H}_{mix}^{1,c} = \mathcal{V}$ (Theorem \ref{thm:4spacesequal}) and (ii) the equivalence between $\hat{f}_n^{\rm BR}$ and the sparse grid FEM solution $\hat{f}_n^{\rm BR}$ (Theorem \ref{thm:sggFEMkriging}), this prediction error can be decomposed via \eqref{eq:hieraDiffSpa_proj2}:
\begin{equation}\label{eq:error4SPG}
    f-\hat{f}_n^{\rm BR} = f-\mathcal{I}_k^{\rm SP} f =\sum_{{\Balpha}\in\mathbb{Z}_{\geq 0}^d}f_{\Balpha}-\sum_{0\leq |{\Balpha}|\leq k+d-1}f_{\Balpha}=\sum_{|{\Balpha}|\geq k+d}f_{\Balpha}.
\end{equation}
Therefore, the error can be bounded by the infinite series:
\begin{equation}
    \|f-\hat{f}_n^{\rm BR}\|\leq \sum_{|\alpha|\geq k+d}||f_\alpha||
    \label{eq:infser}
\end{equation}
for any norm $||\cdot||$.

Let us first take the $L^p$ norm for $\| \cdot \|$ in \eqref{eq:infser}. Note that

\begingroup
\allowdisplaybreaks
\begin{align*}
    \|f-\hat{f}_n^{\rm BR}\|_{L^p}&\leq \sum_{|{\Balpha}|\geq k+d}||f_{\Balpha}||_{L^p}\\
    &=\sum_{|{\Balpha}|\geq k+d}\left\|\sum_{{\Bbeta}\in B_{\Balpha}}c_{{\Balpha},{\Bbeta}}\phi_{{\Balpha},{\Bbeta}(\Bx)}\right\|_{L^p}\stepcounter{equation}\tag{\theequation}\label{eq:deriv1}\\
    &=\sum_{|{\Balpha}|\geq k+d}\left[\sum_{{\Bbeta}\in B_{\Balpha}}c_{{\Balpha},{\Bbeta}}^p\int^{\Bx_{{\Balpha},{\Bbeta}}+\bold{h}_{\Balpha}}_{\Bx_{{\Balpha},{\Bbeta}}-\bold{h}_{\Balpha}}\phi_{{\Balpha},{\Bbeta}}^p(\Bx) \; d\Bx \right]^{\frac{1}{p}}\\
    &=\sum_{|{\Balpha}|\geq k+d}\left[\frac{2^{d-1}}{{(p+1)}^d |B_{\Balpha}|} \sum_{{\Bbeta}\in B_{\Balpha}} c^p_{\Balpha,\Bbeta} \right]^{\frac{1}{p}}\\
    &\leq C \sum_{|{\Balpha}|\geq k+d} 2^{-(\gamma|{\Balpha}|_{\infty}+|{\Balpha}|)}\\
    &\leq C \sum_{|{\Balpha}|\geq k+d} 2^{-(1+\varepsilon)|{\Balpha}|},
\end{align*}
\endgroup
where $C$ and $\epsilon$ are positive constive constant independent of $\Balpha$ (note that the constant $C$ is used to show big-O convergence, and may change in value throughout the proof). Here, the third line follows from the fact that $\{\phi_{\Balpha,\Bbeta}\}_{{\Bbeta}\in B_{\Balpha}}$ is pairwise disjoint, the fourth line follows from the fact that $\int^{\Bx_{\Balpha,\Bbeta}+\bold{h}_{\Balpha}}_{\Bx_{\Balpha,\Bbeta}-\bold{h}_{\Balpha}} \phi_{\Balpha,\Bbeta}^p(\Bx) d\Bx =[2/(p+1)]^d2^{-|{\Balpha}|}$, and the fifth line follows from Lemma \ref{prop:BoundofSurplus} (Equation \ref{eqn:lemi}).

% \cmtS{Lemma \ref{prop:BoundofSurplus} (i)?}

We can further upper bound the last equation as follows:
\begin{align}
\begin{split}
    C \sum_{|{\Balpha}|\geq k+d}2^{-(1+\varepsilon)|{\Balpha}|}&=C \sum_{i=k+d}^\infty2^{-(1+\varepsilon)i}\sum_{|{\Balpha}|=i}1\\
    &=C \sum_{i=k+d}^\infty2^{-(1+\varepsilon)i}{i-1 \choose d-1}\\
    &\leq C 2^{-(1+\varepsilon)k} \cdot 2^{-(1+\varepsilon)d}\sum_{i=0}^\infty2^{-i}{i+k+d-1 \choose d-1},
    \end{split}
    \label{eq:deriv2}
\end{align}
where the second line follows since there are ${i-1 \choose d-1}$ ways to represent $i$ as a sum of $d$ natural numbers. With $x=2^{-1}$, Lemma \ref{lem:id} gives:
\begin{align}
\begin{split}
    \sum_{i=0}^\infty2^{-i}{i+k+d-1 \choose d-1} = 2\sum_{j=0}^{d-1}{k+d-1 \choose j}=2\frac{k^{d-1}}{(d-1)!}+\mathcal{O}(k^{d-2}).
    \end{split}
    \label{eq:deriv3}
\end{align}
Plugging \eqref{eq:deriv3} into \eqref{eq:deriv1}, we get:
\begin{align}
\begin{split}
    \|f-\hat{f}_n^{\rm BR}\|_{L^p}&\leq C\sum_{|{\Balpha}|\geq k+d}2^{-(1+\varepsilon)|{\Balpha}|}\leq C2^{-(1+\varepsilon)(k+d)}\frac{k^{d-1}}{(d-1)!}\\
    &=C2^{-(1+\varepsilon)k}\left[2^{-(1+\varepsilon)(d-1)}\frac{k^{d-1}}{(d-1)!}\right].
    %=\mathcal{O}(2^{-\varepsilon k}2^{-k}[\log 2^k]^{d-1}).
\end{split}
\label{eq:deriv4}
\end{align}
Using the upper bound on grid points for sparse grids (Lemma \ref{lem:num4SPG}), the above prediction error can be stated in terms of sample size $n$:
\begin{align*}
\begin{split}
    \|f-\hat{f}_n^{\rm BR}\|_{L^p} &\leq C2^{-\varepsilon k}2^{-k}[\log 2^k]^{d-1}=2^{-\varepsilon k}n^{-1}[k\log 2]^{2d-2}\\
    &=\mathcal{O}(n^{-(1+\delta)}) = o(n^{-1})
    \label{eq:deriv5}
    \end{split}
\end{align*}
for some $\delta>0$.

For $L^{\infty}$ convergence, we can take the $L^\infty$ norm for $\| \cdot \|$ in \eqref{eq:infser} and mimic the same proof technique for $L^{p}$ convergence, with the key distinction being the use of Lemma \ref{prop:BoundofSurplus} (ii) in \eqref{eq:deriv1} to upper bound $\|\sum_{{\Bbeta}\in B_{\Balpha}} c_{\Balpha,\Bbeta}\phi_{\Balpha,\Bbeta}\|_{L^{\infty}} = \sup_{{\Bbeta}\in B_{\Balpha}}|c_{\Balpha,\Bbeta}|=\CalO(2^{-|{\Balpha}|})$. This yields the following $L^{\infty}$ rate in $n$:
\begin{equation}
\|f-\hat{f}_n^{\rm BR}\|_{L^\infty} = \CalO(n^{-1}[\log n]^{2(d-1)}).
\label{eq:deriv6}
\end{equation}

Finally, using Theorem \ref{thm:diffBdryGP}, the $L^p$ and $L^\infty$ convergence rates for $\|f-\hat{f}_n^{\rm BR}\|$ also hold for the BdryGP error $\|f-\hat{f}_n^{\rm BR}\|$ as well, which completes the proof.
\end{proof}
%The general idea of the proof is to write $||f-f^s_k||_{L^1}$ in the form of equation (\ref{eq:error4SPG}) and then compute an upper bound of $c_{\alpha,\beta}$ defined in equation (\ref{eq:HierarchicalSurplus}). We leave the proof in appendix.
%If we make a more restriction that $f\in\mahtcl{H}^{2}_{0,mix}$, then the convergence rate given in equation (\ref{eq:SGConvergenceRate_L2}) can be found in \cite{Bungartz04}.
%\begin{rem}

\textit{Remark 1}: In Theorem \ref{thm:L1error}, the intuition behind the slower $L^{\infty}$ rate (compared to the $L^{p}$ rate, $1 \leq p < \infty$), is that $D^{\bold{1}}f(\Bx)$ can be ill-behaved on a  measure-zero set on $\CalX$. Because of this, the pointwise convergence rate on this set can be be much slower. The effect from this measure-zero set can be ignored under integration for $\mathcal{L}^p$ with $p < \infty$.
%In fact, this is the finite difference representation of $D^\bold{1}D^\bold{1}f(\Bx_{\alpha,\beta})=D^\bold{2}f(\Bx_{\alpha,\beta})$.  If $f\in\mathcal{H}^2_{0,mix}$, then $c_{\alpha,\beta}$ can be bounded by a much smaller number as shown in \cite{Bungartz04}. Therefore, we can say that $c_{\alpha,\beta}$ measures the distance of $f$ to $\mathcal{H}^2_{0,mix}$. We can also make restriction on the Hölder condition of $D^{\bold{1}}f$. For example, we can assume that $D^{\bold{1}}f\in C^{0,\frac{1}{2}}$ then the convergence rate can be improved to an order between $n^{-1}$ and $n^{-2}$. Such an assumption is reasonable because $C^{0,\frac{1}{2}}$ is a large enough space that it contains all the sample paths of Brownian motion. However, these concepts are related to fractional Sobolev space and are off our goal. At this point, we argue that $n^{-1}$ is a good enough convergence rate since it is independent of dimension $d$.
%\end{rem}

\textit{Remark 2}: We can further improve the convergence rate in Theorem \ref{thm:L1error} if we restrict $f$ to the smaller function space $\CalH^{2,c}_{mix}$, the constrained Sobolev space with mixed \textit{second} derivatives. Using the same proof strategy, but plugging in Lemma 3.5 in \cite{Bungartz04}, we can then show that $||\hat{f}^{\text{BM}}_n-f||_{L^2}=\CalO(n^{-2}[\log n]^{d-1})$ for $f \in \CalH^{2,c}_{mix}$. The function space equivalence (Theorem \ref{thm:4spacesequal}), however, does not hold under this extension, since $\CalH^{2,c}_{mix}$ is smaller than the RKHS of the BdryMat\'ern kernel $\CalH^{1,c}_{mix}$.

% \cite{Bungartz04} has shown that for any $f\in\CalH^{2,c}_{mix}$, the FEM sparse grid interpolator converges to $f$ under $L^2$ norm in $\CalO(n^{-2}[\log n]^{d-1})$ provided full boundary condition. So according to theorem \ref{thm:sggFEMkriging} and \ref{thm:diffBdryGP}, we can plug equation (\ref{eq:diffBdryGP}) in the proof of lemma 3.5 in \cite{Bungartz04} to have the same result $||\hat{f}^{\text{BM}}_n-f||_{L^2}=\CalO(n^{-2}[\log n]^{d-1})$. However, the space $\CalH^{2,c}_{mix}$ is smaller than the RKHS of $k^{\text{BM}}_\omega$, which is $\CalH^{1,c}_{mix}$. That is why we need to generalize the result so that it is consistent with the RKHS of the kernel used for interpolation. Otherwise, the convergence rate does not satisfy the assumptions of GP regression.

\subsection{Probabilistic Uniform Rate}
\label{sec:probrate}
Next, we prove a probabilistic convergence rate  for BdryGP, where $f$ is assumed to be \textit{random}, following a GP with sample paths in the constrained mixed Sobolev space $\mathcal{H}^{1,c}_{mix}$. This is motivated by the probabilistic convergence rates in \cite{Wang18} for GPs without boundary constraints. Define first the following kernel space:
\begin{equation}
\CalH^{1,c}_{mix}( \CalX \times \CalX) := \{k(\Bx,\By): k(\Bx,\cdot),k(\cdot,\By)\in \CalH^{1,c}_{mix}(\CalX), \; \forall \Bx,\By\in \CalX\}.
\label{eq:kerspace}
\end{equation}
Such a space ensures that a GP with kernel $k \in \CalH^{1,c}_{mix}( \CalX \times \CalX)$ has sample paths in $\mathcal{H}_{mix}^{1,c}$.

The following theorem gives a probabilistic uniform rate for BdryGP when $f$ follows a GP with kernel $k \in \CalH^{1,c}_{mix}( \CalX \times \CalX)$:
\begin{theorem}
\label{thm:proberror}
Suppose $I^{[0]} \cup I^{[1]} = [d]$, and assume the sparse grid design $\BX_k^{\rm SP}$ with $n = |\BX_k^{\rm SP}|$. Let $Z(\cdot)$ be a GP with kernel $k \in \CalH^{1,c}_{mix}(\CalX\times\CalX)$, and  $\mathcal{I}_n^{\rm BM}$ be the BdryGP interpolation operator satisfying $\mathcal{I}_n^{\rm BM} f = \hat{f}_n^{\rm BM}$. Then:
\begin{equation}
    \mathbb E\left[\sup_{\Bx\in\CalX}|Z(\Bx)-\mathcal{I}_n^{\rm BM} Z(\Bx)|^p\right]^{\frac{1}{p}}=\CalO(n^{-1}[\log n]^{2d-\frac{3}{2}}), \quad 1 \leq p < \infty,
\label{eq:lpprob}
\end{equation}
and:
\begin{equation}
\sup_{\Bx\in \CalX}|Z(\Bx)-\mathcal{I}_n^{\rm BM}Z(\Bx)|=\CalO_{\mathbb{P}}(n^{-1}[\log n]^{2d-\frac{3}{2}}). 
\label{eq:linfprob}
\end{equation}
% For any $1\leq p< \infty$.
\end{theorem}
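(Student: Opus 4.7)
The plan is to reduce the statement to the Brownian-kernel interpolator, apply the multi-level hierarchical decomposition from Sections \ref{sec:fem}--\ref{sec:funcspace}, and exploit Gaussianity to control the maxima of the random hierarchical surpluses. By Theorem \ref{thm:diffBdryGP} it suffices to prove the rates with $\mathcal{I}_n^{\rm BM}$ replaced by the Brownian-kernel predictor, and by Theorem \ref{thm:sggFEMkriging} this predictor equals the sparse-grid FEM interpolator $\mathcal{I}_k^{\rm SP}Z$. Because $k\in\CalH^{1,c}_{mix}(\CalX\times\CalX)$, the paths of $Z$ lie a.s.\ in $\CalH^{1,c}_{mix}=\mathcal{V}$ by Theorem \ref{thm:4spacesequal}, so the hierarchical expansion \eqref{eq:hieraDiffSpa_proj2} holds pathwise:
\[
Z-\mathcal{I}_k^{\rm SP}Z \;=\; \sum_{|{\Balpha}|\geq k+d}\sum_{{\Bbeta}\in B_{\Balpha}}C_{{\Balpha},{\Bbeta}}\,\phi_{{\Balpha},{\Bbeta}},
\]
where $C_{{\Balpha},{\Bbeta}}$ is the random hierarchical surplus produced by the stencil \eqref{eq:HierarchicalSurplus} applied to $Z$.

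Within each level ${\Balpha}$ the hat functions $\{\phi_{{\Balpha},{\Bbeta}}\}_{{\Bbeta}\in B_{\Balpha}}$ have pairwise disjoint interiors of support and satisfy $0\leq\phi_{{\Balpha},{\Bbeta}}\leq 1$, so $\sup_{\Bx}|\sum_{{\Bbeta}}C_{{\Balpha},{\Bbeta}}\phi_{{\Balpha},{\Bbeta}}(\Bx)|=\max_{{\Bbeta}\in B_{\Balpha}}|C_{{\Balpha},{\Bbeta}}|$; combined with the triangle inequality this yields $\sup_{\Bx}|Z-\mathcal{I}_k^{\rm SP}Z|\leq\sum_{|{\Balpha}|\geq k+d}\max_{{\Bbeta}}|C_{{\Balpha},{\Bbeta}}|$. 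Each $C_{{\Balpha},{\Bbeta}}$ is a fixed linear combination of Gaussian values of $Z$ and hence centered Gaussian. Lemma \ref{prop:BoundofSurplus}(ii) bounds the operator norm of the functional $f\mapsto C_{{\Balpha},{\Bbeta}}(f)$ on $\CalH^{1,c}_{mix}$ by $\CalO(2^{-|{\Balpha}|})$, and since the hypothesis $k\in\CalH^{1,c}_{mix}(\CalX\times\CalX)$ produces a continuous RKHS embedding $\CalH_k\hookrightarrow\CalH^{1,c}_{mix}$ (Theorem \ref{thm:4spacesequal}), the same order-bound transfers to the operator norm on $\CalH_k$. The RKHS--variance identity $\mathrm{Var}(C_{{\Balpha},{\Bbeta}}(Z))=\|C_{{\Balpha},{\Bbeta}}\|_{op,\CalH_k}^{2}$ then gives $\mathrm{Var}(C_{{\Balpha},{\Bbeta}})=\CalO(2^{-2|{\Balpha}|})$ uniformly in ${\Bbeta}$, so standard Gaussian maximal-moment inequalities yield, for any $1\leq p<\infty$,
\[
\mathbb{E}\Big[\max_{{\Bbeta}\in B_{\Balpha}}|C_{{\Balpha},{\Bbeta}}|^p\Big]^{1/p}\;=\;\CalO\bigl(2^{-|{\Balpha}|}\sqrt{\log|B_{\Balpha}|}\bigr)\;=\;\CalO\bigl(2^{-|{\Balpha}|}\sqrt{|{\Balpha}|}\bigr),
\]
using $|B_{\Balpha}|\leq 2^{|{\Balpha}|}$.

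Summing over levels via Minkowski's inequality and the counting identity $\#\{{\Balpha}:|{\Balpha}|=i\}=\binom{i-1}{d-1}$, the algebraic manipulation in the proof of Theorem \ref{thm:L1error} extends with an extra $\sqrt{i}$ factor to give
\[
\mathbb{E}\Big[\sup_{\Bx\in\CalX}|Z-\mathcal{I}_k^{\rm SP}Z|^p\Big]^{1/p}\;\lesssim\;\sum_{i\geq k+d}2^{-i}\sqrt{i}\binom{i-1}{d-1}\;=\;\CalO\bigl(2^{-k}k^{d-1/2}\bigr).
\]
Invoking Lemma \ref{lem:num4SPG} to get $n=\CalO(2^k k^{d-1})$ (so $k=\Theta(\log n)$) converts this to $\CalO(n^{-1}[\log n]^{2d-3/2})$, establishing \eqref{eq:lpprob}; the in-probability rate \eqref{eq:linfprob} then follows immediately by Markov's inequality applied to the $p$-th moment bound.

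The main obstacle is the uniform variance estimate $\mathrm{Var}(C_{{\Balpha},{\Bbeta}})=\CalO(2^{-2|{\Balpha}|})$: although Lemma \ref{prop:BoundofSurplus} supplies a deterministic bound for $C_{{\Balpha},{\Bbeta}}$ as a linear functional on $\CalH^{1,c}_{mix}$, transferring this to a Gaussian variance bound requires the continuous embedding $\CalH_k\hookrightarrow\CalH^{1,c}_{mix}$, which is precisely what the kernel-space hypothesis $k\in\CalH^{1,c}_{mix}(\CalX\times\CalX)$ provides through Theorem \ref{thm:4spacesequal}. A secondary point worth noting is that the extra $\sqrt{\log n}$ factor compared to the deterministic rate of Theorem \ref{thm:L1error} arises precisely from the Gaussian maximum over the $\CalO(\log n)$ active refinement levels, which is what inflates the $[\log n]^{2(d-1)}$ deterministic exponent to the $[\log n]^{2d-3/2}$ probabilistic exponent obtained here.
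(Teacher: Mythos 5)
Your proof is essentially correct but follows a genuinely different route from the paper. The paper's argument is short and indirect: it computes the ``natural distance'' $\boldsymbol{\sigma}^2(\Bx,\By)=\{\mathrm{I}-\mathcal{I}^{\rm BM}_n|_{\Bx}\}\{\mathrm{I}-\mathcal{I}^{\rm BM}_n|_{\By}\}k(\Bx,\By)$, bounds it uniformly by $\CalO(n^{-1}[\log n]^{2(d-1)})$ via two applications of the deterministic $L^\infty$ rate of Theorem \ref{thm:L1error} (one per argument of the kernel), and then outsources the passage from a natural-distance bound to a uniform-in-$\Bx$ moment and in-probability bound to the chaining argument in Theorem 1 of \cite{Wang18}, which supplies the extra $[\log n]^{1/2}$. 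You instead make the chaining explicit and self-contained: the pathwise hierarchical expansion, the observation that disjointly supported hat functions reduce the per-level supremum to a finite Gaussian maximum over $|B_{\Balpha}|\leq 2^{|\Balpha|}$ surpluses, the maximal inequality contributing the $\sqrt{|\Balpha|}$ factor, and the level-by-level summation reusing the combinatorics of Theorem \ref{thm:L1error}. Your bookkeeping ($2^{-k}k^{d-1/2}$ converted via Lemma \ref{lem:num4SPG} to $n^{-1}[\log n]^{2d-3/2}$, then Markov for the in-probability statement) is correct and recovers exactly the paper's exponent; your approach has the merit of exposing precisely where the extra half power of $\log n$ comes from, at the cost of a longer argument.

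One soft spot: your uniform variance bound $\Var(C_{\Balpha,\Bbeta}(Z))=\CalO(2^{-2|\Balpha|})$ is justified by invoking a continuous embedding $\CalH_k\hookrightarrow\CalH^{1,c}_{mix}$ ``from Theorem \ref{thm:4spacesequal}.'' That theorem only asserts the equivalence of $\CalH_{k^{\rm BR}}$, $\CalH_{k^{\rm BM}_\omega}$, $\CalH^{1,c}_{mix}$ and $\mathcal{V}$; the hypothesis $k\in\CalH^{1,c}_{mix}(\CalX\times\CalX)$ in \eqref{eq:kerspace} is only a condition on the kernel sections $k(\Bx,\cdot)$, which by itself does not yield norm-domination of $\CalH_k$ by $\CalH^{1,c}_{mix}$. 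The cleaner repair, which keeps your argument intact, is to compute the variance directly as the bilinear form $\Var(C_{\Balpha,\Bbeta}(Z))=C_{\Balpha,\Bbeta}|_{\Bx}\,C_{\Balpha,\Bbeta}|_{\By}\,k(\Bx,\By)$ and apply Lemma \ref{prop:BoundofSurplus}(ii) once in each argument, using that each section (and each finite linear combination of sections) lies in $\CalH^{1,c}_{mix}$ with uniformly controlled norm; this is exactly the two-argument application of the deterministic rate that the paper performs on $\boldsymbol{\sigma}^2$, and it delivers the $\CalO(2^{-2|\Balpha|})$ bound without any embedding claim.
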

\begin{proof}
% Because of the asymptotic equivalence between the Brownian kernel $k^{\rm BR}$ and the BdryMat\'ern kernel $k^{\rm BM}$ (Theorem \ref{thm:diffBdryGP}), we only need to prove this for $k^{\text{BR}}$.
Let $Z(\cdot)$ be a GP with kernel $k \in \CalH^{1,c}_{mix}(\CalX\times\CalX)$, and let $\mathcal{I}^{\rm BM}_n|_{\Bx}$ and $\mathcal{I}^{\rm BM}_n|_{\By}$ be the projection operator $\mathcal{I}^{\rm BM}_n$ in arguments $\Bx$ and $\By$. Consider the following hierarchical expansion of the so-called ``natural distance'' $\boldsymbol{\sigma}$:
\begin{align*}
    \boldsymbol{\sigma}^2(\Bx,\By)&:=\mathbb{E}\big[\big(Z(\Bx)-\mathcal{I}^{\rm BM}_nZ(\Bx)\big)\big(Z(\By)-\mathcal{I}^{\rm BM}_nZ(\By)\big)\big]\\
    &=k(\Bx,\By)-\mathcal{I}^{\rm BM}_n|_{\By}k(\Bx,\By)-\mathcal{I}^{\rm BM}_n|_{\Bx}k(\Bx,\By)-\mathcal{I}^{\rm BM}_n|_{\Bx} \mathcal{I}^{\rm BM}_n|_{\By}k(\Bx,\By)\\
    &=\{\text{I}-\mathcal{I}^{\rm BM}_n|_{\Bx}\}\{\text{I}-\mathcal{I}^{\rm BM}_n|_{\By}\}k(\Bx,\By).
\end{align*}%According to the proof of theorem 1 in \cite{Wang18}, we only need to show that the upper bound of the so called natural distance $\boldsymbol{\sigma}(\Bx,\By)$ for any $\Bx,\By\in\CalX$ induced by the GP $Z(\Bx)-\mathcal{I}^s_kZ(\Bx)$ is in the order $\CalO(n^{-1}[\log n]^{2(d-1)})$.
By Theorem \ref{thm:L1error}, we have:
$$\boldsymbol{\sigma}(\Bx,\By)=\CalO(n^{-1}[\log n)]^{2(d-1)})$$
for any $\Bx,\By\in\CalX$. This can then be plugged into the proof of Theorem 1 of  in \cite{Wang18} to prove the result.
\end{proof}
% \crd{[condense proof + put in main paper.]}

% The proof of this theorem can be found in appendix \ref{apdix:pfproberror}. The key ingredient for proving the above theorem is to expand the kernel function of $Z$ by a new form of expansion which we call hierarchical expansion. From theorem \ref{thm:4spacesequal}, we can expand any kernel $k\in\CalH^{1,c}_{mix}(\CalX\times\CalX)$ by the basis functions $\{\phi_{\Balpha,\Bbeta}\}$:
% \begin{equation}
% \begin{aligned}
%     k(\Bx,\By)&=\sum_{\Balpha\in\mathbb{Z}^d_{\geq 0}}\sum_{\boldsymbol{\gamma}\in\mathbb{Z}^d_{\geq 0}}\sum_{\Bbeta\in B_{\Balpha}}\sum_{\boldsymbol{\zeta}\in B_{\boldsymbol{\gamma}}}c_{(\Balpha,\Bbeta),(\boldsymbol{\gamma},\boldsymbol{\zeta})}[k(\cdot,\cdot)]\phi_{\Balpha,\Bbeta}(\Bx)\phi_{\boldsymbol{\gamma},\boldsymbol{\zeta}}(\By)\\
%     \end{aligned}
% \end{equation}
% where
% $$c_{(\Balpha,\Bbeta),(\boldsymbol{\gamma},\boldsymbol{\zeta})}[k(\cdot,\cdot)]:=\bigg(\prod_{i=1}^dA_{\alpha_i,\beta_i}\bigg)\bigg(\prod_{i=1}^dA_{\gamma_i,\zeta_i}\bigg)k(\bold{X}_{\Balpha},\bold{X}_{\boldsymbol{\gamma}}).$$
%  If the GP is without boundary information, the  connection between FEM and GP is lost and hence the probabilistic convergence rate one can prove is only a direct result of \cite{Wang18}, which is in a lower order as we will show in the next subsection.

\subsection{Comparison with Existing Results}
\label{sec:comp}

\begin{table}[t]
\caption{Convergence rates for BdryGP and existing rates in the literature. ``Optimal design'' refers to optimally-chosen points under a statistical criterion or error bound.}
\label{tab:convergence}
\centering
\begin{tabular}{ c c c c } 
\toprule
\textit{Work} & \textit{Design} & \textit{Type} & \textit{Rate} \\[1ex] 
\toprule
Current & Sparse grid & Deterministic, $L^p$ & $\smallO(n^{-1})$\\
Current & Sparse grid & Deterministic, uniform & $\CalO(n^{-1}[\log n]^{2(d-1)})$\\
\cite{Geer00} & Optimal design & Deterministic, $L^2$ & $\CalO(n^{-\frac{1}{2+d}})$\\
\cite{Wu1993} & Optimal design & Deterministic, uniform & $\CalO(n^{-\frac{1}{2d}})$ \\
\hline
Current & Sparse grid & Probabilistic, uniform & $\CalO_\mathbb{P}(n^{-1}[\log n]^{2d-\frac{3}{2}})$\\
\cite{Wang18} & Optimal Design & Probabilistic, uniform  & $\CalO_{\mathbb{P}}(n^{-\frac{1}{2d}}[\log n^{\frac{1}{2d}}]^{\frac{1}{2}})$\\
% \hline
\cite{Stein1990} & Full grid & Mean square, pointwise & $\CalO(n^{-\frac{1}{2d}})$\\
% \hline
\cite{Ritter00} & Optimal design & Mean square, $L^2$ & $\CalO(n^{-\frac{1}{2d}})$\\
% \hline
\toprule
\end{tabular}
\end{table}

We now compare these BdryGP rates to existing GP rates which do not incorporate boundary information. Table \ref{tab:convergence} summarizes several key results for the latter. Consider first the \textit{deterministic} rates, where $f$ is a deterministic function within a function space. For $f \in \CalH^1(\mathcal{X})$ (the first-order Sobolev space), \cite{Wu1993} proved a $L^\infty$ minimax rate of $\mathcal{O}(n^{1/(2d)})$ for radial basis interpolators. Under the same assumptions, \cite{Geer00} and \cite{Gu02} also proved a $L^2$ minimax rate of $\CalO(n^{-{1}/(2+d)})$ for kernel ridge regression. Without additional information on $f$, these rates are in general not improvable \citep{Stone82}. To contrast, by incorporating boundary information, the proposed BdryGP enjoys quicker convergence rates in sample size $n$, with an $L^p$ rate of $\smallO(n^{-1})$ and an $L^\infty$ rate of $\mathcal{O}(n^{-1}[\log n]^{2(d-1)})$. Furthermore, the BdryGP rates are more resistant to the ``curse-of-dimensionality''. As dimension $d$ grows large, the existing error rate $\mathcal{O}(n^{1/(2d)})$ grows exponentially in sample size $n$, whereas the BdryGP rates grow exponentially in a lower-order term $\log n$ (for $L^\infty$) or in constants (for $L^p$). This shows that, by incorporating boundary information, the BdryGP not only yields lower prediction errors for fixed dimension $d$, but maintains relatively good performance as dimension $d$ grows large.

Consider next the \textit{probabilistic} uniform rates, where $f$ follows a GP with kernel $k \in \CalH^{1,c}_{mix}( \CalX \times \CalX)$, which ensures sample paths are contained in the constrained mixed Sobolev space $\mathcal{H}_{mix}^{1,c}$. These probabilistic uniform GP rates were first studied in \cite{Tuo17} for the Mat\'ern kernel without boundary information. There, the authors proved an $L^p$ rate over the stochastic process (uniform in $x$) of $\mathcal{O}(n^{-1/(2d)} \sqrt{[\log n^{1/(2d)}]})$, and a probabilistic rate (uniform in $x$) of $\CalO_{\mathbb{P}}(n^{-1/(2d)} \sqrt{[\log n^{1/(2d)}]})$. To contrast, by incorporating boundary information, the same uniform rates are improved to $\mathcal{O}(n^{-1} [\log n]^{2d - 3/2})$ and $\CalO_{\mathbb{P}}(n^{-1} [\log n]^{2d - 3/2})$ in Theorem \ref{thm:proberror}, respectively. This again shows that, by incorporating boundary information, the BdryGP can yield lower prediction errors. 

It is worth mentioning that the constrained \textit{mixed} Sobolev space used here imposes greater smoothness than the Sobolev spaces used in existing rates, which may also contribute to our rate improvements. To parse out the effect from different function spaces, we can directly extend results from \cite{rieger17} and \cite{Tuo17} to show that, under \textit{unconstrained} function spaces of comparable smoothness to Theorems \ref{thm:L1error} and \ref{thm:proberror}, we achieve only $L^p$ rates of $\CalO(n^{-1/2}[\log n]^{(5/2)(d-1)})$ and $\CalO_{\mathbb{P}}(n^{-1/2}[\log n]^{(5/2)d-2})$ (a full proof is provided in the Appendix). These rates are of an order slower than the BdryGP rates in Theorems \ref{thm:L1error} and \ref{thm:proberror}, which confirms that boundary information indeed improves predictive performance.

\section{Numerical Experiment}
% Thus far, our paper has proved the convergence rate of kriging with BdryMat\'ern kernel given the boundary condition. In fact, Kriging prediction performances between sparse grid design and space-filling design has been numerically compared in \cite{Plumlee14}  which found that performance of sparse grid is competitive for smooth function and inferior for rough function. Our theorems perfectly explain this experimental result in that the convergence rate is almost $\CalO(n^{-2})$ for any underlying function $f\in\CalH^{2,c}_{mix}$ according to \cite{Bungartz04} while the convergence rate is almost $\CalO(n^{-1})$ for underlying function $f\in\CalH^{1,c}_{mix}$ according to theorem \ref{thm:L1error}.

% We first numerically compare the convergence rate of our Kriging algorithm on a set  of deterministic functions defined on $[0,1]^{10}$, which are 10-dimensional functions. We will compare the log values of $L^1$ error among BdryGP with full boundary information $I^{[0]}=I^{[1]}=[d]$, partial boundary information $I^{[0]}=[d]$, $I^{[1]}=\emptyset$ and GP whose kernel is the tensor product of Matérn kernels associated to $\mathcal{H}^1$, which corresponds to the case of no boundary information. More precisely, the kernel is in the following form:
% $$k(\Bx,\By)=e^{-c||\Bx-\By||_1}.$$

We now provide a small simulation study verifying the improved error convergence rate of the proposed BdryGP model over standard GP models (which do not incorporate boundary information). The set-up is as follows. We use three $d=10$-dimensional test functions from the emulation literature, taken from \cite{Surjano16}:
\begin{align*}
    \textit{Corner peak:} \quad f(\Bx) &= \left(1+\frac{\sum_{j=1}^dx_j}{d}\right)^{-d-1},\\
    \textit{Product peak:} \quad f(\Bx) &= \prod_{j=1}^d\left(1+10(x_j-0.25)^2\right)^{-1},\\
    \textit{Rosenbrock:} \quad f(\Bx) &= 4\sum_{j=1}^{d-1}(x_j-1)^2+400\sum_{j=1}^{d-1}\left((x_j-0.5)-2(x_j-0.5)^2\right)^2.
\end{align*}
We will compare two variants of the BdryGP model: (i) the BdryGP with \textit{full} boundary information (i.e., $I^{[0]} = I^{[1]} = [d]$), and (ii) the BdryGP with only \textit{partial} information on left boundaries (i.e., $I^{[0]} = [d]$, $I^{[1]} = \emptyset$), with a standard GP model with the product Mat\'ern-1/2 kernel. All models use a wavelength parameter of $\omega=1.0$, and are compared on the prediction error $\|f - \hat{f}\|_{L^1}$, which is approximated using 1000 uniformly sampled points in $\mathcal{X}$.

% The  errors will be estimated by the sample average of the absolute and square prediction errors at 1000 randomly selected points respectively. We will consider the following functions: 

%\begin{equation*}
%    \mu^*(\Bx)=\int_{[0,1]^d}\Phi_{\text{dist}(\Bx,\partial B)}(\Bx-\By)\mu(\By)d\By
%\end{equation*}
%where $\Phi_{\text{dist}(\Bx,\partial B)}(\cdot)$ is a mollifier that smooths $\mu$ on the interior of $[0,1]^d$ and leaves everything unchanged on the boundary.  For our numerical experiments,  $\mu$ is used as our mean function.

\begin{figure}
\centering
\includegraphics[width=0.32\textwidth]{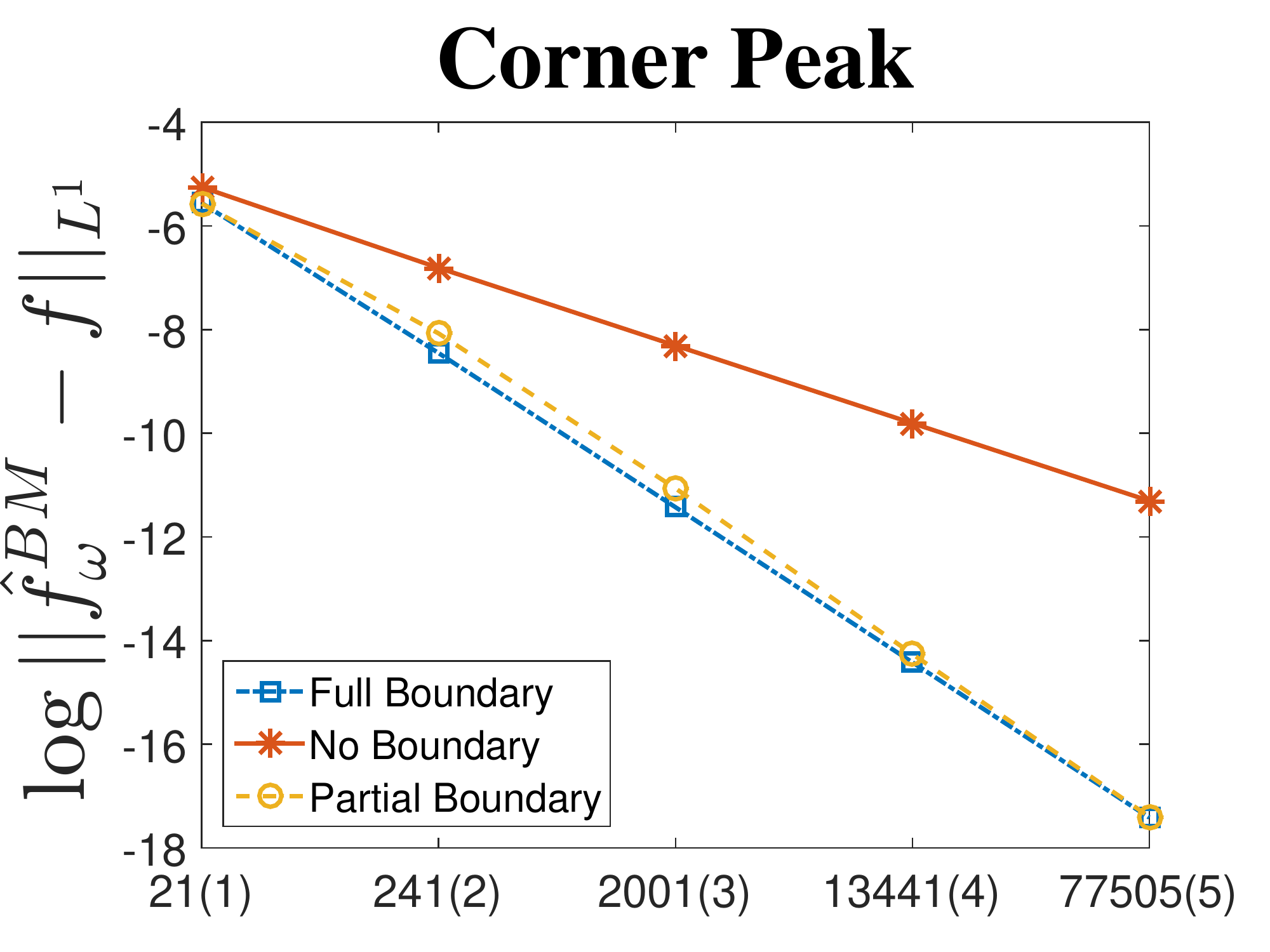}
\includegraphics[width=0.32\textwidth]{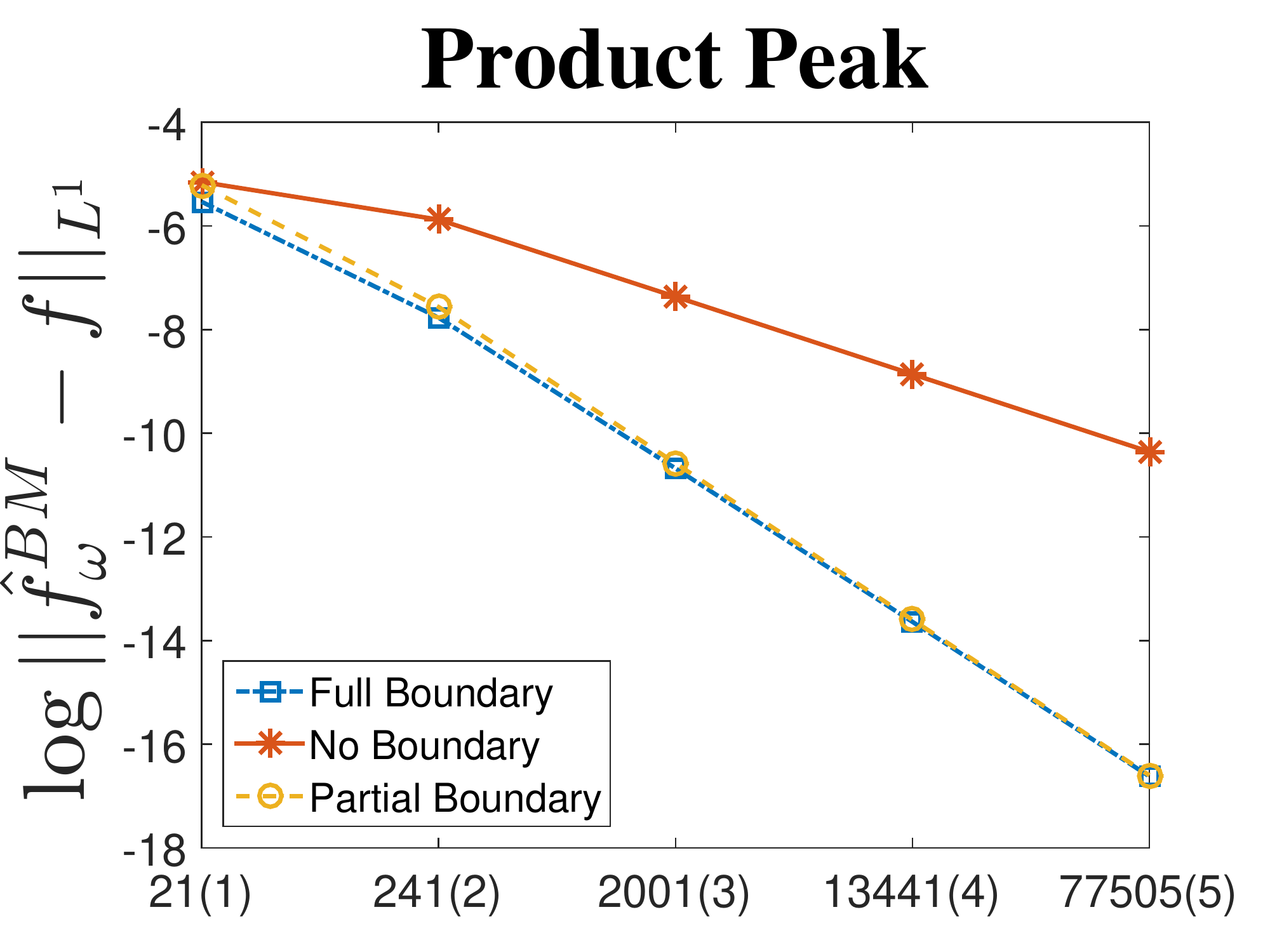}
\includegraphics[width=0.32\textwidth]{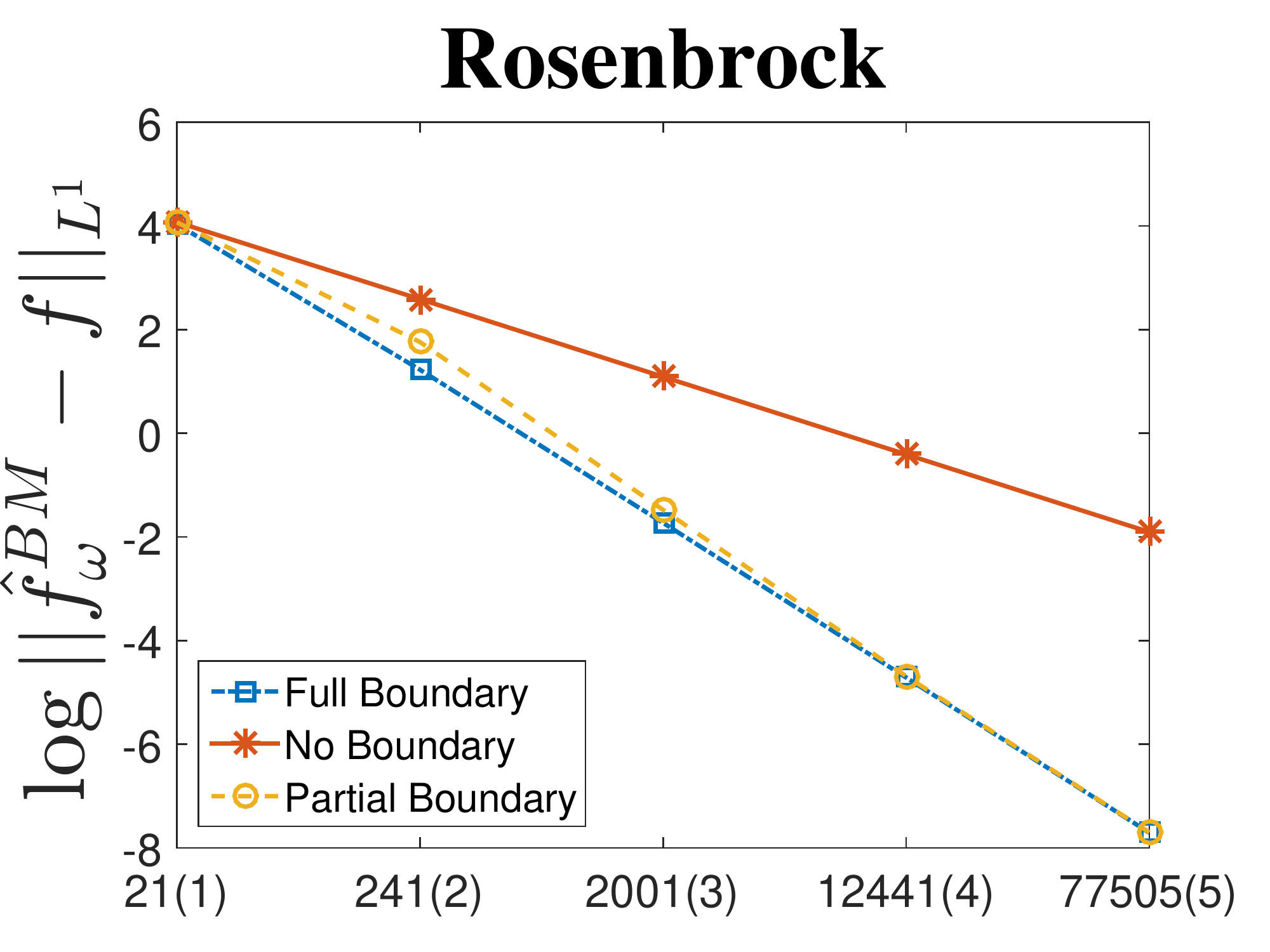}

\hspace{0.05cm}
\includegraphics[width=0.305\textwidth]{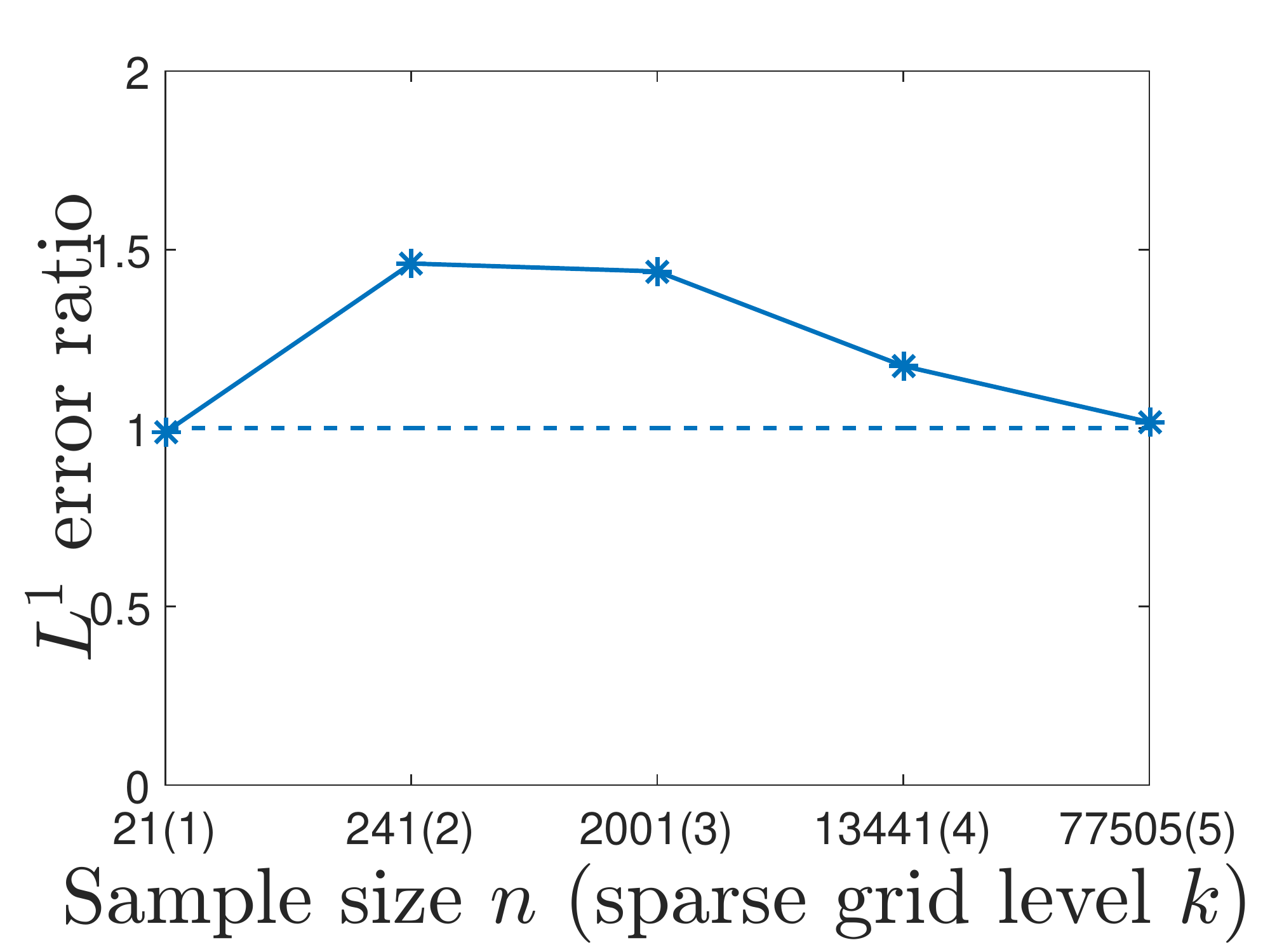}
\hspace{0.05cm}
\includegraphics[width=0.305\textwidth]{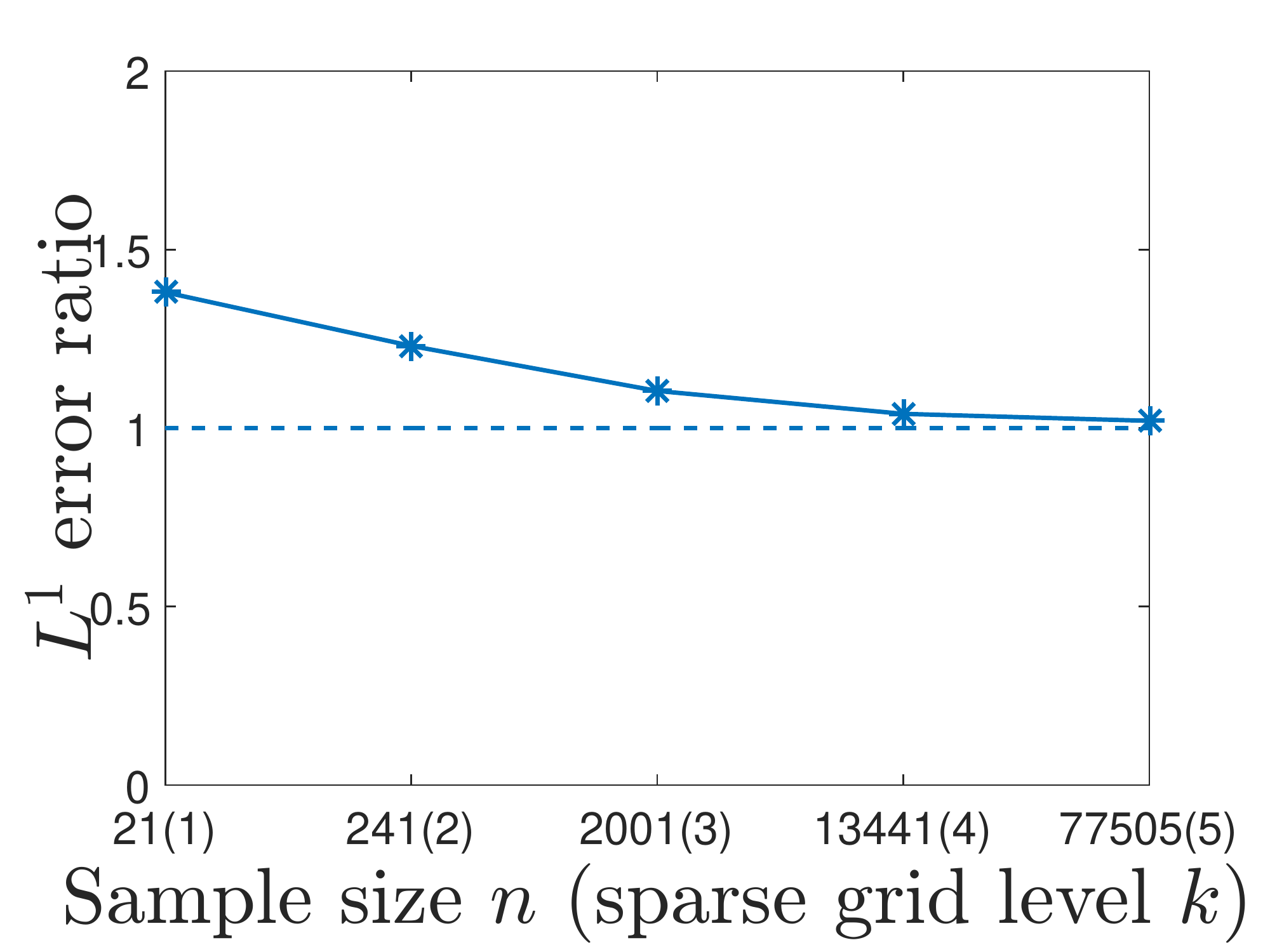}
\hspace{0.05cm}
\includegraphics[width=0.305\textwidth]{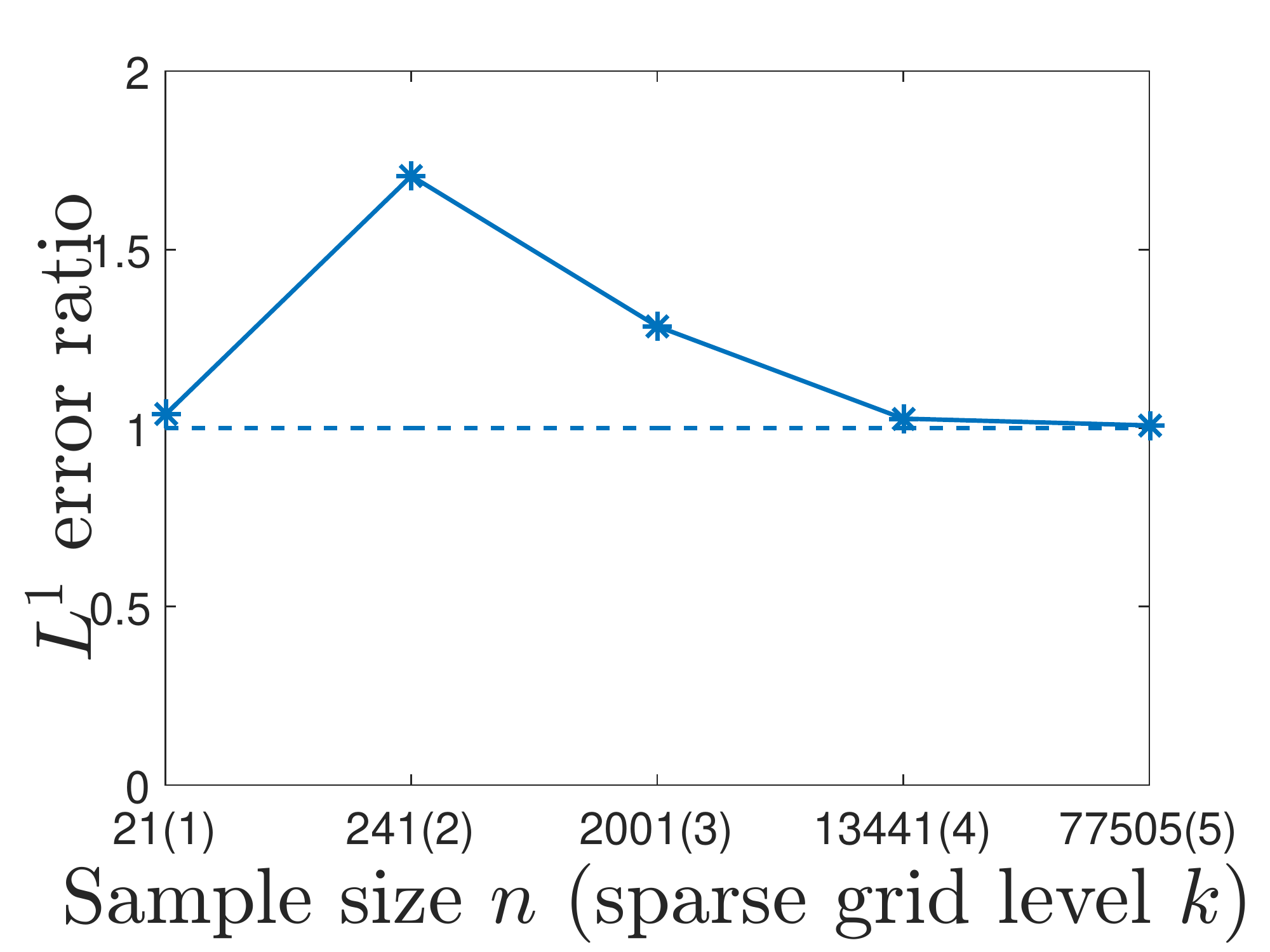}

\caption{(Top): Log-error $\log \|\hat{f}_n^{\rm BM}-f\|_{L^1}$ as a function of sample size $n$ (and sparse grid level $k$). (Bottom): The $L^1$ error ratio of the full boundary GP over the partial boundary GP, as a function of sample size $n$ (and sparse grid level $k$).}
\label{fig:logError}
\end{figure}

Figure \ref{fig:logError} (top) plots the log-error $\|f - \hat{f}\|_{L^1}$ as a function of sample size $n$ (and sparse grid level $k$). For all three functions, these log-errors appear to be linearly decreasing in sparse grid level $k$. Furthermore, the two BdryGP models (both of which incorporate some form of boundary information) yield much lower errors than the standard GP without boundary information, with the error decay slopes for BdryGP roughly double that for the standard GP model. This is in line with the convergence rates proven in Section \ref{sec:convrates}, which show that the $L^1$ error rates for BdryGP are on the order of $\CalO(2^{-k})$, but increase to $\CalO(2^{-{k}/{2}})$ without boundary information. 

To highlight the error gap between full and partial boundary information, Figure \ref{fig:logError} (bottom) plots the $L^1$ error ratio of the full boundary BdryGP over the partial boundary BdryGP. All ratios are above 1.0, which shows that full boundaries indeed yield more information on $f$ compared to partial boundaries. However, this improvement seems to diminish as sample size $n$ grows large; this suggests that the information on $f$ from design points can outweigh the additional information from full boundaries (over partial boundaries) for large sample sizes.

% According to the figures for the results of our study, we can obviously see that the log errors are approximately linear in the level of the associated sparse grid design.  This is because the $L^1$ errors for BdryGP and tensor Matérn are approximately $\CalO(2^{-k})$ and $\CalO(2^{-\frac{k}{2}})$ according to theorem \ref{thm:L1error} and theorem \ref{thm:lp_noBoundary}, where $k$ is the level of the sparse grid. The log error of BdryGP is almost one half of that of tensor Matérn according to our theorem and according to our numerical experiments. This illustrate our theorem and show that boundary information does help to improve the convergence rate. 
\iffalse
% We then investigate the mean square  error resulting from our Kriging algorithm. This can be thought of as the average mean squared prediction error over all possible sample paths drawn from BdryGP. We compared the average mean squared  error (MSE) resulting from the design strategies computed through 1000 Monte Carlo samples on $[0,1]^5$ from level $1$ to level $30$.
\begin{figure}[h]
\centering
\includegraphics[width=0.50\textwidth]{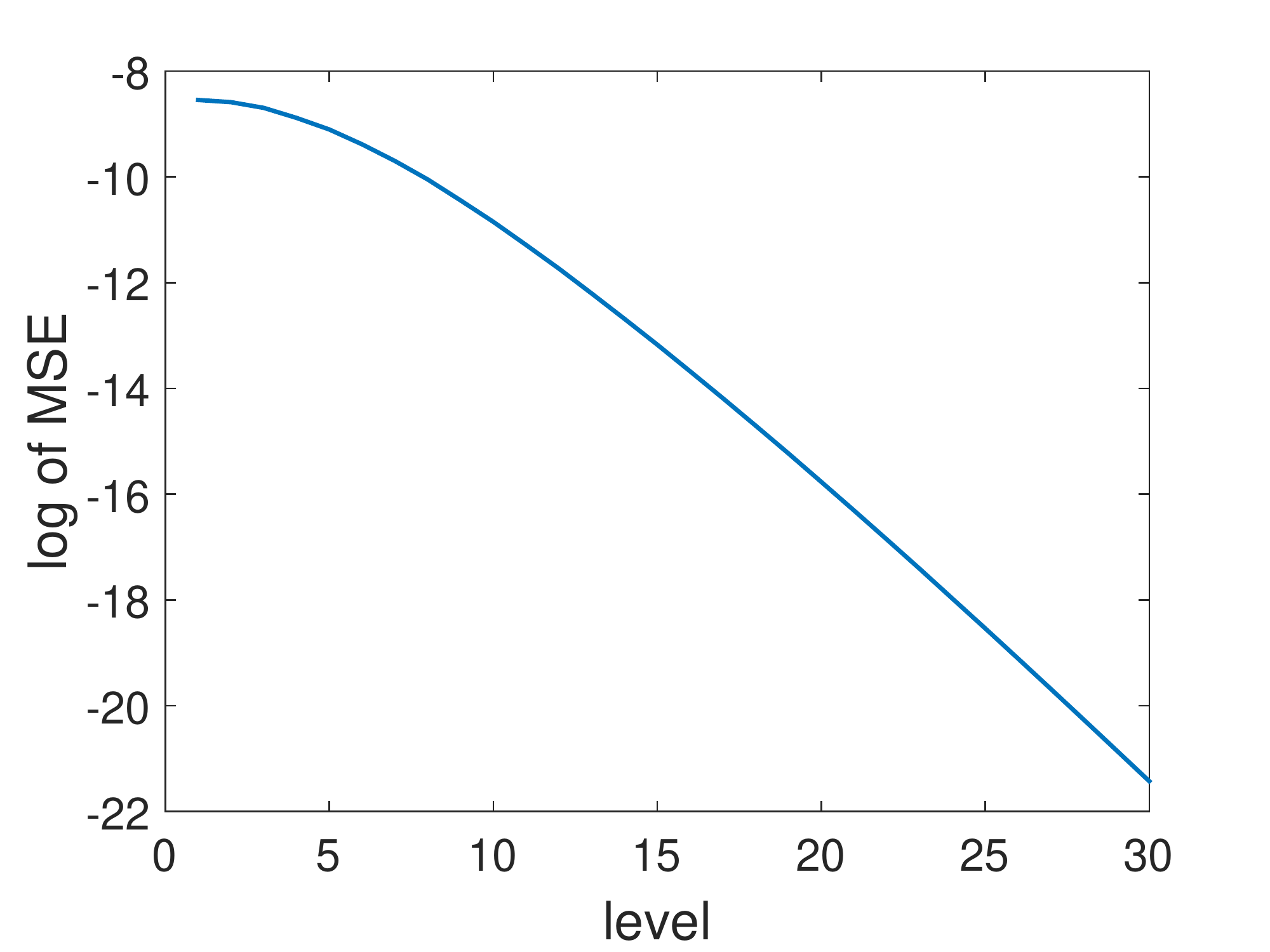}

\caption{log of MSE versus the level of associated sparse grid design on $[0,1]^5$\label{fig:logError} \crd{[update]}}
\end{figure}

We can notice that the log MSE curve is approximately linear in level $k$ when $k$ becomes large enough. According to theorem \ref{thm: HighProb_con}, the slope of log MSE is close to $-1$ for large $k$ which coincides with our numerical experiments. 
\fi

\section{Conclusion}
This paper presents a new Gaussian process model, called BdryGP, for incorporating one type of boundary information with provably improved convergence rates. The key novelty in BdryGP is a new BdryMat\'ern covariance function, which inherits the same smoothness properties of a tensor Mat\'ern kernel, while constraining sample paths to satisfy boundary information almost surely. Using a new connection between finite-element modeling and GP interpolation, we then show that under sparse grid designs, BdryGP enjoys improved convergence rates over standard GP models, which do not account for boundary information. By incorporating boundaries, our BdryGP rates are also more resistant to the well-known ``curse-of-dimensionality'' in nonparametric regression. Numerical simulations confirm these improved convergence rates, and demonstrate the improved performance of BdryGP over standard GP models.

While this paper provides an appealing theoretical framework for the BdryGP model, there are further developments which would be useful for practical implementation. For computational efficiency, one can leverage the equivalence between FEM and BdryGP (Section \ref{sec:fem}) to eliminate matrix computation steps for prediction and likelihood evaluations, which improves the scalability of BdryGP for big datasets. It would also be useful to investigate the behavior of BdryGP (e.g., consistency and convergence rates) under maximum likelihood estimation of model parameters.

\bibliographystyle{imsart}
\bibliography{imsart}

\pagebreak
\appendix
\section{}

%\crd{[pls update / correct notation and other points in proofs. i've highlighted a few below, but it's not an exhaustive list.]}
\subsection{Proof for Lemma \ref{lem:finitespace}}
\label{apdix:pf4hierdiff}
\begin{proof}
WLOG, we assume $I^{[0]}=[d]$ and $I^{[1]}=\emptyset$. Let
\begin{equation*}
    \bigotimes_{j=1 }^d\bigtriangleup_if(\Bx_{\Balpha,{\Bbeta}}):=\bigg(\prod_{j=}^d A_{\alpha_i,\beta_i}\bigg)f(\bold{X})
\end{equation*}
denote the Hierarchical Surplus $c_{\Balpha,\Bbeta}$ where the operator $\bigtriangleup_i$ is defined as:
\begin{equation*}
     \bigtriangleup_if(\Bx_{\Balpha,\Bbeta}):=
     \begin{cases}
     -\frac{1}{2}f(\Bx_{\Balpha,\Bbeta}+2^{-\alpha_i}e_i)+f(\Bx_{\Balpha,\Bbeta})-\frac{1}{2}f(\Bx_{\Balpha,\Bbeta}-2^{-\alpha_i}e_i) &\text{if} \ \alpha_i\geq 1,\\
     f(1) - f(0) &\text{if} \ \alpha_i=0.
     \end{cases}
\end{equation*}
Obviously, $\bigtriangleup_i$ is a linear operator. Let $\pi_{\mathcal{V}_{\Balpha}}[\cdot]$ be the projector to the space $\mathcal{V}_{\Balpha}$ as the one in equation (3.19) in \cite{Bungartz04}. The projection operator $\pi_{\mathcal{V}_\bold{n}}[\cdot]$  can be written as:
\begin{align*}
    &\ \ \ \ \  \pi_{\mathcal{V}_\bold{n}}[\cdot]\\
    &=\sum_{|\Balpha|_\infty\leq \bold{n}}\sum_{\Bbeta\in B_{\Balpha}}\phi_{\Balpha,\Bbeta}\bigotimes_{i=1}^d\bigtriangleup_i\\
    &=\sum_{|{\Balpha}|_\infty\leq \bold{n}}\sum_{\beta_d\in B_{\Balpha}}\phi_{\alpha_d,\beta_d}\bigtriangleup_d\sum_{\beta_{d-1}\in B_{\Balpha}}\phi_{\alpha_{d-1},\beta_{d-1}}\bigtriangleup_{d-1}\cdots\sum_{\beta_1\in B_{\Balpha}}\phi_{\alpha_1,\beta_1}\bigtriangleup_1\\
    &=\sum_{\alpha_d\leq n_d}\sum_{\beta_d\in B_{\Balpha}}\phi_{\alpha_d,\beta_d}\bigtriangleup_d\sum_{\alpha_{d-1}\leq n_{d-1}}\sum_{\beta_{d-1}\in B_{\Balpha}}\phi_{\alpha_{d-1},\beta_{d-1}}\bigtriangleup_{d-1}\cdots\sum_{\alpha_1\leq n_1}\sum_{\beta_1\in B_{\Balpha}}\phi_{\alpha_1,\beta_1}\bigtriangleup_1.
\end{align*}
As a result the posterior mean of BdryGP with Brownian kernel $\hat{f}_\bold{n}^{\text{BR}}$  and $\pi_{\mathcal{V}_{\bold{n}}}$ lie in  tensor product of spaces and we only need to show the equation holds for 1-d functions. We prove the equation by induction. when $n=1$, then according to  Theorem \ref{thm:LagrangeKrigingEquivalent}:
\begin{align*}
    \hat{f}_1^{\text{BR}}(x)&=
    \begin{cases}
    & 2f(\frac{1}{2})x \ \ \text{if} \  x\leq \frac{1}{2}\\
    & 2f(\frac{1}{2})(1-x)+f(1)(2x-1)\ \ \text{if} \ x>\frac{1}{2}
    \end{cases}\\
    &=[-\frac{1}{2}f(0)+f(\frac{1}{2})-\frac{1}{2}f(1)]\phi_{1,1}(x)+[-f(0)+f(1)]\phi_{0,1}(x)\\
    &=\pi_{\mathcal{V}_1}[f](x).
\end{align*}
Suppose the equation holds for $n=k$, and WLOG, suppose $x\in(x_{k,\beta_k},x_{k,\beta_k+1})$ for some $x_{k,\beta_k},x_{k,\beta_k+1}\in\BX_\bold{n}$ and $\beta_k$ is odd. So we have:
\begin{align*}
    \hat{f}_k^{\text{BR}}(x)&=f(x_{k,\beta_k})\phi_{k,\beta_k}(x)+f(x_{k,\beta_k+1})\phi_{k,\beta_k+1}(x)\\
    &=\sum_{n\leq k}\sum_{\beta\in B_n}c_{n,\beta}\phi_{n,\beta}(x)=\pi_{\mathcal{V}_k}[f]
\end{align*}
with $x\in \text{supp}[\phi_{n,\beta_n}]$ and $\beta_n\in B_n$. When $n=k+1$, we have:
\begin{align*}
    \pi_{\mathcal{V}_{k+1}}[f]&=\sum_{n\leq k+1}\sum_{\beta\in B_n}c_{n,\beta}\phi_{n,\beta}(x)\\
    &=f(x_{k,\beta_k})\phi_{k,\beta_k}(x)+f(x_{k,\beta_k+1})\phi_{k,\beta_k+1}(x)+c_{k+1,\beta_{k+1}}\phi_{k+1.\beta_{k+1}}(x).
\end{align*}
According to the following identities:
\begin{align*}
    &x_{k,\beta_k}=x_{k+1,\beta_{k+1}-1}\\
    &x_{k,\beta_k}=x_{k+1,\beta_{k+1}+1}
\end{align*}
and, WLOG, conditioned on the assumption $x\in(x_{k,\beta_k},x_{k+1,\beta_{k+1}})$, we have
\begin{align*}
    &\phi_{k,\beta_k}(x)=\frac{x_{k,\beta_k+1}-x}{2^{-k}}\\
    &\phi_{k,\beta_k+1}(x)=\frac{x-x_{k,\beta_k}}{2^{-k}}\\
    &\phi_{k+1,\beta_{k+1}}(x)=\frac{x-x_{k,\beta_k}}{2^{-k-1}}.
\end{align*}
Now we plug in equation (\ref{eq:HierarchicalSurplus}) and the above identities, we can have the result:
\begin{align*}
    \pi_{\mathcal{V}_{k+1}}[f]&=f(x_{k+1,\beta_{k+1}-1})\phi_{k+1,\beta_{k+1}-1}(x)+f(x_{k+1,\beta_{k+1}})\phi_{k+1,\beta_{k+1}}(x)\\
    &=\hat{f}^{\text{BR}}_{k+1}(x).
\end{align*}
\end{proof}

\subsection{Proof of Theorem \ref{thm:diffBdryGP}}
\label{apdix:pf4diffBdryGP}
\begin{proof}
WLOG, we assume that the mean function $\mu=0$. Let $f\in\CalH^{1,c}_{mix}$, then the difference between the two interpolator $\hat{f}^{\rm BR}_{n}$ and $\hat{f}^{\rm BM}_{n}$ conditioned on $\BX^{\rm SP}_k$ can be written as
\begin{align*}
    \delta(\Bx)&:= |\hat{f}^{\rm BR}_{n}(\Bx)-\hat{f}^{\rm BM}_{n}(\Bx)|\\
    &=|[f(\Bx)-\hat{f}^{\rm BR}_{n}(\Bx)]-[f(\Bx)-\hat{f}^{\rm BM}_{n}(\Bx)]|.
\end{align*}
We first define the following hierarchical difference functions:
\begin{align*}
    \Delta^{\text{BR}}_{\alpha_j}[f](x_j)
    &:=\{k^{\text{BR}}(x_j,\BX_{\alpha_j})[k^{\text{BR}}(\BX_{\alpha_j},\BX_{\alpha_j})]^{-1}-\\
    &\ \ \ \ k^{\text{BR}}(x_j,\BX_{\alpha_j-1})[k^{\text{BR}}(\BX_{\alpha_j-1},\BX_{\alpha_j-1})]^{-1}\}f(\BX_{\alpha_j})\\
    \Delta^{\text{BM}}_{\alpha_j}[f](x_j)
    &:=\{k^{\text{BM}}_{\omega_j}(x_j,\BX_{\alpha_j})[k^{\text{BM}}_{\omega_j}(\BX_{\alpha_j},\BX_{\alpha_j})]^{-1}-\\
    &\ \ \ \ k^{\text{BM}}_{\omega_j}(x_j,\BX_{\alpha_j-1})[k^{\text{BM}}_{\omega_j}(\BX_{\alpha_j-1},\BX_{\alpha_j-1})]^{-1}\}f(\BX_{\alpha_j}).
\end{align*}
According to equation (2) and (3) in \cite{Barthelmann00},  we have the following expansion of the error terms:
\begin{align*}
    &f(\Bx)-\hat{f}^{\rm BR}_{n}(\Bx)=\sum_{|\Balpha|\geq k+d}\bigotimes_{j=1}^d\Delta^{\text{BR}}_{\alpha_j}[f](x_j)\\
    &f(\Bx)-\hat{f}^{\rm BM}_{n}(\Bx)=\sum_{|\Balpha|\geq k+d}\bigotimes_{j=1}^d\Delta^{\text{BM}}_{\alpha_j}[f](x_j)
\end{align*}
where
$$\sum_{|\Balpha|\geq k+d}\bigotimes_{j=1}^d\Delta^{\text{BR}}_{\alpha_j}[f](x_j)= \sum_{|\Balpha|\geq k+d}f_{\Balpha}(\Bx)$$
We now want to write the expansion of $f(\Bx)-\hat{f}^{\rm BM}_{n}(\Bx)$  in terms of $\{\Delta^{\text{BR}}_{\alpha_j}[f](x_j)\}$. According to Theorem 2 in \cite{DingZhang18}, for any 1-d function $f\in\mathcal{H}^{1,c}_{mix}$, we can write the BLUE of kernel $k^{\text{BM}}_{\omega_j}$ explicitly:
\begin{align*}
    & \ \ \ \ k^{\text{BM}}_{\omega_j}(x_j,\BX_{\alpha_j})[k^{\text{BM}}_{\omega_j}(\BX_{\alpha_j},\BX_{\alpha_j})]^{-1}f(\BX_{\alpha_j})\\
    &=\frac{\sinh[\omega_j(x_{\alpha_j,\beta_j+1}-x_j)]}{\sinh[\omega_j2^{-\alpha_j}]}f(x_{\alpha_j,\beta_j})+\frac{\sinh[\omega_j(x_j-x_{\alpha_j,\beta_j})]}{\sinh[\omega_j2^{-\alpha_j}]}f(x_{\alpha_j,\beta_j+1})\\
    &=\frac{x_{\alpha_j,\beta_j+1}-x_j}{2^{-\alpha_j}}f(x_{\alpha_j,\beta_j})+\frac{x_j-x_{\alpha_j,\beta_j}}{2^{-\alpha_j}}f(x_{\alpha_j,\beta_j+1})+\CalO(2^{-2\alpha_j})\\
    &=k^{\text{BR}}(x_j,\BX_{\alpha_j})[k^{\text{BR}}(\BX_{\alpha_j},\BX_{\alpha_j})]^{-1}f(\BX_{\alpha_j})+\CalO(2^{-2\alpha_j})
\end{align*}
where $x_{\alpha_j,\beta_j}$ and $x_{\alpha_j,\beta_j+1}$ are the points that satisfy $x_j\in [x_{\alpha_j,\beta_j},x_{\alpha_j,\beta_j+1}]$, the second equality of the above equation is from Taylor expansion, and the last equality is from the proof of Theorem \ref{thm:LagrangeKrigingEquivalent}.  So the following equality holds:
\begin{align*}
    \bigotimes_{j=1}^d\Delta^{\text{BM}}_{\alpha_j}(x_j)&=\bigotimes_{j=1}^d\{\Delta^{\text{BR}}_{\alpha_j}[f](x_j)+\CalO(2^{-2\alpha_j})\}\\
    &=\bigotimes_{j=1}^d\Delta^{\text{BR}}_{\alpha_j}[f](x_j)+\sum_{j=1}^d\CalO\big(2^{-\alpha_j}\bigotimes_{j=1}^d\Delta^{\text{BR}}_{\alpha_j}[f](x_j)\big)
\end{align*}
where the second equality is from the fact that $\Delta^{\text{BR}}_{\alpha_j}[f](x_j)$ is in an order no smaller than $\CalO(2^{-2\alpha_j})$. Therefore, we can have the final result:
\begin{align*}
    f(\Bx)-\hat{f}^{\rm BM}_{n}(\Bx)&=\sum_{|\Balpha|\geq k+d}\bigotimes_{j=1}^d\Delta^{\text{BM}}_{\alpha_j}[f](x_j)\\
    &=\sum_{|\Balpha|\geq k+d}\left[1+\sum_{j=1}^d\CalO(2^{-\alpha_j})\right]f_{\Balpha}(\Bx)\\
    &=\CalO\left(\sum_{|\Balpha|\geq k+d}f_{\Balpha}\right).
\end{align*}
\end{proof}

\subsection{Proof of Lemma \ref{prop:BoundofSurplus}}

\begin{proof} Let $\bold{i}$ and $\bold{h}$  denote $(i_1,i_2,\cdots,i_d)$ and $(2^{-\alpha_1},\cdots,2^{-\alpha_d})$ respectively, and let $\bold{i}\bold{h}$ denote $(i_12^{-\alpha_1},\cdots,i_d2^{-\alpha_d})$. Let $f(\Bx_{I};\Bx)$ denote $f$ with fixed $x_i, i\not\in I$. According to equation (\ref{eq:HierarchicalSurplus}), we write $c_{\Balpha,\Bbeta}$ as
\begin{align*}
    c_{\Balpha,\Bbeta}&=\bigg(\prod_{i=1}^dA_{\alpha_i,\beta_i}\bigg)f(\bold{X})\\
    &=\sum_{i_d=-1}^{1}\cdots\sum_{i_1=-1}^1\bigg(\frac{-1}{2}\bigg)^{\sum_{j=1}^d|i_j|}f(\Bx_{\Balpha,\Bbeta}+\bold{ih})\\
    &=\sum_{i_d=-1}^{1}\cdots\sum_{i_{2}=-1}^1\bigg(\frac{-1}{2}\bigg)^{\sum_{j=2}^{d}|i_j|}\left(-\frac{1}{2}\right)\cdot\\
    &\quad \quad \quad \quad \quad \int_{\Bx_{\alpha_1,\beta_1}}^{\Bx_{\alpha_1,\beta_1}+h_1}\partial_{x_1}[f(s_1;\Bx_{\Balpha,\Bbeta})-f(s_1-h_1;\Bx_{\Balpha,\Bbeta})]ds_1\\
    %&=\sum_{i_d,\cdots,i_3=-1}^{1}\bigg(\frac{-1}{2}\bigg)^{\sum_{j=3}^{d}|i_j|}\int_{\Bx_{\alpha_2,\beta_2}}^{\Bx_{\alpha_2,\beta_2}+h_2}\int_{\Bx_{\alpha_1,\beta_1}}^{\Bx_{\alpha_1,\beta_1}+h_1}\partial_{x_1x_2}\sum_{i_1,i_2=0}^1(-1)^{|i_1|+|i_2|}f(s_1-i_1h_1,s_2-i_2h_2;\Bx_{\alpha,\beta})ds_1ds_2\\
    &=\left(-\frac{1}{2}\right)^d\int_{\Bx_{\alpha_d,\beta_d}}^{\Bx_{\alpha_d,\beta_d}+h_d}\cdots\int_{\Bx_{\alpha_1,\beta_1}}^{\Bx_{\alpha_1,\beta_1}+h_1}D^{\bold{1}}\sum_{i_1,\cdots,i_d=0}^1(-1)^{\sum_{j=1}^d|i_j|}f(\bold{s}-\bold{ih})d\bold{s}.
\end{align*}
When $d=1$, any function in $\CalH^{1,c}_{mix}\subset\CalH^1(\CalX)$ can be extended to trace-zero function, which is the limit of a sequence of smooth functions under the $\CalH^{1,c}_{mix}$ norm (Theorem 5.5.2 of \cite{Evans15}). When $d>1$, any $f\in\CalH^{1,c}_{mix}$ is also the limit of a sequence of smooth functions $\{g^n\}$ under  the $\CalH^{1,c}_{mix}$ norm because $\CalH^{1,c}_{mix}$ is the tensor product of 1-d function spaces. Therefore, according to Lebesgue differentiation theorem, for almost all $\Bx \in\mathcal{X}$:
\begin{align*}
    &\ \ \ \  \int_{\Bx}^{\Bx+h}|D^{\bold{1}}f(\bold{s})-D^{\bold{1}}f(\bold{s}-h)|d\bold{s}\\
    &\leq \int_{\Bx-h}^{\Bx+h}|D^{\bold{1}}f(\bold{s})-D^{\bold{1}}g^n(\bold{s})|d\bold{s}+\int_{\Bx}^{\Bx+h}|D^{\bold{1}}g^n(\bold{s})-D^{\bold{1}}g^n(\bold{s}-h)|d\bold{s}\\
    &\leq Ch^{1+\gamma}
\end{align*}
where the last line is because the first term of the second line can be arbitrarily small by letting $n$ large from the trace-zero theorem and the second term is the difference of two smooth functions and hence we can use Hölder's condition to have an upper bound.

Now, WLOG, we assume $\alpha_1=|{\Balpha}|_\infty$ and then, as long as there is no singular point $\Bx$ that does not satisfy the Hölder condition near $\Bx_{\Balpha,\Bbeta}$, we can have:
\begin{align*}
    |c_{\Balpha,\Bbeta}|&=\left(\frac{1}{2}\right)^d|\int_{\Bx_{\alpha_d,\beta_d}}^{\Bx_{\alpha_d,\beta_d}+h_d}\cdots\int_{\Bx_{\alpha_1,\beta_1}}^{\Bx_{\alpha_1,\beta_1}+h_1}D^{\bold{1}}\sum_{i_1,\cdots,i_d=0}^1(-1)^{\sum_{j=1}^d|i_j|}f(\bold{s}-ih)d\bold{s}|\\
    &\leq \left(\frac{1}{2}\right)^d\int_{\Bx_{\alpha_d,\beta_d}}^{\Bx_{\alpha_d,\beta_d}+h_d}\cdots\int_{\Bx_{\alpha_1,\beta_1}}^{\Bx_{\alpha_1,\beta_1}+h_1}\\
    & \quad \quad \quad \quad  \sum_{i_2,\cdots,i_d=0}^1|D^{\bold{1}}f(s_1;\bold{s}-\bold{i}\bold{h})-D^{\bold{1}}f(s_1-h_1;\bold{s}-\bold{i}\bold{h})|d\bold{s}\\
    &\leq Ch_1^\gamma\prod_{i=1}^dh_i\\
    &=C2^{-\{\gamma|{\Balpha}|_{\infty}+|{\Balpha}|\}},
\end{align*}
 where the third line is from the inequality from Lebesgue differentiation theorem we proved previously.  If the Hölder condition fails at a specific point, then we can begin with the second line of the above equation to derive that $|c_{\Balpha,\Bbeta}|=\CalO(2^{-|{\Balpha}|})$ via the inequality:
\begin{align*}
  &  \left(\frac{1}{2}\right)^d \int_{\Bx_{\alpha_d,\beta_d}}^{\Bx_{\alpha_d,\beta_d}+h_d}\cdots\int_{\Bx_{\alpha_1,\beta_1}}^{\Bx_{\alpha_1,\beta_1}+h_1}\\ & \ \ \ \ \sum_{i_2,\cdots,i_d=0}^1|D^{\bold{1}}f(s_1;\bold{s}-ih)-D^{\bold{1}}f(s_1-h_1;\bold{s}-ih)|d\bold{s}\\
  &\leq \frac{1}{2^d}2^{-|{\Balpha}|}\sum_{i_2,\cdots,i_d=0}^1 ||D^{\bold{1}}f(s_1;\bold{s}-ih)-D^{\bold{1}}f(s_1-h_1;\bold{s}-ih)||_{L^2(\mathcal{X})}.
\end{align*}
This proves the claim.

\end{proof}

\subsection{Proof of Lemma \ref{lem:id}}
\begin{proof}
The result follows from direct calculations:
\begin{align*}
    & \ \ \ \ \sum_{i=0}^\infty x^i{i+k+d-1 \choose d-1}\\
    &=\frac{x^{-k}}{(d-1)!}(\sum_{i\geq 0}x^{i+k+d-1})^{(d-1)}\\
    &=\frac{x^{-k}}{(d-1)!}\left(x^{k+d-1}\frac{1}{1-x}\right)^{(d-1)}\\
    &=\frac{x^{-k}}{(d-1)!}\sum_{j=0}^{d-1}{d-1 \choose j}(x^{k+d-1})^{(j)}\left(\frac{1}{1-x}\right)^{(d-1-j)}\\
    &=\sum_{j=0}^{d-1}{d-1 \choose j}\frac{(k+d-1)!}{(k+d-1-j)!}x^{d-1-j}\frac{(d-1-j)!}{(d-1)!}\left(\frac{1}{1-x}\right)^{d-1-j+1}\\
    &=\sum_{j=0}^{d-1}{k+d-1 \choose j}\left(\frac{x}{1-x}\right)^{d-1-j}\frac{1}{1-x}.
\end{align*}
\end{proof}

%Now we are prepared for the  proof of theorem \ref{thm:L1error}:

\subsection{$L^P$ Convergence Rate without Boundary Information}
\label{apdix:pflperror_noboundary}
\begin{theorem}\label{thm:lp_noBoundary}
Let $f\in\mathcal{H}^1_{mix}$ and $\Phi$ be the kernel whose native space is equivalent to $\mathcal{H}^1_{mix}$.  Let $\hat{f}^{\rm SP}_k$ be the posterior mean of the GP with kernel $\Phi$ conditioned on a sparse grid design $\BX^{\rm SP}_k$ with $n$ design points. Then:
$$||f-f^s_k||_{L^\infty}=\CalO(n^{-\frac{1}{2}}[\log n]^{\frac{5}{2}(d-1)}).$$
 \end{theorem}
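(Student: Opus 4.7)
\subsection*{Proof Proposal}

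The plan is to adapt the multilevel/sparse-grid machinery of Theorem \ref{thm:L1error} to the unconstrained setting, where the RKHS of $\Phi$ is the full mixed Sobolev space $\mathcal{H}^1_{mix}$. The essential new difficulty is that, without the Dirichlet constraint on the boundary, we lose the Poincar\'e-type tightening that gave the $\mathcal{O}(2^{-|\Balpha|})$ bound on hierarchical surpluses in Lemma \ref{prop:BoundofSurplus}. Only a weaker $\mathcal{O}(2^{-|\Balpha|/2})$ bound (up to logarithmic factors) should survive, and this is precisely what will cost us a full power of $n^{1/2}$ compared with Theorem \ref{thm:L1error}.

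First I would use the RKHS best-approximation property: the posterior mean $\hat f_k^{\rm SP}$ is the $\mathcal{H}_\Phi$-orthogonal projection of $f$ onto $\mathrm{span}\{\Phi(\cdot,\Bx_i) : \Bx_i \in \BX^{\rm SP}_k\}$, so
\begin{equation*}
\lvert f(\Bx) - \hat f_k^{\rm SP}(\Bx) \rvert \;\leq\; P_\Phi(\Bx; \BX^{\rm SP}_k)\,\|f\|_{\mathcal{H}^1_{mix}},
\end{equation*}
where $P_\Phi$ denotes the power function. Taking $L^\infty$ over $\Bx \in \mathcal{X}$ reduces the theorem to a uniform bound on $P_\Phi$ over the sparse grid. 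To estimate this, I would invoke the sampling inequality of \cite{rieger17} for functions with bounded mixed derivatives: for any $g \in \mathcal{H}^1_{mix}$ vanishing on a design $\BX$ with fill distance $h_{\BX,\mathcal{X}}$,
\begin{equation*}
\|g\|_{L^\infty(\mathcal{X})} \;\leq\; C\,h_{\BX,\mathcal{X}}^{1/2}\,|\log h_{\BX,\mathcal{X}}|^{(d-1)/2}\,\|g\|_{\mathcal{H}^1_{mix}},
\end{equation*}
where the $h^{1/2}$ reflects the 1D $H^1 \hookrightarrow L^\infty$ embedding gap and the $|\log h|^{(d-1)/2}$ is the standard mixed-regularity correction.

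Next I would control the $\mathcal{H}^1_{mix}$ norm of the residual $f - \hat f_k^{\rm SP}$. The projection property gives $\|f - \hat f_k^{\rm SP}\|_{\mathcal{H}^1_{mix}} \leq \|f\|_{\mathcal{H}^1_{mix}}$, but to extract the additional decay in the grid level I would follow the residual-estimate strategy in \cite{Tuo17}, interpolating between the RKHS norm and the $L^2$ best-approximation rate on a sparse grid. For mixed-smoothness functions this introduces another factor of $|\log h|^{d-1}$. Combining these two estimates, applied to $g = f - \hat f_k^{\rm SP}$, yields
\begin{equation*}
\|f-\hat f_k^{\rm SP}\|_{L^\infty} \;\leq\; C\,h^{1/2}\,|\log h|^{(d-1)/2}\cdot|\log h|^{d-1}\,\|f\|_{\mathcal{H}^1_{mix}}.
\end{equation*}

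Finally I would convert the grid-level bound into a sample-size bound. The sparse grid $\BX^{\rm SP}_k$ has fill distance $h \asymp 2^{-k}$ and cardinality $n \asymp 2^k k^{d-1}$ (Lemma \ref{lem:num4SPG}), so $h^{1/2} \asymp n^{-1/2}\,[\log n]^{(d-1)/2}$ and $|\log h| \asymp \log n$. Substituting gives the advertised rate
\begin{equation*}
\|f - \hat f_k^{\rm SP}\|_{L^\infty} \;=\; \mathcal{O}\!\left(n^{-1/2}\,[\log n]^{5(d-1)/2}\right),
\end{equation*}
after collecting the three logarithmic contributions $(d-1)/2 + (d-1)/2 + (d-1) = 5(d-1)/2$.

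The main obstacle will be matching the exact exponent $5(d-1)/2$ on the logarithm: this requires that the sampling inequality from \cite{rieger17}, the residual-norm control in \cite{Tuo17}, and the conversion $h \leftrightarrow n$ all be tracked with the correct constants and power-of-$\log$ factors, rather than losing them to inequality slack. A secondary subtlety is that, without boundary conditions, one cannot identify $\mathcal{H}_\Phi$ with $\mathcal{V}$ as in Theorem \ref{thm:4spacesequal}, so the Brownian-kernel/FEM identification is unavailable and we must work abstractly through the power function rather than through hierarchical surpluses.
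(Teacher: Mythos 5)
Your high-level skeleton---reduce the problem to a sampling inequality for $\CalH^1_{mix}$ on a sparse grid and apply it to the residual $f-\hat f^{\rm SP}_k$, which vanishes on the design so that the interpolation term drops out---is indeed the route the paper takes via \cite{rieger17}. But the quantitative core of your argument does not hold together. First, your bookkeeping is internally inconsistent: the three logarithmic contributions you list, $(d-1)/2$ from $h^{1/2}\asymp n^{-1/2}[\log n]^{(d-1)/2}$, $(d-1)/2$ from the sampling inequality, and $(d-1)$ from the residual estimate, sum to $2(d-1)$, not the advertised $5(d-1)/2$; so even granting every intermediate claim you do not reach the stated exponent. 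Second, the step in which you extract ``additional decay in the grid level'' from the $\CalH^1_{mix}$ norm of the residual by ``interpolating between the RKHS norm and the $L^2$ best-approximation rate'' is both unjustified and unnecessary: the orthogonal projection property gives only $\|f-\hat f^{\rm SP}_k\|_{\CalH^1_{mix}}\leq \|f\|_{\CalH^1_{mix}}$ with no decay, and no decay of that norm is needed---in the paper's argument all of the decay comes from the sampling inequality itself, applied to a function that vanishes on $\BX^{\rm SP}_k$.

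The missing concrete ingredient is the derivation of the sampling inequality with the correct powers. The paper obtains it by (i) modifying equation (20) of \cite{rieger17}: for a univariate $f$ with $f'\in L^2$, H\"older's inequality yields the polynomial best-approximation bound $\mathcal{E}(f;\pi_m)_{L^p}\leq c\,m^{-1/2}\|f\|_{\CalH^1}$ (this half power is exactly the $H^1\hookrightarrow L^\infty$ embedding gap you allude to, but it must actually be proved at the univariate polynomial-approximation level); and (ii) pushing that univariate estimate through the Smolyak construction (Theorem 8 of \cite{Barthelmann00} together with the proof of Theorem 9 of \cite{rieger17}) to obtain
\begin{equation*}
\|g\|_{L^p}\;\leq\; C\binom{q-1}{d-1}\,n^{-\frac{1}{2}}[\log n]^{\frac{3}{2}(d-1)}\,\|g\|_{\CalH^1_{mix}}\;+\;\binom{q-d}{d-1}\max\bigl|g(\BX^{\rm SP}_q)\bigr|.
\end{equation*}
Applied to $g=f-\hat f^{\rm SP}_k$ the second term vanishes, and the exponent $5(d-1)/2$ arises as $\tfrac{3}{2}(d-1)$ from the sampling inequality plus $(d-1)$ from $\binom{q-1}{d-1}=\CalO([\log n]^{d-1})$---a split quite different from the one you propose. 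Without establishing (i) and (ii), your claimed form of the sampling inequality, $C\,h^{1/2}|\log h|^{(d-1)/2}$, is an unsupported guess that carries the wrong logarithmic power.
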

% This theorem is a direct result of Corollary 2 in \cite{rieger17}.

\begin{proof}
We replace $|f(x+\delta)-f(x)|$ with $[\int_0^1|f(x+\delta)-f(x)|^pdx]^{\frac{1}{p}}$ in equation (20) in \cite{rieger17} and use Hölder's inequality to get:
$$\int_0^1|f(x+\delta)-f(x)|^pdx=\int_0^1\left|\int_x^{x+\delta}f'(s)ds\right|^pdx\leq \delta^{\frac{1}{2}}||f'||_{L^2}. $$
As a result, we can have the following inequality:
$$\mathcal{E}(f;\pi_m(I))_{L^p}\leq cm^{-1+\frac{1}{2}}||f||_{\mathcal{H}^1}$$
where $\mathcal{E}(f;V)_{L^p}$ is the best approximation error for a given $f$ from $V$ measured in $L_p$ norm and $\pi_m$ is the set of polynomials of degree less than $m$. On the other hand, Theorem 8 in \cite{Barthelmann00} also holds true for $L^p$ norm, therefore,  by following the proof for Theorem 9 in \cite{rieger17}, we can have the following inequality:
$$||f||_{L^p(T^d)}\leq C{q-1 \choose d-1 }n^{-\frac{1}{2}}[\log n]^{\frac{3}{2}(d-1)}||f||_{\mathcal{H}^1_{mix}}+{q-d \choose d-1 }\max |f(\BX^s_q)|.$$
We then replace $f$ with $f-f^s_k$ . Because $f^s_k$ is exact on $\BX^{\rm SP}_q$ the second term on the right hand side vanishes. We have shown in Theorem \ref{thm:L1error} that ${q-1 \choose d-1 }=\CalO([\log n]^{d-1})$ which leads to the final result.
\end{proof}

\subsection{Probabilistic Convergence Rate without Boundary Information}
\label{apdix:pfproberror_noboundary}
\begin{theorem}
  \label{thm:proberror_noboundary}
  Suppose $I^{[0]} \cup I^{[1]} = \emptyset$, and assume the sparse grid design $\BX_k^{\rm SP}$ with $n = |\BX_k^{\rm SP}|$. Let $Z(\cdot)$ be a GP with kernel $k \in \CalH^{1}_{mix}(\CalX\times\CalX)$ with no boundary information, and  $\mathcal{I}_n^{\rm s}$ be the GP interpolation operator satisfying $\mathcal{I}_s^{\rm BM} f = \hat{f}_n^{\rm s}$, where  $\hat{f}_n^{\rm s}$ is the posterior mean. Then:
 \begin{equation*}
     \mathbb E\left[\sup_{\Bx\in\CalX}|Z(\Bx)-\mathcal{I}^s_kZ(\Bx)|^p\right]^{\frac{1}{p}}=\CalO(n^{-\frac{1}{2}}[\log n]^{\frac{5}{2}d-2}), \quad 1 \leq p < \infty,
     \end{equation*}
     and
     \begin{equation*}
     \sup_{\Bx\in \CalX}|Z(\Bx)-\mathcal{I}^s_kZ(\Bx)|=\CalO_{\mathbb{P}}(n^{-\frac{1}{2}}[\log n]^{\frac{5}{2}d-2}).
 \end{equation*}
  \end{theorem}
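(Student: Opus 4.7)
The plan is to mirror the proof strategy of Theorem \ref{thm:proberror} essentially verbatim, with the only substantive change being that the deterministic uniform rate fed into the Gaussian concentration step is the no-boundary rate from Theorem \ref{thm:lp_noBoundary} rather than the BdryGP rate from Theorem \ref{thm:L1error}.

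First, I would introduce the natural (pseudo-)distance associated with the interpolation residual,
\[
\boldsymbol{\sigma}^2(\Bx,\By) := \mathbb{E}\big[(Z(\Bx)-\mathcal{I}_k^{\rm s}Z(\Bx))(Z(\By)-\mathcal{I}_k^{\rm s}Z(\By))\big].
\]
Writing $\mathcal{I}_k^{\rm s}Z(\Bx) = k(\Bx,\BX_k^{\rm SP})[k(\BX_k^{\rm SP},\BX_k^{\rm SP})]^{-1}Z(\BX_k^{\rm SP})$ and using $\Cov(Z(\Bx),Z(\By))=k(\Bx,\By)$ gives, exactly as in the proof of Theorem \ref{thm:proberror}, the tensor-product decomposition
\[
\boldsymbol{\sigma}^2(\Bx,\By) = \{\mathrm{I}-\mathcal{I}_k^{\rm s}|_{\Bx}\}\{\mathrm{I}-\mathcal{I}_k^{\rm s}|_{\By}\}\,k(\Bx,\By),
\]
where $\mathcal{I}_k^{\rm s}|_{\Bx}$ and $\mathcal{I}_k^{\rm s}|_{\By}$ denote the interpolation operator acting in the first and second argument respectively.

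Second, since $k\in\CalH^1_{mix}(\CalX\times\CalX)$, for every fixed $\By$ the slice $k(\cdot,\By)$ lies in $\CalH^1_{mix}(\CalX)$, and similarly for fixed $\Bx$. Applying Theorem \ref{thm:lp_noBoundary} to the inner projection in $\By$ gives
\[
\|k(\Bx,\cdot)-\mathcal{I}_k^{\rm s}|_{\By}k(\Bx,\cdot)\|_{L^\infty}=\CalO\!\left(n^{-1/2}[\log n]^{\frac{5}{2}(d-1)}\right),
\]
uniformly in $\Bx$ (here I use that the $\CalH^1_{mix}$ norm of $k(\Bx,\cdot)$ is uniformly bounded in $\Bx$ by compactness of $\CalX$ together with the kernel-space assumption). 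Applying Theorem \ref{thm:lp_noBoundary} a second time to the outer projection in $\Bx$ yields the pointwise bound
\[
\boldsymbol{\sigma}(\Bx,\By)=\CalO\!\left(n^{-1/2}[\log n]^{\frac{5}{2}(d-1)}\right), \qquad \Bx,\By\in\CalX.
\]

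Third, I would feed this uniform bound on the natural distance into Theorem 1 of \cite{Wang18}, which upgrades such a control into a supremum-in-$\Bx$ bound on the Gaussian residual process in both the $L^p(\mathbb{P})$ and in-probability senses, at the price of an extra $\sqrt{\log n}$ factor (this is exactly the step that turned $n^{-1}[\log n]^{2(d-1)}$ into $n^{-1}[\log n]^{2d-3/2}$ in Theorem \ref{thm:proberror}). Using $\tfrac{5}{2}(d-1)+\tfrac{1}{2}=\tfrac{5}{2}d-2$, both claimed rates follow.

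The main obstacle, as in Theorem \ref{thm:proberror}, is not any new analytical ingredient but rather the uniform-in-$\Bx$ applicability of the deterministic rate to the kernel slices; once one checks that $\sup_{\Bx\in\CalX}\|k(\Bx,\cdot)\|_{\CalH^1_{mix}}<\infty$ (which the assumption $k\in\CalH^1_{mix}(\CalX\times\CalX)$ together with continuity in the first argument delivers), the remainder of the argument is a direct transcription of the proof of Theorem \ref{thm:proberror} with Theorem \ref{thm:lp_noBoundary} replacing Theorem \ref{thm:L1error} as the deterministic input.
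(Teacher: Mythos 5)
Your proposal follows exactly the same route as the paper's proof: decompose the natural distance as $\{\mathrm{I}-\mathcal{I}_k^{\rm s}|_{\Bx}\}\{\mathrm{I}-\mathcal{I}_k^{\rm s}|_{\By}\}k(\Bx,\By)$, bound it via the deterministic no-boundary rate of Theorem \ref{thm:lp_noBoundary} in place of Theorem \ref{thm:L1error}, and feed the result into Theorem 1 of \cite{Wang18}, with the exponent arithmetic $\tfrac{5}{2}(d-1)+\tfrac{1}{2}=\tfrac{5}{2}d-2$ checking out. Your added remark on the uniform boundedness of $\|k(\Bx,\cdot)\|_{\CalH^1_{mix}}$ over $\Bx$ is a detail the paper leaves implicit, but it does not change the argument.
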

\begin{proof}
We can follow the proof for Theorem \ref{thm:proberror}, with the only difference being that for any kernel $k(\Bx,\By)\in\CalH^1_{mix}(\Real^d\times\Real^d)$ without boundary information, the uniform bound of the induced natural distance becomes:
$$\boldsymbol{\sigma}(\Bx,\By)=\CalO(n^{-\frac{1}{2}}[\log n]^{\frac{5}{2}(d-1)}).$$
The claim can then be shown by performing the same substitution as in the proof in Theorem \ref{thm:proberror}.
\end{proof}

\end{document}